\newcommand{\tw}{{\mathbf{tw}}}
\newcommand{\cc}{{\mathbf{cc}}}
\newcommand{\mc}{\mathcal}
\newcommand{\mMSO}{\mbox{$p$-min-MSO}}
\newcommand{\cal}{\mathcal}
\renewcommand{\deg}{\text{deg}}
\def\debut{\begin{itemize}\item[{\bf [[}]\small}
\def\term{\hfill {\bf]]} \end{itemize} }
\newtheorem{claimN}{Claim}
\theoremstyle{plain}
\newtheorem{theorem}{Theorem}[section]
\newtheorem{lemma}[theorem]{Lemma}
\newtheorem{corollary}[theorem]{Corollary}
\newtheorem{proposition}[theorem]{Proposition}
\theoremstyle{definition}
\title{Hitting and Harvesting Pumpkins}
\thanks{A preliminary conference version of this work appeared in the \emph{Proceedings of the 19th Annual European
Symposium on Algorithms (ESA), volume 6942 of LNCS, pages 394-407,
Saarbr\"{u}cken, Germany, September 2011}. This work was supported in part by
AGAPE (ANR-09-BLAN-0159) and GRATOS (ANR-09-JCJC-0041-01) French projects, by a
project funded by DAE (India), and by the Actions de Recherche Concert\'ees
(ARC) fund of the Communaut\'e fran\c{c}aise de Belgique (Belgium). Gwena\"el
Joret is a Postdoctoral Researcher of the Fonds National de la Recherche
Scientifique (F.R.S.--FNRS)}
\author[G. Joret, C. Paul, I. Sau, S. Saurabh, and S. Thomass\'e]{Gwena\"{e}l Joret, Christophe Paul, Ignasi Sau,\\Saket Saurabh, and St\'ephan Thomass\'e}
\address{\newline D\'epartement d'Informatique
\newline Universit\'e Libre de Bruxelles
\newline Brussels, Belgium}
\email{gjoret@ulb.ac.be}
\address{\newline AlGCo project-team
\newline CNRS, LIRMM
\newline Montpellier, France}
\email{\{paul,sau\}@lirmm.fr}
\address{\newline Laboratoire LIP
\newline Universit\'e de Lyon, CNRS, ENS Lyon, INRIA, UCBL
\newline Lyon, France}
\email{stephan.thomasse@ens-lyon.fr}
\address{\newline The Institute of Mathematical Sciences
\newline Chennai, India}
\email{saket@imsc.res.in}
\begin{document}

\sloppy \maketitle

\begin{abstract}
The \emph{$c$-pumpkin} is the graph with two vertices linked by $c \geq 1$
parallel edges. A \emph{$c$-pumpkin-model} in a graph $G$ is a pair $\{A, B\}$
of disjoint subsets of vertices of $G$, each inducing a connected subgraph of
$G$, such that there are at least $c$ edges in $G$ between $A$ and $B$. We
focus on hitting and packing $c$-pumpkin-models in a given graph in the realm
of approximation algorithms and parameterized algorithms. We give an FPT
algorithm running in time $2^{\mathcal{O}(k)} n^{\mathcal{O}(1)}$ deciding, for
any fixed $c \geq 1$, whether all $c$-pumpkin-models can be hit by at most $k$
vertices. This generalizes known single-exponential FPT algorithms for
\textsc{Vertex Cover} and \textsc{Feedback Vertex Set}, which correspond to the
cases $c=1,2$ respectively. Finally, we present an $\mathcal{O}(\log
n)$-approximation algorithm for both the problems of hitting all
$c$-pumpkin-models with a smallest number of vertices, and packing a maximum
number of vertex-disjoint $c$-pumpkin-models.

\vspace{0.25cm} \textbf{Keywords:} Hitting and packing; parameterized
complexity; approximation algorithm; single-exponential algorithm; iterative
compression; graph minors.
\end{abstract}


\vspace{.3cm}


\section{Introduction}
\label{sec:intro}

The \emph{$c$-pumpkin} is the graph with two vertices linked by $c \geq 1$
parallel edges. A \emph{$c$-pumpkin-model} in a graph $G$ is a pair $\{A, B\}$
of disjoint subsets of vertices of $G$, each inducing a connected subgraph of
$G$, such that there are at least $c$ edges in $G$ between $A$ and $B$. In this
article we study the problems of hitting all $c$-pumpkin-models of a given
graph $G$ with as few vertices as possible, and packing as many disjoint
$c$-pumpkin-models in $G$ as possible. As discussed below, these problems
generalize several well-studied problems in algorithmic graph theory. We focus
on FPT algorithms for the parameterized version of the hitting problem, as well
as on polynomial-time approximation algorithms for the optimization version of
both the packing and hitting problems.

\subsection*{FPT algorithms}
From the parameterized complexity perspective, we study the following problem
for every fixed integer $c \geq 1$.\vspace{.2cm}

\begin{center}
\begin{boxedminipage}{.8\textwidth}

$p$-$c$-\textsc{Pumpkin}-\textsc{Hitting} ($p$-$c$-\textsc{Hit})

\begin{tabular}{ r l }
\textit{~~~~Instance:} & A graph $G$  and a non-negative integer $k$. \\
\textit{Parameter:} & $k$.\\
\textit{Question:} & Does there exist $S \subseteq V(G)$, $|S| \leq k$, such that\\
~~~~~~~~~~~~~~~~~~ & $G \setminus S$ does not contain the $c$-pumpkin as a
minor?
 \\
\end{tabular}
\end{boxedminipage}

\end{center}\vspace{.2cm}

Several special cases of this problem are well studied in the parameterized
complexity.
When $c=1$, the $p$-$c$-\textsc{Hit} problem is the {\sc $p$-Vertex Cover}
problem~\cite{BFR98,CKX10}. For $c=2$, it is the  {\sc $p$-Feedback Vertex Set}
problem~\cite{GGH+06,DFL+07,CaoCL10}. When $c=3$, this corresponds to the
recently introduced {\sc $p$-Diamond Hitting Set} problem~\cite{FJP10}.

The $p$-$c$-\textsc{Hit} problem can also be seen as a particular case of the
following problem, recently introduced by Fomin \emph{et al}.~\cite{FLM+11} and
studied from the kernelization perspective: Let $\cal F$ be a finite set of
graphs. In the
$p$-$\mathcal{F}$-\textsc{Hit} problem, 
we are given an $n$-vertex graph $G$ and an integer $k$ as input, and asked
whether at most $k$ vertices can be deleted from $G$ such that the resulting
graph does not contain any graph from  ${\cal F}$ as a minor.
Among other results, it is proved in~\cite{FLM+11} that if $\mathcal{F}$
contains a $c$-pumpkin for some $c \geq 1$, then $p$-$\mathcal{F}$-\textsc{Hit}
admits a kernel of size $\mathcal{O}(k^2 \log^{3/2}k)$. As discussed in
Section~\ref{sec:algorithm}, this kernel leads to a simple FPT algorithm for
$p$-$\mathcal{F}$-\textsc{Hit} in this case, and in particular for
$p$-$c$-\textsc{Hit}, with running time $2^{\mathcal{O}(k \log k)} \cdot
n^{\mathcal{O}(1)}$. A natural question is whether there exists an algorithm
for $p$-$c$-\textsc{Hit} with running time $2^{\mathcal{O}(k)} \cdot
n^{\mathcal{O}(1)}$ for every fixed $c \geq 1$. Such algorithms are called {\sl
single-exponential}.
For the \textsc{$p$-Vertex Cover} problem the existence of single-exponential
algorithms is well-known since almost the beginnings of the field of
Parameterized Complexity~\cite{BFR98}, the best current algorithm being by Chen
\emph{et al}.~\cite{CKX10}. On the other hand, the question about the existence
of single-exponential algorithms for \textsc{$p$-Feedback Vertex Set} was open
for a long time and was finally settled independently by Guo \emph{et
al}.~\cite{GGH+06} (using iterative compression) and by Dehne \emph{et
al}.~\cite{DFL+07}. The current champion {\sl deterministic} algorithm for \textsc{$p$-Feedback
Vertex Set} runs in time $\mathcal{O}(3.83^kk\cdot n^2)$~\cite{CaoCL10}, whereas the fastest {\sl randomized} one runs in time $\mathcal{O}(3^k) \cdot \text{poly}(n)$~\cite{CNP+11}.

We present in Section~\ref{sec:algorithm} a single-exponential algorithm for
$p$-$c$-\textsc{Hit} for every fixed $c \geq 1$, using a combination of a
kernelization-like technique and iterative compression. (In fact, we will solve the {\sl constructive} version of the problem.) Notice that this
generalizes the above results for \textsc{$p$-Vertex Cover} and
\textsc{$p$-Feedback Vertex Set}.  We remark that asymptotically these
algorithms are optimal, that is, it is known that unless ETH fails neither
\textsc{$p$-Vertex Cover} nor \textsc{$p$-Feedback Vertex Set} admit an
algorithm with running time $2^{o(k)}\cdot
n^{\mathcal{O}(1)}$~\cite{CCF+05,ImpagliazzoPZ01}. It is worth mentioning here
that a similar quantitative approach was taken by Lampis~\cite{Lampis10} for
graph problems expressible in MSOL parameterized by the sum of the formula size
and the size of a minimum vertex cover of the input graph.

\subsection*{Approximation algorithms}
For a fixed integer $c \geq 1$, we define the following two optimization
problems.\vspace{.2cm}
\begin{center}
\begin{boxedminipage}{.8\textwidth}

\textsc{Minimum} $c$-\textsc{Pumpkin}-\textsc{Hitting} (\textsc{Min}
$c$-\textsc{Hit})

\begin{tabular}{ r l }
\textit{~~~~Input:} & A graph $G$. \\
\textit{Output:} & A subset $S \subseteq V(G)$ such that $G \setminus S$\\
~~~~~~~~~~~~~~~~~~ &  does not contain the $c$-pumpkin as a minor.\\
\textit{Objective:} & Minimize $|S|$.\\
\end{tabular}
\end{boxedminipage}
\end{center}\vspace{.1cm}

\begin{center}
\begin{boxedminipage}{.8\textwidth}

\textsc{Maximum} $c$-\textsc{Pumpkin}-\textsc{Packing} (\textsc{Max}
$c$-\textsc{Pack})

\begin{tabular}{ r l }
\textit{~~~~Input:} & A graph $G$. \\
\textit{Output:} & A collection $\mathcal{M}$ of vertex-disjoint subgraphs of $G$,\\
~~~~~~~~~~~~~~~~~~ &  each containing the $c$-pumpkin as a minor.\\
\textit{Objective:} & Maximize $|\mathcal{M}|$.\\
\end{tabular}
\end{boxedminipage}
\end{center}\vspace{.2cm}

Let us now discuss how the above problems encompass several well-known
problems. For $c=1$, \textsc{Min} $1$-\textsc{Hit} is the \textsc{Minimum
Vertex Cover} problem, which can be easily 2-approximated by finding any
maximal matching, whereas \textsc{Max} $1$-\textsc{Pack} corresponds to finding
a \textsc{Maximum Matching}, which can be done in polynomial time. For $c=2$,
\textsc{Min} $2$-\textsc{Hit} is the \textsc{Minimum Feedback Vertex Set}
problem, which can be also 2-approximated~\cite{BBF99,BeGe96}, whereas
\textsc{Max} $2$-\textsc{Pack} corresponds to \textsc{Maximum Vertex-Disjoint
Cycle Packing}, which can be approximated to within a $\mathcal{O}(\log n)$
factor~\cite{KNS+07}. For $c=3$, \textsc{Min} $3$-\textsc{Hit} is the
\textsc{Diamond Hitting Set} problem studied by Fiorini \emph{et al.}
in~\cite{FJP10}, where a $9$-approximation algorithm is given.

We provide in Section~\ref{sec:combinatorial} an algorithm that approximates
both the \textsc{Min} $c$-\textsc{Hit} and the \textsc{Max} $c$-\textsc{Pack}
problems to within a $\mathcal{O}(\log n)$ factor for every fixed $c \geq 1$.
Note that this algorithm matches the best existing algorithms for \textsc{Max}
$c$-\textsc{Pack} for $c=2$~\cite{KNS+07}. For the \textsc{Min}
$c$-\textsc{Hit} problem, our result is only a slight improvement on the
$\mathcal{O}(\log^{3/2} n)$-approximation algorithm given in~\cite{FLM+11}.
However, for the \textsc{Max} $c$-\textsc{Pack} problem, there was no
approximation algorithm known before except for the $c\leq 2$ case. Also, let
us remark that, for $c \geq2$ and every fixed $\varepsilon > 0$, \textsc{Max}
$c$-\textsc{Pack} is quasi-NP-hard to approximate to within a
$\mathcal{O}(\log^{1/2 - \varepsilon} n)$ factor. For $c=2$ this was shown by
Friggstad and Salavatipour~\cite{FrSa11}, and their result can be extended to
the case $c>2$ in the following straightforward way. Given an instance $G$ of
\textsc{Max} $2$-\textsc{Pack}, we build an instance of \textsc{Max}
$c$-\textsc{Pack} by replacing each edge of $G$ with $c-1$ parallel edges. This
approximation preserving transformation shows that \textsc{Max}
$c$-\textsc{Pack} is quasi-NP-hard to approximate to within a
$\mathcal{O}(\log^{1/2 - \varepsilon} n)$ factor for any $c\geq 2$.



The main ingredient of our approximation algorithm is the following
combinatorial result: We show that every $n$-vertex graph $G$ either contains a
small $c$-pumpkin-model or has a structure that can be reduced in polynomial
time, giving a smaller equivalent instance for both the \textsc{Min}
$c$-\textsc{Hit} and the \textsc{Max} $c$-\textsc{Pack} problems. Here by a
``small'' $c$-pumpkin-model, we mean a model of size at most $f(c)\cdot \log n$
for some function $f$ independent of $n$. This result extends one of Fiorini
{\it et al.}~\cite{FJP10}, who dealt with the case $c=3$.




\section{Preliminaries}
\label{sec:prelim}


\subsection*{Graphs}
We use standard graph terminology, see for instance~\cite{Die05}. All graphs in
this article are finite and undirected, and may have parallel edges but no
loops. We will sometimes restrict our attention to simple graphs, that is,
graphs without parallel edges.

Given a graph $G$, we denote the vertex set of $G$ by $V(G)$ and the edge set
of $G$ by $E(G)$. We use the shorthand $|G|$ for the number of vertices in $G$.
For a subset $X\subseteq V(G)$, we use $G[X]$ to denote the subgraph of $G$
induced by $X$. For a subset $Y\subseteq E(G)$ we let $G[Y]$ be the graph with
$E(G[Y]):=Y$ and with $V(G[Y])$ being the set of vertices of $G$ incident with
some edge in $Y$. For a subset $X \subseteq V(G)$, we may use the notation $G
\setminus X$ to denote the graph $G[V(G)\setminus X]$.

The set of neighbors of a vertex $v$ of a graph $G$ is denoted by $N_{G}(v)$.
The \emph{degree} $\deg_{G}(v)$ of a vertex $v\in V(G)$ is defined as the
number of edges incident with $v$ (thus parallel edges are counted). We write
$\deg^{*}_{G}(v)$ for the number of neighbors of $v$, that is, $\deg^{*}_{G}(v)
:= |N_G(v)|$. Similarly, given a subgraph $H \subseteq G$ with $v \in V(H)$, we
can define in the natural way $N_{H}(v)$, $\deg_{H}(v)$, and $\deg^{*}_{H}(v)$,
that is, $N_{H}(v)=N_{G}(v)\cap V(H)$, $\deg_{H}(v)$ is the number of edges
incident with $v$ with both endpoints in $H$, and $\deg^{*}_{H}(v)=|N_{H}(v)|$.
In these notations, we may drop the subscript if the graph is clear from the
context. By the \emph{neighbors of a subgraph} $H \subseteq G$ we mean the set
of vertices in $V(G) \setminus V(H)$ that have at least one neighbor in $H$.
The minimum degree of a vertex in a graph $G$ is denoted $\delta(G)$, and the
maximum degree of a vertex in $G$ is denoted $\Delta(G)$. We use the notation
$\cc(G)$ to denote the number of connected components of $G$. Also, we let
$\mu(G)$ denote the maximum multiplicity of an edge in $G$. A graph is said to
be a {\em multipath} if its underlying simple graph (without parallel edges)
is isomorphic to a path.

\subsection*{Minors and models}
Given  a graph $G$ and an edge $e\in E(G)$, let $G\backslash e$ be the graph
obtained from $G$ by removing the edge $e$, and let $G\slash e$ be the graph
obtained from $G$ by contracting $e$ (we note that parallel edges resulting
from the contraction are kept but self loops are deleted). If $H$ can be obtained
from a subgraph of $G$ by a (possibly empty) sequence of edge contractions, we
say that $H$ is a \emph{minor} of $G$, and we denote it by $H \preceq_m G$. A
graph $G$ is {\em $H$-minor-free}, or simply {\em $H$-free}, if $G$ does not
contain $H$ as a minor. A \emph{model} of a graph $H$, or simply
\emph{$H$-model}, in a graph $G$ is a collection $\{S_{v} \subseteq V(G) \mid v
\in V(H)\}$ such that
\begin{itemize}
\item[(i)] $G[S_{v}]$ is connected for every $v\in V(H)$;\vspace{.4mm}
\item[(ii)] $S_{v}$ and $S_{w}$ are disjoint for every two distinct vertices $v, w$ of
$H$; and\vspace{.4mm}
\item[(iii)] there are at least as many edges between $S_{v}$ and $S_{w}$ in $G$
as between $v$ and $w$ in $H$, for every $vw \in E(H)$.
\end{itemize}
The {\em size} of the model is defined as $\sum_{v \in V(H)} |S_{v}|$. Clearly,
$H$ is a minor of $G$ if and only if there exists a model of $H$ in $G$. In
this paper, we will almost exclusively consider $H$-models with $H$ being
isomorphic to a $c$-pumpkin for some $c\geq 1$. Thus a $c$-pumpkin-model in a
graph $G$ is specified by an unordered pair $\{A, B\}$ of disjoint subsets of
vertices of $G$, each inducing a connected subgraph of $G$, such that there are
at least $c$ edges in $G$ between $A$ and $B$. A $c$-pumpkin-model $\{A, B\}$
of $G$ is said to be {\em minimal} if there is no $c$-pumpkin-model $\{A',
B'\}$ of $G$ with $A' \subseteq A$, $B' \subseteq B$, and $|A'| + |B'| < |A| +
|B|$.

A subset $X$ of vertices of a graph $G$ such that $G \setminus X$ has no
$c$-pumpkin-minor is called a {\em $c$-pumpkin-hitting set}, or simply {\em
$c$-hitting set}.
We denote by $\tau_{c}(G)$ the minimum size of a $c$-pumpkin-hitting set in
$G$. A collection $\mathcal{M}$ of vertex-disjoint subgraphs of a graph $G$,
each containing a $c$-pumpkin-model, is called a {\em $c$-pumpkin-packing}, or
simply {\em $c$-packing}. We denote by $\nu_{c}(G)$ the maximum size of a
$c$-pumpkin-packing in $G$. Obviously, for any graph $G$ it holds that
$\nu_{c}(G) \leq \tau_{c}(G)$, but the converse is not necessarily true.

The following lemma on models will be useful in our algorithms. The proof is
straightforward and hence is omitted.

\begin{lemma}
\label{lem:models_diameter} Suppose $G'$ is obtained from a graph $G$ by
contracting some vertex-disjoint subgraphs of $G$, each of diameter at most
$k$. Then, given an $H$-model in $G'$ of size $s$, one can compute in
polynomial time an $H$-model in $G$ of size at most $k\cdot \Delta(H)\cdot s$.
\end{lemma}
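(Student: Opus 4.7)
The plan is to lift the given $H$-model from $G'$ back to $G$ by replacing each contracted super-vertex with a small connected subgraph inside the corresponding contracted piece. Denote by $C_1,\ldots,C_m$ the contracted subgraphs (each of diameter at most $k$) and by $c_i$ the super-vertex of $G'$ obtained from $C_i$. Given the $H$-model $\{S_v' : v\in V(H)\}$ in $G'$ with $\sum_v |S_v'| = s$, I would first extract a small ``skeleton'' of edges witnessing the model: for every $v \in V(H)$, fix a spanning tree $T_v'$ of $G'[S_v']$, and for every edge $vw \in E(H)$ (with multiplicity), select one edge of $G'$ joining $S_v'$ to $S_w'$. Each skeleton edge corresponds to a unique edge of $G$ via the contraction, and whenever a skeleton edge is incident to a super-vertex $c_i$, the corresponding edge of $G$ has a prescribed ``terminal'' endpoint inside $C_i$.

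Next, for each $i$ and the unique $v$ (if any) with $c_i \in S_v'$, let $P_i^v \subseteq V(C_i)$ be the set of terminals determined by the skeleton edges incident to $c_i$. Since $C_i$ has diameter at most $k$, I can compute in polynomial time a connected subgraph $B_i^v \subseteq C_i$ containing $P_i^v$ with $|B_i^v| \leq 1 + k(|P_i^v|-1)$, by picking a root in $P_i^v$ and taking the union of shortest paths from it to the other terminals. Setting $S_v := (S_v' \setminus \{c_1,\ldots,c_m\}) \cup \bigcup_{i\,:\,c_i \in S_v'} B_i^v$ yields the claimed lift. Verifying that $\{S_v\}$ is an $H$-model in $G$ is routine: the $S_v$ are pairwise disjoint because the $S_v'$ are and each $c_i$ belongs to at most one $S_v'$; $G[S_v]$ is connected because the edges of $T_v'$ lift to edges of $G[S_v]$ whose endpoints in each relevant $B_i^v$ are available by construction; and every chosen crossing edge of the skeleton persists as an edge of $G$ from $S_v$ to $S_w$, preserving the multi-edge count required by the model.

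The main step is the size bound. Non-contracted parts of the $S_v'$ contribute at most $s$ in total to $\sum_v |S_v|$, and the $B_i^v$'s contribute at most $k \sum_{v,i} |P_i^v|$ by the Steiner estimate. The double sum $\sum_{v,i} |P_i^v|$ is bounded by twice the number of skeleton edges, which is at most $s + |E(H)|$ (spanning-tree edges plus one crossing edge per multi-edge of $H$). Using $|V(H)| \leq s$ and $|E(H)| \leq \Delta(H)|V(H)|/2 \leq \Delta(H) s / 2$, everything combines to give $\sum_v |S_v| = O(k \cdot \Delta(H) \cdot s)$, matching the stated bound up to an absolute constant. The conceptual content lies entirely in the Steiner-style lift inside each contracted piece (made cheap by the diameter bound); I expect the only mildly fiddly point to be tracking these constants carefully rather than any deeper obstruction.
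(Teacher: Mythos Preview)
The paper omits the proof of this lemma entirely, stating only that it is ``straightforward and hence is omitted.'' Your argument is precisely the natural one: lift the skeleton edges of the model back to $G$, and inside each contracted piece connect the resulting terminals by a small Steiner-like tree using the diameter bound. The verification of disjointness, connectedness, and the edge multiplicities is correct as you sketched it.

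The only point worth flagging is the one you already note yourself: your bookkeeping yields $\sum_v |S_v| \leq s(1 + 2k + k\Delta(H))$, i.e.\ the stated bound $k\cdot \Delta(H)\cdot s$ up to a small absolute constant rather than on the nose. Since the lemma is only invoked once in the paper (in the proof of Lemma~\ref{lem:smallmodelsreducedgraphs}, with $H$ a $c$-pumpkin and the result absorbed into the function $f(c)$), this constant is immaterial; the paper's authors were evidently not concerned with sharpness here either.
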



\subsection*{Parameterized algorithms}
 A \emph{parameterized problem} $\Pi$ is a subset of
$\Gamma^{*}\times \mathbb{N}$ for some finite alphabet $\Gamma$. An instance of
a parameterized problem consists of a pair $(x,k)$, where $k$ is called the
\emph{parameter}. A central notion in parameterized complexity is {\em fixed
parameter tractability (FPT)}, which means, for a given instance $(x,k)$,
solvability in time $f(k)\cdot p(|x|)$, where $f$ is some computable function of
$k$ and $p$ is a polynomial in the input size.

A \emph{kernelization algorithm} or, in short, a \emph{kernel} for a
parameterized problem $\Pi\subseteq \Gamma^{*}\times \mathbb{N}$ is an
algorithm that given $(x,k)\in \Gamma^{*}\times \mathbb{N} $ outputs in time
polynomial in $|x|+k$ a pair $(x',k')\in \Gamma^{*}\times \mathbb{N}$ such that
\begin{itemize}
\item[(i)] $(x,k)\in \Pi$
if and only if $(x',k')\in \Pi$; and\vspace{.4mm}
\item[(ii)] $\max\{|x'|, k' \}\leq g(k)$,
\end{itemize}
 where $g$ is some computable function. The function
$g$ is referred to as the \emph{size} of the kernel.
If~$g(k)=k^{\mathcal{O}(1)}$ or $g(k)=\mathcal{O}(k)$,
then we say that $\Pi$ admits a polynomial kernel and a linear kernel,
respectively.


\emph{Iterative compression} is a tool that has been used successfully in
finding fast FPT algorithms for a number of parameterized problems. The main
idea behind iterative compression is an algorithm which, given a solution of
size $k+1$ for a problem, either compresses it to a solution of size $k$ or
proves that there is no solution of size $k$. This technique was first
introduced by Reed \emph{et al}. to solve the \textsc{Odd Cycle Transversal}
problem~\cite{RSV04}, where one is interested in finding a set of at most $k$
vertices whose deletion makes the graph bipartite~\cite{RSV04}. Since then, it
has been extensively used in the literature, see for
instance~\cite{FGK+10,CLL+10,GGH+06}.

See~\cite{DowneyF99} for detailed introduction to Parameterized
Complexity.

\subsection*{Tree-width}
We briefly recall the definition of the tree-width of a graph. A \emph{tree
decomposition} of a graph $G$ is an ordered pair $(T, \{W_{x} \mid x \in
V(T)\})$ where $T$ is a tree and $\{W_{x} \mid x \in V(T)\}$ a family of
subsets of $V(G)$ (called {\em bags}) such that 
\begin{itemize}
\item[(i)] $\bigcup_{x \in V(T)} W_{x} = V(G)$;
\item[(ii)] for every edge $uv \in E(G)$, there exists $x \in V(T)$ with $u,v  \in
W_{x}$; and
\item[(iii)] for every vertex $u\in V(G)$, the set $\{x\in V(T) \mid u \in W_{x}\}$
induces a subtree of $T$.
\end{itemize}
The {\em width} of tree decomposition $(T,  \{W_{x} \mid x \in V(T)\})$ is
$\max \{ |W_{x}| - 1 \mid x \in V(T)\}$. The {\em tree-width} $\tw(G)$ of $G$
is the minimum width among all tree decompositions of $G$. We refer the reader
to Diestel's book~\cite{Die05} for an introduction to the theory of tree-width.
It is an easy exercise to check that the tree-width of a simple graph is an
upper bound on its minimum degree. This implies the following lemma.

\begin{lemma}
\label{lem:edges_tw} Every $n$-vertex simple graph with tree-width $k$ has at
most $k \cdot n$ edges.
\end{lemma}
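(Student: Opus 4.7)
The plan is a short induction on $n$, with the inequality $\delta(G) \leq \tw(G)$ (mentioned as an easy exercise just before the statement) as the only real ingredient.

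For the base case $n = 1$, a simple graph has no edges, so the bound $k \cdot n \geq 0$ holds trivially. For the inductive step, let $G$ be a simple $n$-vertex graph with $\tw(G) \leq k$. By the stated inequality, $\delta(G) \leq \tw(G) \leq k$, so there is a vertex $v \in V(G)$ with $\deg_G(v) \leq k$. Consider the graph $G \setminus \{v\}$: it is still simple, has $n-1$ vertices, and satisfies $\tw(G \setminus \{v\}) \leq \tw(G) \leq k$, since removing $v$ from every bag of an optimal tree decomposition of $G$ yields a valid tree decomposition of $G \setminus \{v\}$ of the same or smaller width. By the induction hypothesis,
$$|E(G \setminus \{v\})| \leq k(n-1),$$
and therefore
$$|E(G)| \;=\; |E(G \setminus \{v\})| + \deg_G(v) \;\leq\; k(n-1) + k \;=\; kn,$$
which closes the induction.

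The only point that requires a brief justification is the inequality $\delta(G) \leq \tw(G)$ for a simple graph $G$. I would prove it as follows: pick a tree decomposition $(T, \{W_x \mid x \in V(T)\})$ of $G$ of minimum width, root $T$ at an arbitrary node, and let $x$ be a deepest node of $T$ such that $W_x$ contains some vertex $v$ not appearing in the bag of the parent of $x$ (such a node exists unless $T$ has a single node, in which case the conclusion is immediate). By the subtree property (iii) and the choice of $x$, the vertex $v$ appears in no bag other than $W_x$; hence, by property (ii), every neighbor of $v$ in $G$ lies in $W_x$. Since $G$ is simple this gives $\deg_G(v) = |N_G(v)| \leq |W_x| - 1 \leq \tw(G)$, as required. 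There is no real obstacle here — the argument is entirely standard — so the write-up can be kept to a few lines.
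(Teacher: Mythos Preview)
Your proof is correct and is precisely the argument the paper has in mind: the paper gives no separate proof but simply records the lemma as an immediate consequence of the inequality $\delta(G)\le\tw(G)$, and your induction on $n$ is the standard way to derive the edge bound from that inequality. One small imprecision in your sketch of $\delta(G)\le\tw(G)$: the claim that $v$ appears in no bag other than $W_x$ need not be literally true (a child $y$ of $x$ satisfies $W_y\subseteq W_x$ and could still contain $v$), but this is harmless, since every bag containing $v$ is then a subset of $W_x$, and hence $N_G(v)\subseteq W_x$ follows anyway.
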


We will need the following result of Bodlaender \emph{et
al}.~\cite{BodlaenderLTT97}.

\begin{theorem}[Bodlaender \emph{et al}.~\cite{BodlaenderLTT97}]
\label{cor:ExcludeTheta} Every graph not containing a $c$-pumpkin as a minor
has tree-width at most $2c-1$.
\end{theorem}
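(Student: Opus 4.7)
The plan is to proceed by induction on $|V(G)|$. The base case $|V(G)| \leq 2c$ is immediate, since any graph on $m$ vertices trivially has tree-width at most $m-1$. For the inductive step I would seek a vertex $v$ with $\deg^{*}_{G}(v) \leq 2c-1$, apply the induction hypothesis to the subgraph $G \setminus v$ (which also has no $c$-pumpkin minor, being a subgraph of $G$) to obtain a tree decomposition of width $\leq 2c-1$, and extend it by a new bag $\{v\} \cup N_{G}(v)$ of size at most $2c$.

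The combinatorial heart of the argument is the lemma: every graph $G$ without a $c$-pumpkin minor contains a vertex of simple degree at most $2c-1$. I would prove its contrapositive: if every vertex of $G$ has simple degree at least $2c$, then $G$ admits a $c$-pumpkin minor. Since simple degree bounds multi-degree from below, $\delta(G) \geq 2c$, and a Mader-style argument produces a $c$-edge-connected subgraph $H$ of $G$. Picking any two vertices $u, w$ of $H$, Menger's theorem supplies $c$ edge-disjoint $u$-$w$ paths in $H$; setting $A = \{u\}$ and letting $B$ consist of $w$ together with the internal vertices of these paths yields a $c$-pumpkin-model, since $G[B]$ is connected via the tails of the paths and the first edge of each path contributes a distinct edge between $A$ and $B$.

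The subtler step is the bag-containment issue in the induction: a generic tree decomposition of $G \setminus v$ of width $\leq 2c-1$ may scatter $N_{G}(v)$ across many bags, so naively attaching $\{v\} \cup N_{G}(v)$ as a new bag can violate the subtree property for vertices of $N_{G}(v)$. To handle this I would strengthen the inductive statement to produce a tree decomposition in which $N_{G}(v)$ lies in a common bag, equivalently, a chordal completion of $G$ whose clique number is at most $2c$. This is naturally built via the elimination ordering suggested by the lemma: repeatedly remove a low-degree vertex $v$ and make $N_{G}(v)$ a clique before recursing. The most delicate point, and the main obstacle to carrying out this plan, is to verify that these newly added clique-edges among $N_{G}(v)$ cannot combine with existing structure in the residual graph to create a $c$-pumpkin minor absent from $G$; this should follow from the bound $\deg^{*}_{G}(v) \leq 2c-1$ together with a case analysis showing that any putative $c$-pumpkin-model using a new fill-edge can be lifted to a $c$-pumpkin-model in $G$ via the path through $v$ that the fill-edge was shortcutting.
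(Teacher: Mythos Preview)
The paper does not prove this theorem; it is quoted as a result of Bodlaender \emph{et al.}~\cite{BodlaenderLTT97} and used as a black box. So there is no ``paper's own proof'' to compare against, and your proposal must stand or fall on its own.

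Your degree lemma is fine (indeed, a DFS argument gives the sharper bound $\deg^{*}\leq c-1$ for some vertex: a deepest leaf $\ell$ has all its neighbors among its ancestors, and $\{\ell\}$ versus the ancestor path would yield a $c$-pumpkin if $\ell$ had $c$ of them). The real problem is the fill-in step. Your proposed resolution --- lift a $c$-pumpkin-model in $G':=(G\setminus v)+\textrm{clique}(N_G(v))$ back to $G$ by routing each fill-edge through $v$ --- does not work, and the obstacle is not a matter of case analysis.

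Take $c=2$ and $G=K_{1,3}$ with center $v$ and leaves $a,b,c$. Then $G$ is a tree (no $2$-pumpkin minor) and $\deg^{*}_G(v)=3=2c-1$, so $v$ is a legal choice. The fill-in graph $G'$ is the triangle on $\{a,b,c\}$, which has the $2$-pumpkin model $A'=\{a\}$, $B'=\{b,c\}$. Both cut edges $ab,ac$ \emph{and} the edge $bc$ making $B'$ connected are fill-edges. Any lift must place $v$ on one side: if $v\in A$ then $B=\{b,c\}$ is disconnected in $G$; if $v\in B$ then the only $A$--$B$ edge is $av$. So $G'$ has a $2$-pumpkin minor while $G$ does not. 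The same phenomenon occurs for larger $c$ (e.g.\ $c=3$, two triangles sharing a vertex, eliminate the shared vertex). The point is that fill-edges are simultaneously needed for the cut count \emph{and} for connectivity of the two sides, and the single vertex $v$ can repair at most one of these.

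Hence the inductive scheme as stated fails: eliminating an arbitrary vertex of simple degree $\leq 2c-1$ and completing its neighborhood does not keep you inside the $c$-pumpkin-minor-free class. A correct proof needs a different structural handle --- either a much more careful choice of which vertex to eliminate (with a genuine proof that the class is preserved, not the lifting heuristic), or a direct construction of a tree decomposition that does not rely on fill-in closure.
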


The following corollary is an immediate consequences of the above theorem.

\begin{corollary}
\label{cor:MultiExcludeTheta} Every $n$-vertex graph (may contain parallel edges) with no minor isomorphic
to a $c$-pumpkin has at most $(c-1)\cdot (2c-1) \cdot n$ edges.
\end{corollary}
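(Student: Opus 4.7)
The plan is to reduce to the simple graph case and then combine the two stated results (Lemma~\ref{lem:edges_tw} and Theorem~\ref{cor:ExcludeTheta}). Since the graphs in this paper are allowed to have parallel edges, I first observe that if $G$ has no $c$-pumpkin minor, then the multiplicity satisfies $\mu(G)\leq c-1$: indeed, any pair of vertices $u,v$ joined by at least $c$ parallel edges already yields a $c$-pumpkin model $\{\{u\},\{v\}\}$, contradicting the hypothesis. (For $c=1$ the statement is vacuously true because $G$ must then be edgeless and $(c-1)(2c-1)n = 0$.)

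Now let $G'$ denote the underlying simple graph of $G$, obtained by replacing each bundle of parallel edges by a single edge. Then $G'$ is a subgraph of $G$, so it also has no $c$-pumpkin as a minor. By Theorem~\ref{cor:ExcludeTheta}, $\tw(G')\leq 2c-1$, and Lemma~\ref{lem:edges_tw} then yields $|E(G')|\leq (2c-1)\cdot n$.

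Finally, since each edge of $G'$ corresponds to at most $\mu(G)\leq c-1$ parallel edges in $G$, we obtain
\[
|E(G)| \;\leq\; \mu(G)\cdot |E(G')| \;\leq\; (c-1)\cdot (2c-1)\cdot n,
\]
as desired. The only subtlety is the bookkeeping with parallel edges, and that is handled by the multiplicity observation above; no further work is needed.
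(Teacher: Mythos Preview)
Your proof is correct and is exactly the argument the paper intends: it states the corollary as ``an immediate consequence'' of Theorem~\ref{cor:ExcludeTheta} (together with Lemma~\ref{lem:edges_tw}) without spelling out the details, and you have supplied precisely those details --- bounding the multiplicity by $c-1$, passing to the underlying simple graph, and multiplying the two bounds.
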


Note that the
existence of a $c$-pumpkin-minor in a graph can be tested in polynomial time
by using the polynomial-time algorithm of Robertson and Seymour~\cite{RS95}. The following proposition states that $c$-pumpkin-minors can be found in {\sl linear time}.

\begin{proposition}
\label{prop:FindingTheta} For each fixed integer positive integer $c$, the existence of a $c$-pumpkin-minor in an $n$-vertex graph $G$ can be done in time $\mathcal{O}(n)$.
\end{proposition}
\begin{proof} We first check whether the treewidth of $G$ is at most $2c-1$, by using the linear-time algorithm of Bodlaender~\cite{Bod96}. If the treewidth of $G$ is strictly larger than $2c-1$, then by Theorem~\ref{cor:ExcludeTheta} we can conclude that $G$ contains a $c$-pumpkin-minor. Otherwise, the treewidth of $G$ is bounded, and we can test for the existence of a $c$-pumpkin-minor by using the linear-time algorithm of Courcelle~\cite{Courcelle92}.
\end{proof}

\section{A single-exponential FPT algorithm}
\label{sec:algorithm}

As mentioned in the introduction, it is proved in~\cite{FLM+11} that given an
instance $(G,k)$ of $p$-$\mathcal{F}$-\textsc{Hit} such that $\mathcal{F}$
consists of only a $c$-pumpkin for some $c \geq 1$, one can obtain in polynomial time
an equivalent instance with
 $\mathcal{O}(k^2 \log^{3/2}k)$ vertices.
This kernel leads to the following simple FPT algorithm for
$p$-$c$-\textsc{Hit}: First compute the kernel $K$ in polynomial time, and then
for every subset $S \subseteq V(K)$ of size $k$, test whether $K[V(K) \setminus
S]$ contains a $c$-pumpkin as a minor, using for instance the linear-time algorithm given by Proposition~\ref{prop:FindingTheta}. If for some $S$ we have that $K[V(K)
\setminus S]$ does not contain $c$-pumpkin as a minor, we answer \textsc{Yes};
otherwise the answer is \textsc{No}. The running time of this algorithm is
clearly bounded by $ {k^2 \log^{3/2}k \choose k}\cdot n^{\mathcal{O}(1)} =
2^{\mathcal{O}(k \log k)} \cdot n^{\mathcal{O}(1)}$.

In this section we give an algorithm for
$p$-$c$-\textsc{Pumpkin}-\textsc{Hitting} that runs in time $d^k \cdot
n^{\mathcal{O}(1)}$ for any fixed $c \geq 1$, where $d$ only depends on the
fixed constant $c$.
Towards this, we first introduce a variant of
$p$-$c$-\textsc{Pumpkin}-\textsc{Hitting}, namely
$p$-$c$-\textsc{Pumpkin}-\textsc{Disjoint Hitting}, formally defined as
follows. \vspace{.2cm}

\begin{center}
\begin{boxedminipage}{.9\textwidth}

$p$-$c$-\textsc{Pumpkin}-\textsc{Disjoint Hitting} ($p$-$c$-\textsc{Disjoint
Hit} for short)

\begin{tabular}{ r l }
\textit{~~~~Instance:} & A graph $G$ , a non-negative integer $k$, and a set $S
\subseteq V(G)$\\
& with $|S| \leq k+1$, such that $G \setminus S$ does not contain the\\
& $c$-pumpkin as a
minor. \\
\textit{Parameter:} & $k$.\\
\textit{Question:} & Does there exist $S' \subseteq V(G) \setminus S$, with $|S'| \leq k$, such that\\
~~~~~~~~~~~~~~~~~~ & $G \setminus S$ does not contain the $c$-pumpkin as a
minor?
 \\
\end{tabular}
\end{boxedminipage}

\end{center}\vspace{.2cm}

We would like to note that we will focus on solving the {\sl constructive} version of the $p$-$c$-\textsc{Disjoint
Hit} problem. That is, we will be interested in {\sl finding} such a set $S' \subseteq V(G) \setminus S$, as we will need it in our algorithm. Next we show a lemma that allows us to relate the two problems mentioned above.

\setcounter{footnote}{0}

\begin{lemma}
\label{lem:disjointtonondisjoint} If $p$-$c$-\textsc{Disjoint Hit} can be
solved in time $\eta(c)^k \cdot n^{\mathcal{O}(1)}$, then $p$-$c$-\textsc{Hit}
can be solved in time $(\eta(c)+1)^k \cdot n^{\mathcal{O}(1)}$.
\end{lemma}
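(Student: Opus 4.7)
The plan is to use \emph{iterative compression}, reducing every compression step to the disjoint variant. Order the vertices of $G$ as $v_1, \ldots, v_n$ and let $G_i := G[\{v_1, \ldots, v_i\}]$. I maintain, for each $i$, either a $c$-pumpkin-hitting set $S_i$ of $G_i$ with $|S_i| \leq k$ or a certificate that none exists. For $i \leq k$ the set $S_i := V(G_i)$ works trivially. For $i \geq k+1$, assuming $S_{i-1}$ has been computed, the set $W := S_{i-1} \cup \{v_i\}$ is a $c$-pumpkin-hitting set of $G_i$ of size at most $k+1$, and the crux is to either compress $W$ to a set of size $k$ or prove no such set exists.

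The compression enumerates all subsets $Y \subseteq W$ with $|Y| \leq k$, interpreting $Y$ as a guess for $W \cap S_i^\star$ in a hypothetical optimal hitting set $S_i^\star$ of size at most $k$. Setting $k' := k - |Y|$, finding $S' \subseteq V(G_i) \setminus W$ of size at most $k'$ with $G_i \setminus (Y \cup S')$ free of $c$-pumpkin-minors is exactly the disjoint-variant instance $(G_i \setminus Y,\, k',\, W \setminus Y)$. This instance meets the required syntactic conditions: $|W \setminus Y| = k+1-|Y| = k'+1$, and $(G_i \setminus Y) \setminus (W \setminus Y) = G_i \setminus W$ is $c$-pumpkin-free because $W$ hits all $c$-pumpkin-models of $G_i$. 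I invoke the hypothesized disjoint-variant algorithm on each such instance; if some $Y$ returns a valid $S'$, I set $S_i := Y \cup S'$, otherwise I answer \textsc{No}. Correctness follows because if a hitting set $S_i^\star$ of size at most $k$ exists, then the guess $Y = W \cap S_i^\star$ would leave $S_i^\star \setminus Y \subseteq V(G_i) \setminus W$ as a valid witness for the corresponding disjoint-variant call.

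The running time at a single step $i$ is bounded by
\[
\sum_{Y \subseteq W,\ |Y| \leq k}\eta(c)^{k - |Y|} \cdot n^{\mathcal{O}(1)} \;\leq\; \eta(c)^{k}\Bigl(1 + \tfrac{1}{\eta(c)}\Bigr)^{k+1}\cdot n^{\mathcal{O}(1)} \;=\; \mathcal{O}\bigl((\eta(c) + 1)^{k}\bigr) \cdot n^{\mathcal{O}(1)},
\]
and summing over the $n$ iterations preserves this bound, yielding the required $(\eta(c)+1)^k \cdot n^{\mathcal{O}(1)}$ total running time. I do not foresee any real obstacle: the argument is entirely standard iterative compression, and the main care is the bookkeeping of the $2^{|W|}$-way branching together with the routine binomial estimate above.
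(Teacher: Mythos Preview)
Your proposal is correct and follows essentially the same iterative-compression approach as the paper: order the vertices, maintain a hitting set of size at most $k$ on growing prefixes, and at each step enumerate all subsets $Y$ of the current size-$(k{+}1)$ hitting set $W$, handing the instance $(G_i\setminus Y,\ W\setminus Y,\ k-|Y|)$ to the disjoint-variant algorithm. Your running-time estimate via the binomial identity is the same computation the paper does, and your exposition is if anything cleaner than the paper's (which describes the first compression step in somewhat degenerate notation before iterating).
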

\begin{proof}
Let \(\mc{A}\) be an \textsc{FPT} algorithm which solves the
$p$-$c$-\textsc{Disjoint Hit} problem in time $\eta(c)^k \cdot
n^{\mathcal{O}(1)}$. Let $(G,k)$ be an input graph for the $p$-$c$-\textsc{Hit}
problem, and let $v_1,\ldots,v_n$ be an arbitrary ordering
 of $V(G)$. Let $V_i$ and $G_i$, respectively, denote the subset $\{v_1,\ldots, v_i\}$
of vertices and the induced subgraph $G[V_i]$. We iterate over $i=1,\ldots ,n$
in the following way. At the $i$-th iteration, suppose we have a $c$-hitting
set $S_i \subseteq V_i$ of \(G_{i}\) of size at most $k$.  At the next
iteration, we set $S_{i+1}:=S_i\cup \{v_{i+1}\}$ (note that \(S_{i+1}\) is a
$c$-hitting set for $G_{i+1}$ of size at most $k+1$). If $|S_{i+1}|\leq k$, we
can safely move on to the $(i+2)$-th iteration. If $|S_{i+1}|=k+1$, we look at
every subset $S \subseteq S_{i+1}$ and check whether there is a $c$-hitting set
$W$ of size at most $k$ such that $W\cap S_{i+1}=S_{i+1}\setminus S$. To do
this, we use the FPT algorithm $\mc{A}$ for $p$-$c$-\textsc{Disjoint Hit} on
the instance $(H,S)$, with $H=G_{i+1} \setminus (S_{i+1}\setminus S)$. If
$\mc{A}$ returns a $c$-hitting set $W$ of $H$ with $|W|< |S|$, then observe
that the vertex set $(S_{i+1}\setminus S) \cup W$ is a $c$-hitting set of $G$
of size strictly smaller than $S_{i+1}$. If there is a $c$-hitting set of
$G_{i+1}$ of size strictly smaller than $S_{i+1}$, then for some $S\subseteq
S_{i+1}$, there is a small $c$-hitting set in $G_{i+1} \setminus
(S_{i+1}\setminus S)$ disjoint from $S$, and $\mc{A}$ correctly returns a
solution. If no such small $c$-hitting set exists, the algorithm returns
\textsc{No}. Let us now argue about the running time of this algorithm. The
time required to execute \(\mc{A}\) for every subset $S$ at the $i$-th
iteration is $\sum_{i=0}^{k+1}{k+1 \choose i} \cdot \eta(c)^i \cdot
n^{\mathcal{O}(1)} = (\eta(c)+1)^{k+1} \cdot n^{\mathcal{O}(1)}$. That is, we
have an algorithm for $p$-$c$-\textsc{Hit} running in time $(\eta(c)+1)^k \cdot
n^{\mathcal{O}(1)}$, as we wanted to prove.\end{proof}

Lemma~\ref{lem:disjointtonondisjoint} allows us to focus on
$p$-$c$-\textsc{Disjoint Hit}. In what follows we give an algorithm for
$p$-$c$-\textsc{Disjoint Hit} that runs in single-exponential time. In fact, we
obtain a linear kernel for $p$-$c$-\textsc{Disjoint Hit}, which clearly yields
a single-exponential algorithm.

\subsection*{Overview of the algorithm}
The algorithm for $p$-$c$-\textsc{Disjoint Hit} is based on a combination of
polynomial-time preprocessing and a protrusion-based reduction rule. Let
$(G,S,k)$ be the given instance of $p$-$c$-\textsc{Disjoint Hit} and let
$V:=V(G)$. Our main objective is to show that, after applying some simple
polynomial-time reduction rules, $\{v \in V \setminus S : N_G(v)\cap S \neq
\emptyset\}$ has cardinality $\mathcal{O}(k)$; the proof of this fact,
specially Lemma~\ref{lem:1nbr},  is the most technical part of this section.
Once we have the desired upper bound, we use a protrusion-based reduction rule
adapted from~\cite{FLM+11} to give a polynomial-time procedure that, given an
instance $(G,S,k)$ of $p$-$c$-\textsc{Disjoint Hit}, returns an equivalent
instance $(G',S,k')$ such that $G'$ has $\mathcal{O}(k)$ vertices. That is, we
obtain a linear vertex kernel for $p$-$c$-\textsc{Disjoint Hit}\footnote{This
was not the case in the conference version of this article, in which the
algorithm for $p$-$c$-\textsc{Disjoint Hit} was considerably more complicated,
involving in particular a branching procedure and a more extensive usage of the
protrusion-based reduction rule.}.
Notice that once we have a linear
vertex kernel of size $\alpha k$ for $p$-$c$-\textsc{Disjoint Hit}, we can
solve the problem in ${\alpha k \choose k } \cdot k^{\mathcal{O}(1)}$. We can
now proceed to the detailed description of the algorithm.

\subsection*{Pre-processing step} We start by defining two sets. Our first
objective is to upper bound the cardinality of these two sets by
$\mathcal{O}(k)$.

\begin{eqnarray*}
V_1& :=& \{v \in V \setminus S : |N_G(v)\cap S| = 1\}\\
V_{\geq 2} & := & \{v \in V \setminus S : |N_G(v)\cap S| \geq 2\}.
\end{eqnarray*}
We start with some simple polynomial-time reduction rules (depending on $c$)
that will be applied in the compression routine. We also prove, together with
the description of each rule, that they are valid for our problem.

\begin{itemize}
\item[{\bf R1}] Suppose that $C$ is a connected component of $G \setminus
S$ with no neighbor in $S$. Then delete $C$.

\noindent \emph{Proof of correctness}. The deletion of $C$ can be safely done,
as its vertices will never participate in a
$c$-pumpkin-model.$\hfill$$\square$\vspace{.4mm}

\item[{\bf R2}] Suppose that $C$ is a connected component of $G\setminus S$ with  exactly
one neighbor $v$ in $S$,  and such that $G[V(C) \cup \{v\}]$ is
$c$-pumpkin-free. Then delete $C$.

\noindent \emph{Proof of correctness}. In this case $C$ can be also safely
deleted, as its vertices will never participate in a minimal
$c$-pumpkin-model.$\hfill$$\square$ \vspace{.4mm}


\item[{\bf R3}]  Let $u \in S$, let $v \in V(G) \setminus S$, let $P$ be a (non-empty)
connected component of $G \setminus \{u,v\}$ entirely contained in $G \setminus
S$, and suppose that is $c$-pumpkin-free. Let $H_p$ be the graph obtained from $G[V(P) \cup
\{u,v\}]$ by adding $p$ parallel edges between $u$ and $v$, and let $p_c$ be
the smallest positive integer $p$ such that $H_p$ contains a $c$-pumpkin-minor (note that it is possible that $p_c=0$). Then
replace $P$ with $c-p_c$ parallel edges between $u$ and $v$. See
Figure~\ref{fig:R3} for some small examples for $c=4$.


%

\noindent \emph{Proof of correctness}. Note that in order to hit all $c$-pumpkins-models
intersecting $P$, there is no need to include any vertex of $P$ in the
solution, as any such vertex could be replaced with $v$, obtaining another
solution with equal or smaller size. We say that two $c$-pumpkins-models in $G$
are {\sl $P$-equivalent} if they coincide except, possibly, for vertices in
$P$. Let $G'$ be the graph obtained from $G$ by replacing $P$ with $c-p_c$
parallel edges between $u$ and $v$. By construction, $G$ and $G'$ contain
exactly the same $c$-pumpkins-models modulo the $P$-equivalence relation. As we
can assume that no vertex of $P$ is included in the solution, we conclude that
the reduction rule yields an equivalent instance. $\hfill$$\square$
\end{itemize}



\begin{figure}
\flushleft
\includegraphics[width=1.05\textwidth]{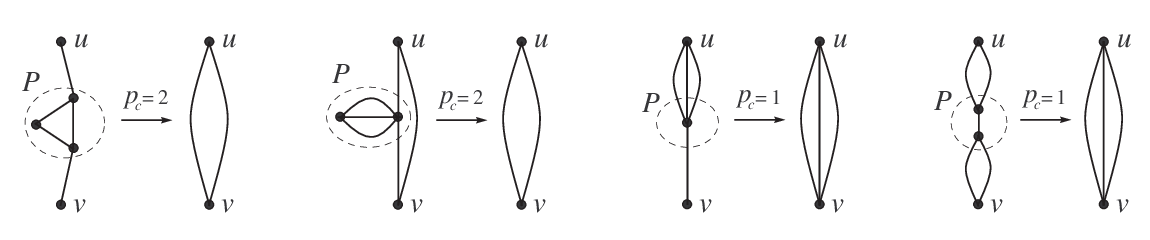}
\caption{Four examples of reduction rule~{\bf R3} for $c=4$.\label{fig:R3}}
\end{figure}


We would like to note that under the hypothesis of Rule~{\bf R3}, if in addition it holds that $G[V(P) \cup \{u,v\}]$ contains a $c$-pumpkin-minor, then we could safely delete vertex $v$ from $G$ and decrease the parameter $k$ by one. Nevertheless, it turns out that keeping $c$ parallel edges between $u$ and $v$ yields the analysis of the algorithm simpler.

We say that the instance $(G,S,k)$ is \emph{$(S,c)$-reduced} if rules~{\bf R1},
{\bf R2}, or~{\bf R3} cannot be applied anymore. Note that reduction rule~{\bf
R1} can easily be applied in polynomial time. For reduction rules~{\bf R2}
and~{\bf R3}, we have to test whether the considered graph contains a
$c$-pumpkin-minor, which can be done in linear time by Proposition~\ref{prop:FindingTheta}.


%
%

The following Lemmas~\ref{lem:2nbr} and~\ref{lem:1nbr} are key to the analysis
of our algorithm. We also need two intermediate technical results stated in
Lemmas~\ref{lem:atLeast2} and~\ref{cl:FewConnectedComponents}, which will be
used in the proof of Lemma~\ref{lem:1nbr}.


\begin{lemma}
\label{lem:2nbr} There is a function $f\colon \mathbb{N} \to \mathbb{N}$ such
that if $(G,S,k)$ is a \textsc{Yes}-instance to the $p$-$c$-\textsc{Disjoint
Hit} problem, then $|V_{\geq 2}| \leq f(c) \cdot k$.
\end{lemma}
\begin{proof}
In order to upper-bound $|V_{\geq 2}|$, we build from $G[S]$ the following
auxiliary graph $H$: we start with $H=(S,E(G[S]))$, and for each vertex $v \in
V_{\geq 2}$ with neighbors $u_1,\ldots,u_{\ell}$, $\ell \geq 2$, we add to $H$
an edge $e_v$ between two arbitrary neighbors $u_1,u_2$ of $v$. Note that $H
\preceq_m G$, and that for each vertex $v \in V_{\geq 2}$, $H \setminus e_v
\preceq_m G \setminus v$. We now argue that $|E(H)|$ is linearly bounded by
$k$, which implies the desired result as by construction $|V_{\geq 2}| \leq
|E(H)|$. If $G$ is a \textsc{Yes}-instance, there must be a set $S' \subseteq V
\setminus S$, $|S'| \leq k$, such that $G \setminus S'$ is $c$-pumpkin-free. By
construction of $H$, the removal of each vertex $v \in S' \cap V_{\geq 2}$
corresponds to the removal of the edge $e_v \in E(H)$. Let $H' = H \setminus
\{e_v : v \in S' \cap V_{\geq 2}\}$, and note that $H' \preceq_m G \setminus
S'$, so $H'$ is $c$-pumpkin-free. Therefore, by
Corollary~\ref{cor:MultiExcludeTheta} it follows that $|E(H')| \leq (c-1)\cdot
(2c-1)\cdot |V(H')| \leq (c-1)\cdot (2c-1)\cdot (k+1)$. As $|E(H)| \leq
|E(H')|+k$, we conclude that $|V_{\geq 2}| \leq |E(H)| \leq (c-1)\cdot
(2c-1)\cdot (k+1) +k$.\end{proof}

\begin{lemma}
\label{lem:atLeast2} There is a function $g\colon \mathbb{N} \to \mathbb{N}$
such that if $(G,S,k)$ is a \textsc{Yes}-instance to the
$p$-$c$-\textsc{Disjoint Hit} problem and $\mathcal{C}$ is a collection of
disjoint connected subgraphs of $G \setminus S$ such that each subgraph has at
least two distinct neighbors in $S$, then $|\mathcal{C}| \leq g(c) \cdot k$.
\end{lemma}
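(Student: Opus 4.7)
The plan is to mimic the auxiliary-graph argument from Lemma~\ref{lem:2nbr}, adapting it so that edges encode whole subgraphs rather than single vertices. Let $\mathcal{C}=\{C_1,\ldots,C_m\}$ be the given collection. I would build an auxiliary multigraph $H$ on the vertex set $S$ as follows: start from $H=(S,E(G[S]))$, and for each $C_i\in\mathcal{C}$ pick two distinct neighbors $u_i,w_i\in S$ of $C_i$ (which exist by hypothesis) and add an edge $e_{C_i}=u_iw_i$ to $H$.

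First I would verify that $H\preceq_m G$. Since the $C_i$'s are vertex-disjoint connected subgraphs of $G[V\setminus S]$, one obtains a model of $H$ in $G$ by taking branch set $\{v\}\cup\bigcup_{i:\,u_i=v}V(C_i)$ for every $v\in S$; each such branch set is connected (as $u_i$ has a neighbor in the connected subgraph $C_i$), and the edge $e_{C_i}$ is realized by an edge between $w_i$ and a vertex of $C_i$, while the edges of $G[S]$ are realized directly. Next, let $S'\subseteq V\setminus S$ be a solution of size at most $k$ witnessing the Yes-instance, and let $H'$ be obtained from $H$ by removing every edge $e_{C_i}$ such that $V(C_i)\cap S'\neq\emptyset$. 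The same argument as above, restricted to the $C_j$'s disjoint from $S'$, shows $H'\preceq_m G\setminus S'$, hence $H'$ is $c$-pumpkin-free.

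Applying Corollary~\ref{cor:MultiExcludeTheta} to $H'$ yields
\[
|E(H')|\;\leq\;(c-1)(2c-1)\,|V(H')|\;\leq\;(c-1)(2c-1)(k+1),
\]
since $|S|\leq k+1$ by the hypothesis of $p$-$c$-\textsc{Disjoint Cover}. The number of edges deleted from $H$ to obtain $H'$ is at most the number of $C_i$'s hit by $S'$, which is at most $|S'|\leq k$ because the $C_i$'s are pairwise vertex-disjoint. Therefore
\[
|\mathcal{C}|\;\leq\;|E(H)|\;\leq\;|E(H')|+k\;\leq\;(c-1)(2c-1)(k+1)+k,
\]
giving the desired bound with $g(c)=(c-1)(2c-1)+2$, for example.

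The only nontrivial step is the first one: checking that $H$ is genuinely a minor of $G$ despite the $C_i$'s being arbitrary connected subgraphs rather than single vertices. The vertex-disjointness of $\mathcal{C}$ is what makes the branch-set construction well defined, since each $C_i$ can be attached to exactly one endpoint of its chosen edge without conflict; the rest of the argument is a direct transcription of the proof of Lemma~\ref{lem:2nbr}.
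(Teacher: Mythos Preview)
Your proof is correct and follows essentially the same approach as the paper's: build the auxiliary multigraph $H$ on $S$ by adding one edge per subgraph in $\mathcal{C}$, delete the edges corresponding to subgraphs hit by the solution $S'$ to obtain $H'\preceq_m G\setminus S'$, and then bound $|E(H')|$ via Corollary~\ref{cor:MultiExcludeTheta}. Your explicit branch-set description for the minor relation is in fact more detailed than what the paper gives; the only quibble is that your suggested constant $g(c)=(c-1)(2c-1)+2$ does not quite absorb the additive term $(c-1)(2c-1)$ for small $k$, but since the lemma only asserts the existence of some $g(c)$ this is immaterial.
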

\begin{proof}The proof is very similar to the proof of Lemma~\ref{lem:2nbr}. In order
to upper-bound $|\mathcal{C}|$, we build from $G[S]$ the following auxiliary
graph $H$: we start with $H=(S,E(G[S]))$, and for each subgraph $C \in
\mathcal{C}$ with neighbors $u_1,\ldots,u_{\ell}$, $\ell \geq 2$, we add to $H$
an edge $e_C$ between two arbitrary neighbors $u_1,u_2$ of $C$. Note that $H
\preceq_m G$, and that for each subgraph $C \in \mathcal{C}$, $H \setminus e_C
\preceq_m G \setminus C$. We now argue that $|E(H)|$ is linearly bounded by
$k$, which implies the desired result as by construction $|\mathcal{C}| \leq
|E(H)|$. If $G$ is a \textsc{Yes}-instance, there must be a set $S' \subseteq V
\setminus S$, $|S'| \leq k$, such that $G \setminus S'$ is $c$-pumpkin-free. By
construction of $H$, the removal of a vertex $v$ in a subgraph $C \in
\mathcal{C}$ corresponds to the removal of at most one edge $e_C \in E(H)$ (as
maybe the edge $e_C$ can still be {\sl simulated} after the removal of $v$).
Let $H'$ be the subgraph obtained from $H$ after the removal of those edges,
and note that $H' \preceq_m G \setminus S'$, so $H'$ is $c$-pumpkin-free.
Therefore, by Corollary~\ref{cor:MultiExcludeTheta} it follows that $|E(H')|
\leq (c-1)\cdot (2c-1)\cdot |V(H')| \leq (c-1)\cdot (2c-1)\cdot (k+1)$. As
$|E(H)| \leq |E(H')|+k$, we conclude that $|\mathcal{C}| \leq |E(H)| \leq
(c-1)\cdot (2c-1)\cdot (k+1) +k$.
\end{proof}

\begin{lemma}
\label{cl:FewConnectedComponents} In an $(S,c)$-reduced \textsc{Yes}-instance
$(G,S,k)$ to the
$p$-$c$-\textsc{Disjoint Hit} problem, the number of connected components of $G\setminus S$ is
$\mathcal{O}(k)$.
\end{lemma}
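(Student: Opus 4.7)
My plan is to partition the connected components of $G[V\setminus S]$ according to how many distinct neighbors they have in $S$, and show that each class is either empty or has size $\mathcal{O}(k)$. As in the preceding Lemmas~\ref{lem:2nbr} and~\ref{lem:atLeast2}, I read the statement as referring to a \textsc{Yes}-instance (the branching algorithm invokes the claim in exactly this regime); let $S'\subseteq V\setminus S$ with $|S'|\le k$ be such that $G\setminus S'$ is $c$-pumpkin-free.

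Let $\mathcal{C}_0$, $\mathcal{C}_1$, $\mathcal{C}_{\ge 2}$ denote, respectively, the connected components of $G[V\setminus S]$ whose neighborhood in $S$ has cardinality $0$, exactly $1$, and at least $2$. First, $\mathcal{C}_0=\emptyset$, since any component in it would have been removed by reduction rule~\textbf{R1}. Second, $\mathcal{C}_{\ge 2}$ is a collection of vertex-disjoint connected subgraphs of $G[V\setminus S]$ each with at least two distinct neighbors in $S$, so Lemma~\ref{lem:atLeast2} yields $|\mathcal{C}_{\ge 2}|\le g(c)\cdot k=\mathcal{O}(k)$.

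The only class needing an argument is $\mathcal{C}_1$. For each $C\in\mathcal{C}_1$, let $v_C$ be the unique vertex of $S$ adjacent to $C$; since rule~\textbf{R3} did not delete $C$, the subgraph $G[V(C)\cup\{v_C\}]$ contains a $c$-pumpkin-model. This model must be hit by $S'$, and because $v_C\in S$ while $S'\subseteq V\setminus S$, we must have $S'\cap V(C)\neq\emptyset$. As the components in $\mathcal{C}_1$ are pairwise vertex-disjoint, these intersections are disjoint, so $|\mathcal{C}_1|\le |S'|\le k$. Summing the three bounds yields the desired $\mathcal{O}(k)$ total.

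The only mildly subtle step is the last one, where it is essential that rule~\textbf{R3} has been exhausted so that every component in $\mathcal{C}_1$ carries a bona fide $c$-pumpkin-model together with its unique $S$-neighbor; without this, the charging of distinct vertices of $S'$ to distinct components of $\mathcal{C}_1$ would fail. Everything else is a direct repackaging of reduction rules \textbf{R1}, \textbf{R3} and of Lemma~\ref{lem:atLeast2}.
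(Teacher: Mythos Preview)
Your proof is correct and follows essentially the same approach as the paper's: partition the components of $G[V\setminus S]$ by the number of distinct $S$-neighbors, eliminate the zero-neighbor class via \textbf{R1}, bound the one-neighbor class by $k$ using \textbf{R3} together with the \textsc{Yes}-instance assumption (each such component must be hit by the solution $S'\subseteq V\setminus S$), and bound the $\ge 2$-neighbor class by Lemma~\ref{lem:atLeast2}. Your explicit remark that the \textsc{Yes}-instance hypothesis is needed is well placed; the paper's statement omits it but its proof (and its invocation of Lemma~\ref{lem:atLeast2}) relies on it just as yours does.
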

\begin{proof} Note that by reduction rules~{\bf R1} and~{\bf R2}, we can assume that
each connected component $C$ of $G\setminus S$ has some neighbor in $S$, and
that if $C$ has exactly one neighbor $v$ in $S$, then $G[V(C) \cup \{v\}]$ has
a $c$-pumpkin. On the one hand, the number of components $C$ that have exactly
one neighbor $v$ in $S$ and such that $G[V(C) \cup \{v\}]$ contains the
$c$-pumpkin as a minor is at most $k$, as any solution needs to contain at
least one vertex from each such connected component. On the other hand, the
number of components that have at least two neighbors in $S$ is
$\mathcal{O}(k)$ by Lemma~\ref{lem:atLeast2}.\end{proof}

Now we prove our key structural lemma.

\begin{lemma}
\label{lem:1nbr} There is a function $h\colon \mathbb{N} \to \mathbb{N}$ such
that if $(G,S,k)$ is an $(S,c)$-reduced \textsc{Yes}-instance to the
$p$-$c$-\textsc{Disjoint Hit} problem, then  $|V_{1}| \leq h(c) \cdot k$.
\end{lemma}
\begin{proof}
For simplicity we call the vertices in $V_{1}$ {\sl white}. We proceed to find
a packing of disjoint connected subgraphs $\mathcal{P}=\{B_1,\ldots,B_{\ell}\}$
of $G \setminus S$ containing all white vertices except for $\mathcal{O}(k)$ of
them. This will help us in bounding $|V_{1}|$. We call the subgraphs in
$\mathcal{P}$ \emph{blocks}. For a graph $H \subseteq G\setminus S$, let $w(H)$
be the number of white vertices in $H$. The idea is to obtain, as far as
possible, blocks $B$ with $c \leq w(B) \leq c^3$; we call these blocks
\emph{suitable}, and the other blocks are called \emph{unsuitable}. If at some
point we cannot refine the packing anymore in order to obtain suitable blocks,
we will argue about its structural properties,
which will allow us to bound the number of white vertices.


We start with $\mathcal{P}$ containing all the connected components $C$ of $G
\setminus S$ such that $w(C)> c^3$, and we recursively try to refine the
current packing. By Lemma~\ref{cl:FewConnectedComponents}, we know that the
number of connected components is $\mathcal{O}(k)$, and hence the number of
white vertices that are not included in $\mathcal{P}$ is
$\mathcal{O}(c^3k)=\mathcal{O}(k)$.


More precisely, for each block $B$ with $w(B) > c^3$, we build a spanning tree
$T$ of $B$, and we orient each edge $e \in E(T)$ towards the components of $T
\setminus \{e\}$ containing at least $c$ white vertices. Note that, as $w(B) >
c^3$, each edge gets at least one orientation, and that edges may be oriented
in both directions. If some edge $e \in E(T)$ is oriented in both directions,
we replace in $\mathcal{P}$ block $B$ with the subgraphs induced by the
vertices in each of the two subtrees. We stop this recursive procedure whenever
we cannot find more suitable blocks using this orientation trick. Let
$\mathcal{P}$ be the current packing.

Now let $B$ be an unsuitable block in $\mathcal{P}$, that is, $w(B)>c^3$ and no
edge of its spanning tree $T$ is oriented in both directions. This implies that
there exists a vertex $v \in V(T)$ with all its incident edges pointing towards
it. We call such a vertex $v$ a \emph{sink}. Let $T_1,\ldots,T_p$ be the
connected components of $T \setminus \{v\}$. Note that as $v$ is a sink,
$w(T_i) < c$ for $1 \leq i \leq p$, using the fact that $w(B)
> c^3$  we conclude that $p \geq c^2$.
Now let $P_1,\ldots,P_{\ell}$ be the connected components of $G[V(T_1)\cup
\cdots \cup V(T_p)]= G[V(B) \setminus \{v\}]$, and note that $\ell \leq p$. We
call these subgraphs $P_i$ the \emph{pieces} of the unsuitable block $B$. For
each unsuitable block, we delete the pieces with no white vertex. This
completes the construction of $\mathcal{P}$. The next claim bounds the number
of white vertices in each piece of an unsuitable block in $\mathcal{P}$.

\begin{claimN}
\label{cl:pizza} Each of the pieces of an unsuitable block contains less than
$c^2$ white vertices.
\end{claimN}
\begin{proof}
Assume for contradiction that there exists a piece $P$ of an unsuitable block
with $w(P) \geq c^2$, and let $v$ be the sink of the unsuitable block obtained
from tree $T$. By construction, $V(P)$ is the union of the vertices in some of
the trees in $T \setminus \{v\}$; let without loss of generality these trees be
$T_1,\ldots,T_{q}$. As $w(P) \geq c^2$ and $w(T_i) < c$ for $1 \leq i \leq q$,
it follows that $q \geq c$. As $v$ has at least one neighbor in each of the
trees $T_i$, $1 \leq i \leq q$, and $P$ is a connected subgraph of $G$, we can
obtain a $c$-pumpkin-model $\{A,B\}$ in $G \setminus S$ by setting $A:=\{v\}$
and $B:=V(P)$, contradicting the fact that $G \setminus S$ is $c$-pumpkin-free.
See Figure~\ref{fig:packing}(a) for an example for $c=3$.\end{proof}

\begin{figure}
\flushleft
\includegraphics[width=1.00\textwidth]{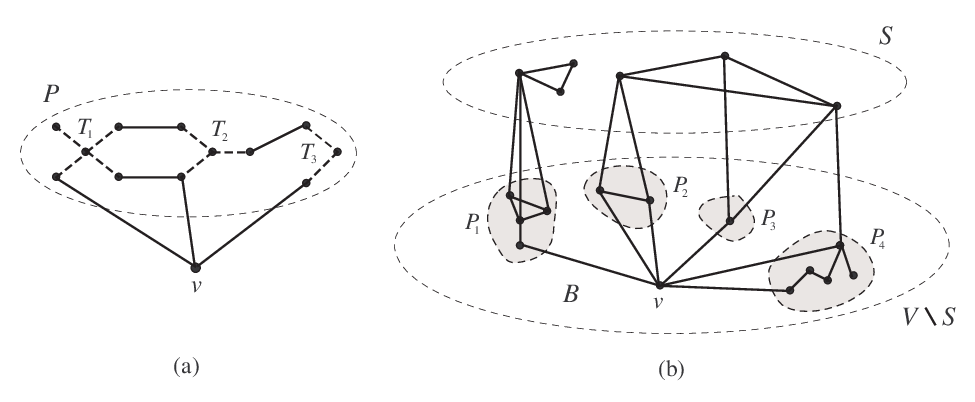}
\caption{(a) Example for $c=3$ of the contradiction in the proof of
Claim~\ref{cl:pizza}. The trees $T_1$, $T_2$, and $T_3$ are defined by the
dashed edges. (b) Example for $c=3$ of an unsuitable block $B$ in the packing
$\mathcal{P}$ built in the proof of Lemma~\ref{lem:1nbr}\label{fig:packing}.}
\end{figure}


Hence in the packing $\mathcal{P}$ we are left with a set of suitable blocks
with at most $c^3$ white vertices each, and a set of unsuitable blocks, each
one broken up into pieces linked by a sink in a star-like structure. By
Claim~\ref{cl:pizza}, each piece of the remaining unsuitable blocks contains at
most $c^2$ white vertices. See Figure~\ref{fig:packing}(b) for an example of an
unsuitable block $B$ for $c=3$.


Now we need two claims about the properties of the constructed packing.

\begin{claimN}
\label{cl:suitable} In the packing $\mathcal{P}$ constructed above, the number
of suitable or unsuitable blocks is $\mathcal{O}(k)$.
\end{claimN}
\begin{proof}
We first bound the number of suitable blocks, and for this we distinguish
between two types of suitable blocks.
\begin{itemize}
\item[$\bullet$] Type 1: blocks that
have exactly one neighbor in $S$. For each such block we need to include a
vertex of it in the $c$-hitting set (as each suitable block contains at least
$c$ white vertices), so their number is at most $k$.\vspace{.4mm}
\item[$\bullet$] Type 2: blocks that have at least two distinct neighbors in $S$.
The number of such blocks is $\mathcal{O}(k)$ by Lemma~\ref{lem:atLeast2}.
\end{itemize}
The proof for unsuitable blocks is similar. Namely, we distinguishing between the same
two types of blocks, and use the fact that each unsuitable block contains at
least $c^3\geq c$ white vertices. This concludes the proof.\end{proof}



\begin{claimN}
\label{cl:numberPieces} In an $(S,c)$-{\sl reduced} \textsc{Yes}-instance, the
total number of pieces in the packing $\mathcal{P}$ constructed above is
$\mathcal{O}(k)$.
\end{claimN}
\begin{proof}
We distinguish between three types of pieces.
\begin{itemize}
\item[$\bullet$] Type 1: pieces that have at least two distinct neighbors
in $S$ (see piece $P_3$ in Figure~\ref{fig:packing}(b) for $c=3$). The number
of pieces of this type is $\mathcal{O}(k)$ by
Lemma~\ref{lem:atLeast2}.\vspace{.4mm}
\item[$\bullet$] Type 2: pieces that are not of Type 1 and that have at least one neighbor
in some suitable block or in another unsuitable block (note that by
construction a piece cannot have any neighbor in other pieces of the same
unsuitable block). We construct an auxiliary graph $H$ as follows: we start
with the packing $\mathcal{P}$, and we add all the edges in $G \setminus S$
between vertices in different blocks of $\mathcal{P}$ (suitable or unsuitable).
Then we contract each block to a single vertex, and let $H$ be the resulting
graph. By Claim~\ref{cl:suitable}, $|V(H)|=\mathcal{O}(k)$. As $H \preceq_m G
\setminus S$ and $G \setminus S$ is $c$-pumpkin-free, by
Corollary~\ref{cor:MultiExcludeTheta} we have that $|E(H)|=\mathcal{O}(k)$.
Since each piece of Type~2 is incident with at least one edge of $H$ after
uncontracting the blocks, it follows that the number of pieces of Type~2 is at
most $2|E(H)|=\mathcal{O}(k)$.\vspace{.4mm}
\item[$\bullet$] Type 3: the remaining pieces. That is, these are pieces
$P$ that see exactly one vertex $u$ in $S$, and that are connected to the rest
of $G \setminus S$ only through the corresponding sink $v$. In other words,
such a piece $P$ is a connected component of $G \setminus \{u,v\}$. We
distinguish two subcases.
\begin{itemize}
\item[$\circ$] If $G[V(P) \cup \{u\}]$ contains a $c$-pumpkin-minor, then any
$c$-hitting set needs to contain at least one vertex in $P$ (see piece $P_1$ in
Figure~\ref{fig:packing}(b) for $c=3$). We conclude that the number of pieces
of this subtype is at most $k$.
\item[$\circ$] Otherwise, all the conditions of reduction rule~{\bf R3}
are fulfilled, and we conclude that such a piece cannot exist in an
$(S,c)$-{\sl reduced} instance.
\end{itemize}
\end{itemize}\vspace{-.4cm}
\end{proof}



To conclude, recall that the constructed packing $\mathcal{P}$ contains all but
$\mathcal{O}(k)$ white vertices, either in suitable blocks or in pieces of
unsuitable blocks. As by construction each suitable block has at most $c^3$
white vertices and by Claim~\ref{cl:suitable} the number of such blocks is
$\mathcal{O}(k)$, it follows that the number of white vertices contained in
suitable blocks is $\mathcal{O}(k)$. Similarly, by Claim~\ref{cl:pizza} each
piece contains at most $c^2$ white vertices, and the total number of pieces is
$\mathcal{O}(k)$ by Claim~\ref{cl:numberPieces}, so the number of white
vertices contained in pieces of unsuitable blocks is also
$\mathcal{O}(k)$.\end{proof}

\subsection*{Linear kernel}
We now proceed to describe a procedure, called \emph{protrusion rule}, that
bounds the size of our graph when $|V_1\cup V_{\geq 2}|=\mathcal{O}(k)$. We
first need some definitions.

%



Many graph optimization problems can be expressed as finding an optimal number
of vertices or edges satisfying a property expressible in Monadic Second Order
(MSO) logic. A parameterized graph problem $\Pi \subseteq \Sigma^* \times
\mathbb{N}$ is given with a graph $G$ and an integer $k$ as an input. When the
goal is to decide whether there exists a subset $W$ of at most $k$ vertices for
which an MSO-expressible property $P_{\Pi}(G,W)$ holds, we say that $\Pi$ is a
\mMSO ~graph problem. One can easily check that the $p$-$c$-\textsc{Hit}
problem is \mMSO. In the (parameterized) \emph{disjoint version} $\Pi^d$ of a
\mMSO ~problem $\Pi$, we are given a triple $(G,S,k)$, where $G$ is a graph,
$S$ a subset of $V(G)$ and $k$ the parameter, and we seek for a solution set
$W$ which is disjoint from $S$, and whose size is at most $k$.


Given $R\subseteq V(G)$, we define $\partial_G(R)$ as the set of vertices in
$R$ that have a neighbor in $V(G)\setminus R$. Thus the neighborhood of $R$ is
$N_G(R) = \partial_G(V(G) \setminus R)$. We say that a set $X \subseteq V(G)$
is an {\em $r$-protrusion} of $G$ if $\tw(G[X])\leq r$ and $|\partial_G(X)|
\leq r$.

An important concept when applying protrusion-based reduction rules is
\emph{strong monotonicity} of a problem, which we do not define here (see for
instance~\cite{BodlaenderFLPST09}). What we will need is the following fact,
which can be found in~\cite[proof of Lemma 13]{BodlaenderFLPST09}: if
$\mathcal{F}$ is a finite set of connected planar graphs, then the
$p$-$\mathcal{F}$-\textsc{Hit} problem is strongly monotone. As the $c$-pumpkin
is a connected planar graph for any $c \geq 1$, we immediately have the
following lemma.

\begin{lemma}
\label{lem:strongMon} The $p$-$c$-\textsc{Hit} problem is strongly monotone.
\end{lemma}

The following lemma is key to our protrusion-based reduction rule. Its proof
follows basically from the framework introduced in~\cite{BodlaenderFLPST09},
although some details need to be modified when dealing with the disjoint
version of a problem, as it is our case. A self-contained proof for the
specific case of disjoint problems can be found in~\cite[Lemma 12]{KPP11}. The
general statement deals with the disjoint version of a general (parameterized)
strongly monotone \mMSO ~problem. As the $p$-$c$-\textsc{Hit} problem is \mMSO,
and it is strongly monotone by Lemma~\ref{lem:strongMon}, we only state the
lemma for the specific case of our problem.

\begin{lemma}[Kim \emph{et al}.~\cite{KPP11}]
\label{safe:protrusion} Let $\Pi^d$ be the disjoint version of
$p$-$c$-\textsc{Hit}. There exists a computable function
$\gamma:\mathbb{N}\rightarrow \mathbb{N}$ and an algorithm that given:
\begin{itemize}
\item an instance $(G,S,k)$ of $\Pi^d$ such that $G \setminus S$ is $c$-pumpkin-minor-free, and
\item a $t$-protrusion $X$ of \(G\) such that $|X|>\gamma(2t+1)$ and
  $X\cap S=\emptyset$,
\end{itemize}
in time $\mathcal{O}(|X|)$ outputs an instance $(G',S,k')$ such that
$|V(G')|<|V(G)|$, $k'\leq k$,  $(G',S,k')\in\Pi^d$ if and only if
$(G,S,k)\in\Pi^d$, and $G' \setminus S$ is $c$-pumpkin-minor-free.
\end{lemma}

We are now ready the state the protrusion rule. It follows as a corollary of
Lemma~\ref{safe:protrusion} that the following reduction rule for
$p$-$c$-\textsc{Disjoint Hit} is safe.



\begin{itemize}
\item[{\bf P}] Let $(G,S,k)$ be an instance of $p$-$c$-\textsc{Disjoint Hit} and let
$\gamma:\mathbb{N}\rightarrow \mathbb{N}$ be the computable function given by
Lemma~\ref{safe:protrusion}. Let $X$ be a $4c$-protrusion of $G$ with
$|X|>\gamma(8c+1)$ and such that $X\cap S=\emptyset$. Then use the algorithm
given by Lemma~\ref{safe:protrusion} to compute in time $\mathcal{O}(|X|)$ an
equivalent instance $(G',S,k')$ such that $G[S]$ and $G'[S]$ are isomorphic,
$G' \setminus S$ is $c$-pumpkin-minor-free, $|V(G')|<|V(G)|$, and $k'\leq k$.
\end{itemize}


Before describing how to obtain the linear kernel for $p$-$c$-\textsc{Disjoint
Hit}, we need the following lemma, corresponding to~\cite[Lemma~$6$]{FLM+11}.

\begin{lemma}[Fomin \emph{et al}.~\cite{FLM+11}]
\label{lem:FindProtrusions} There is a linear-time algorithm that given an
$n$-vertex graph $G$ and a set $S \subseteq V(G)$ such that $\tw(G \setminus S)
\leq d$, outputs a $2(d + 1)$-protrusion in $G \setminus S$  of size at least
$\frac{n-|S|}{4|N(S)|+1}$, where $N(S)$ is the set of vertices in $V(G)
\setminus S$ with at least one neighbor in $S$. Here $d$ is some constant.
\end{lemma}

The proof of the next lemma is similar to the proof
of~\cite[Theorem~$1$]{FLM+11}.



\begin{lemma}
\label{lemma:linearkernel} If  $|V_1\cup V_{\geq 2}|=\mathcal{O}(k)$, then
$p$-$c$-\textsc{Disjoint Hit} has a kernel with $\mathcal{O}(k)$ vertices.
\end{lemma}
\begin{proof}
Let for this proof $V_{\star}=V_1\cup V_{\geq 2}$, and let $(G,S,k)$ be an
instance of $p$-$c$-\textsc{Disjoint Hit} such that
$|V_{\star}|=\mathcal{O}(k)$. As by Theorem~\ref{cor:ExcludeTheta} we have that
$\tw(G \setminus S)\leq 2c-1$, we can apply Lemma~\ref{lem:FindProtrusions} and
obtain in linear time a $2((2c-1)+1)=4c$-protrusion $Y$ of size at least
$\frac{|V(G)|-|S|}{4|V_{\star}|+1}$ in $G\setminus S$. Let $\gamma : \mathbb{N}
\rightarrow \mathbb{N}$  be the function defined in
Lemma~\ref{safe:protrusion}. If $\frac{|V(G)|-|S|}{4|V_{\star}|+1} >
\gamma(8c+1)$, then using protrusion rule {\bf P} we obtain in time
$\mathcal{O}(|Y|)$ an instance $(G',S,k')$ such that $G[S]$ and $G'[S]$ are
isomorphic, $G' \setminus S$ is $c$-pumpkin-minor-free, $|V(G')|<|V(G)|$,
$k'\leq k$, and such that $(G',S,k')$ is a \textsc{Yes}-instance of
$p$-$c$-\textsc{Disjoint Hit} if and only if $(G,S,k) $ is a
\textsc{Yes}-instance of $p$-$c$-\textsc{Disjoint Hit}. We continue applying
Lemma~\ref{lem:FindProtrusions} and protrusion rule {\bf P} to the newly
obtained instance as far as there is a $4c$-protrusion of size strictly greater
than $\gamma(8c+1)$. We would like to note that in the whole process we never
delete either the vertices of $S$ or $V_{\star}$.


Let $(G^*,S,k^*)$ be the reduced instance obtained after this procedure. It
follows that there is no $4c$-protrusion of size greater than $\gamma(8c+1)$ in
$G^*\setminus S$, so protrusion rule {\bf P} no longer applies. Note that
$k^*\leq k$. We claim that the number of vertices in this graph $G^*$  is
bounded by $\mathcal{O}(k)$. Indeed, since we cannot apply protrusion rule {\bf
P}, we have that $\frac{|V(G^*)|-|S|}{4|V_{\star}|+1}\leq \gamma(8c+1)$.
Therefore, we have that $ |V(G^*)|  \leq \gamma(8c+1)
\cdot(4|V_{\star}|+1)+|S|$. Since $|S|= \mathcal{O}(k)$ and by hypothesis
$|V_{\star}|=\mathcal{O}(k)$, we have that $ |V(G^*)|= \mathcal{O}(k)$. This
completes the proof.\end{proof}

Lemma~\ref{lemma:linearkernel} clearly implies that, if $|V_1\cup V_{\geq
2}|=\mathcal{O}(k)$, then $p$-$c$-\textsc{Disjoint Hit} can be solved in time
$2^{\mathcal{O}(k)} \cdot n^{\mathcal{O}(1)}$. Nevertheless, the above proof
only shows that the {\sl decision version} of $p$-$c$-\textsc{Disjoint Hit} can
be solved in single-exponential time, as we have applied reduction rules that
may modify the instance. But in the iterative compression routine (see proof of
Lemma~\ref{lem:disjointtonondisjoint}), we need to be able to obtain an {\sl
explicit solution} $S'\subseteq V(G)\setminus S$ of $p$-$c$-\textsc{Disjoint
Hit} in the original instance, with $|S'|=k$, if it exists.

We can get this explicit solution (if it exists) by repeatedly applying the
single-exponential algorithm for the decision version as follows. Suppose that
$G$ is a \textsc{Yes}-instance of $p$-$c$-\textsc{Disjoint Hit}, and let an
ordering of the vertices of $V(G)\setminus S$ be $u_1,u_2,\ldots,u_q$. Set
$i:=1$ and  $U:=\emptyset$. Repeat the following two steps for $i=1,\ldots,q$.

\begin{enumerate}
\item Check whether $G\setminus U$ is $c$-pumpkin-free in linear time, using Proposition~\ref{prop:FindingTheta}. If it is the
case, then return $U$ as the desired solution. Otherwise, go to the next step.
\item Using the
single-exponential algorithm for the decision version of
$p$-$c$-\textsc{Disjoint Hit}, check whether $G\setminus (\{u_i\}\cup U)$
contains a solution $S^* \subseteq V(G)\setminus (S \cup U \cup \{u_i\})$ of
size $k-(|U|+1)$. If it is the case, then set $U:=U\cup \{u_i\}$. Set $i:=i+1$.
\end{enumerate}
This concludes the description of the algorithm to obtain the desired explicit
solution $U$ in the compression step.

\subsection*{Final algorithm}
Finally we combine everything to obtain the following result.
\begin{theorem}
\label{prop:runningTime} For any fixed $c \geq 1$, the
$p$-$c$-\textsc{Pumpkin}-\textsc{Hitting} problem can be solved in time
$2^{\mathcal{O}(k)} \cdot n^{\mathcal{O}(1)}$.
\end{theorem}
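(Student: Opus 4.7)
}
The plan is to reduce, via Lemma~\ref{lem:disjointtonondisjoint}, to solving $p$-$c$-\textsc{Disjoint Cover} in single-exponential time, and then to design a recursive algorithm whose recursion tree has $2^{\mathcal{O}(k)}$ leaves, with polynomial work at each node. For an input $(G,S,k)$ of $p$-$c$-\textsc{Disjoint Cover}, I will work with the measure $\mu(G,S,k)=\cc(G[S])+k$ introduced in~\eqref{eq:measure}; since $|S|\le k+1$, initially $\mu\le 2k+1$.

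At each node of the recursion, I will first apply the reduction rules $\mathbf{R1}$--$\mathbf{R4}$ and the protrusion rule $\mathbf{P}$ exhaustively. This takes polynomial time, does not increase $\mu$, and preserves the structural condition that $G\setminus S$ is $c$-pumpkin-free. Once the instance is $(S,c)$-reduced, I examine $V_1$ and $V_{\geq 2}$. By Lemma~\ref{lem:2nbr}, $|V_{\geq 2}|\leq f(c)\cdot k$, and by Lemma~\ref{lem:1nbr} one of three things holds: either $|V_1|\le h(c)\cdot k$, or rule $\mathbf{P}$ still applies (in which case I apply it and repeat the reduction phase), or branching rule $\mathbf{B}$ applies. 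In the first case $|A|=|V_1\cup V_{\geq 2}|=\mathcal{O}(k)$, so Lemma~\ref{lemma:linearkernel} produces in polynomial time an equivalent instance $(G',S,k')$ with $|V(G')|\le\alpha k$ for some constant $\alpha=\alpha(c)$. I then solve $(G',S,k')$ by brute force, enumerating all subsets of $V(G')\setminus S$ of size at most $k$ and checking $c$-pumpkin-freeness using the Robertson--Seymour minor-testing algorithm; this costs $\binom{\alpha k}{k}\cdot n^{\mathcal{O}(1)}=2^{\mathcal{O}(k)}\cdot n^{\mathcal{O}(1)}$.

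If instead branching rule $\mathbf{B}$ applies, I branch on the guaranteed simple path $P$ into all $2^{|V(P)|}$ subbranches corresponding to subsets of $V(P)$ included in the solution. In each subbranch $\mu$ drops by at least $1$: whenever some vertex of $P$ is chosen, $k$ decreases accordingly, and when no vertex is chosen, the endpoints of $P$ fall into $S$ and link the two previously distinct components $C_1\ne C_2$ of $G[S]$, so $\cc(G[S])$ drops. The key quantitative point is that the paths produced by the algorithm have length bounded by a constant depending only on $c$: the inspection of Claim~\ref{cl:numberPieces} shows that, after protrusion-reducing the offending pieces $P_1\cup\{v\}$ and $P_2\cup\{v\}$ (each a $2c$-protrusion), the relevant components have $\mathcal{O}(1)$ vertices, and the branching path lives inside $G[C\cup C'\cup\{v\}]$. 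Hence each branching step has fanout $2^{\lambda(c)}$ for a constant $\lambda(c)$.

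Since every branching step strictly decreases $\mu$, the depth of the recursion tree is at most $\mu\le 2k+1$, and its fanout is at most $2^{\lambda(c)}$; the total number of leaves is therefore $2^{\lambda(c)\cdot(2k+1)}=2^{\mathcal{O}(k)}$. Combining the polynomial work per node with the $2^{\mathcal{O}(k)}\cdot n^{\mathcal{O}(1)}$ brute-force step at each leaf yields a $2^{\mathcal{O}(k)}\cdot n^{\mathcal{O}(1)}$ algorithm for $p$-$c$-\textsc{Disjoint Cover}, whence Lemma~\ref{lem:disjointtonondisjoint} gives the same asymptotic bound for $p$-$c$-\textsc{Cover}. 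The main obstacle is verifying carefully that the branching length $\lambda(c)$ is indeed independent of $n$ and $k$ and that the repeated interleaving of $\mathbf{R1}$--$\mathbf{R4}$, $\mathbf{P}$, and $\mathbf{B}$ stays within the measure-decrease framework; this is already handled structurally by the case analysis in the proofs of Lemmas~\ref{lem:2nbr} and~\ref{lem:1nbr}, so the remaining work is essentially bookkeeping on $\mu$.
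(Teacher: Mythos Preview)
Your proposal is correct and follows essentially the same approach as the paper: reduce to $p$-$c$-\textsc{Disjoint Cover} via Lemma~\ref{lem:disjointtonondisjoint}, use the measure $\mu=\cc(G[S])+k$, apply the trichotomy of Lemma~\ref{lem:1nbr} (branch with constant fanout, apply $\mathbf{P}$, or fall through to a linear kernel via Lemmas~\ref{lem:2nbr} and~\ref{lemma:linearkernel} and brute-force), and bound the recursion tree by $\alpha(c)^{2k+1}=2^{\mathcal{O}(k)}$. Your write-up is in fact more explicit than the paper's about why the branching path has length $\mathcal{O}_c(1)$ and about the bookkeeping on $\mu$, but the argument is the same.
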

\begin{proof}
To obtain the desired result, by Lemma~\ref{lem:disjointtonondisjoint} and the
procedure described after the proof of Lemma~\ref{lemma:linearkernel}, it is
sufficient to obtain a single-exponential algorithm for
$p$-$c$-\textsc{Disjoint Hit}. To this end, after applying reduction rules {\bf
R1}, {\bf R2}, and {\bf R3} in polynomial time, by Lemmas~\ref{lem:2nbr}
and~\ref{lem:1nbr} we have that $|V_1 \cup V_{\geq 2}|= \mathcal{O}(k)$. Thus,
using Lemma~\ref{lemma:linearkernel} we get, also in polynomial time, an
equivalent instance $(G^*,S,k^*)$ with $\mathcal{O}(k)$ vertices, and hence the
problem can be solved by enumerating all subsets of size at most $k^*$ of
$G^*\setminus S$. This completes the proof.\end{proof}

\subsection*{Running time analysis.} To conclude this section, we provide a running time analysis of the algorithm given by Theorem~\ref{prop:runningTime} above. We would like to note that we did not focus at all on optimizing the hidden constant in the term $2^{\mathcal{O}(k)}$, so we will just concentrate on the term $n^{\mathcal{O}(1)}$. From the proof of Lemma~\ref{lem:disjointtonondisjoint} it follows that if $p$-$c$-\textsc{Disjoint Hit} can be
solved in time $a^k \cdot n^b$ for two constants $a,b$, then $p$-$c$-\textsc{Pumpkin}-\textsc{Hitting}
can be solved in time $(a+1)^k \cdot n^{b+1}$. Let us now focus on $p$-$c$-\textsc{Disjoint Hit}. First note that reduction rules {\bf R1}, {\bf R2} and {\bf R3} can be applied in linear time. Indeed, the connected components of $G \setminus S$ can be listed by successively performing BFS in time $\mathcal{O}(|V(G \setminus S)| + |E(G \setminus S)|)$, which equals
$\mathcal{O}(|V(G \setminus S)|)$ as the graph $G \setminus S$ has bounded treewidth by Theorem~\ref{cor:ExcludeTheta}. By Proposition~\ref{prop:FindingTheta}, testing for the existence of a $c$-pumpkin-minor can also be performed in linear time. As for protrusion rule {\bf P}, it can also be performed in linear time by Lemmas~\ref{safe:protrusion} and~\ref{lem:FindProtrusions}. As each of these reduction rules is applied at most $\mathcal{O}(k \cdot n)$ times, and as once we have a linear kernel the problem can be solved exhaustively in time $2^{\mathcal{O}(k)}$, we conclude that $p$-$c$-\textsc{Disjoint Hit} can be
solved in time $2^{\mathcal{O}(k)} \cdot n^2$, and therefore the algorithm given by Theorem~\ref{prop:runningTime} solves the $p$-$c$-\textsc{Pumpkin}-\textsc{Hitting} problem in time $2^{\mathcal{O}(k)} \cdot n^3$. We feel that there is room for improvement in this running time, as it was not our main objective to optimize it.


\section{An approximation algorithm for hitting and packing pumpkins}
\label{sec:combinatorial}

In this section we show that every $n$-vertex graph $G$ either contains a small
$c$-pumpkin-model or has a structure that can be reduced, giving a smaller
equivalent instance for both the \textsc{Minimum}
$c$-\textsc{Pumpkin}-\textsc{Hitting} and the \textsc{Maximum}
$c$-\textsc{Pumpkin}-\textsc{Packing} problems. Here by a ``small''
$c$-pumpkin-model, we mean a model of size at most $f(c)\cdot \log n$ for some
function $f$ independent of $n$. We finally use this result to derive a
$\mathcal{O}(\log n)$-approximation algorithm for both problems.

This section is organized as follows. We first describe in
Section~\ref{sec:reduction_rules} our reduction rules and prove their validity
for both hitting and packing problems (see Lemma~\ref{lem:both}). The existence
of small $c$-pumpkin-models in $c$-reduced graphs is provided in
Section~\ref{sec:smallPumpkins} (see Lemma~\ref{lem:smallmodelsreducedgraphs});
its proof strongly relies on a graph structure that we call \emph{hedgehog},
which we study in Section~\ref{sec:hedgehogs}. We finally focus in
Section~\ref{sec:1stAlgo} on the algorithmic consequences of our results (see
Theorem~\ref{thm:logn_packing_covering}).

\subsection{Reduction rules}
\label{sec:reduction_rules}

We describe two reduction rules for hitting/packing $c$-pumpkin-models, which
given an input graph $G$ satisfying some specific conditions, produce a graph
$H$ with less vertices than $G$ and satisfying $\tau_{c}(G)= \tau_{c}(H)$ and
$\nu_{c}(G)= \nu_{c}(H)$. Moreover, these operations are defined in such a way
that, for both problems, an optimal (resp.\ approximate) solution for $G$ can
be retrieved in polynomial time from an optimal (resp.\ approximate) solution
for $H$.

Given a graph $G$ and two distinct vertices $u, v$ of $G$, we write $G +_{k}
uv$ for the graph obtained from $G$ by adding $k$ parallel edges linking $u$ to
$v$. A {\em $c$-outgrowth} of a graph $G$ is a triple $(K, u, v)$ such that

\begin{itemize}
\item[(i)] $u,v$ are two distinct vertices of $G$;\vspace{.4mm}
\item[(ii)] $K$ is a connected component of $G \setminus \{u, v\}$ with $|K|
\geq 1$; \vspace{.4mm}
\item[(iii)] $u$ and $v$ both have at least one neighbor in $K$ in
the graph $G$; and
\item[(iv)] the graph $\Gamma(K, u, v)$ obtained from $G[V(K) \cup \{u,v\}]$ by removing
all the edges between $u$ and $v$ has no $c$-pumpkin-minor.
\end{itemize}

Given a $c$-outgrowth $(K, u, v)$ we let $\gamma(K, u, v) := c - k$, where $k$
is the smallest integer such that $\Gamma(K, u, v) +_{k} uv$ has a $c$-pumpkin
minor. Observe that, when adding $k$ parallel edges between $u$ and $v$ to
$\Gamma(K, u, v)$, there are two distinct ``types'' of $c$-pumpkin-models $\{A,
B\}$ that could appear: Exchanging $A$ and $B$ if necessary, we either have $u
\in A$ and $v\in B$ (first type), or $u, v\in A$ (second type). (Possibly both
types of models coexist.) Note that we always have $\gamma(K, u, v) = c - 1$
when $\Gamma(K, u, v) +_{k} uv$ contains a $c$-pumpkin-model of the second
type. See Figure~\ref{fig:outgrowth} for an illustration.

\begin{figure}
\centering
\includegraphics[width=0.63\textwidth]{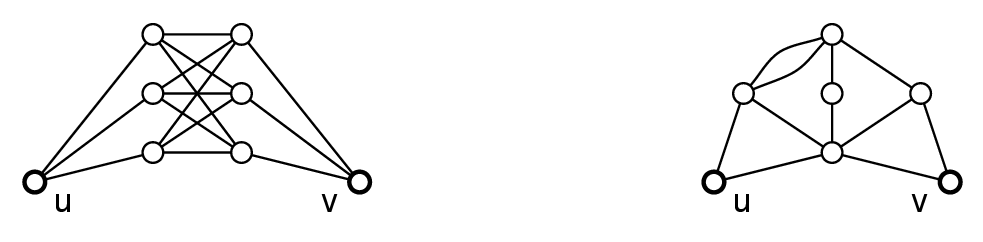}
\caption{The graph $\Gamma(K, u, v)$ of two $c$-outgrowths $(K, u, v)$ for
$c=11$ (left) and $c=7$ (right), respectively. We have $\gamma(K, u, v) = 9$
for the left one, and $\gamma(K, u, v) = 6$ for the right one.
\label{fig:outgrowth}} \vspace{-.15cm}
\end{figure}

Now that we are equipped with these definitions and notations, we may describe
the two reduction rules, which depend on the value of the positive integer $c$.

\begin{enumerate}
\item[{\bf Z1}]
Suppose $v$ is a vertex of $G$ such that no block
of $G$ containing $v$ has a $c$-pumpkin-minor. Then define $H$ as the graph
obtained from $G$ by removing $v$.\vspace{.4mm}
\item[{\bf Z2}]
Suppose that $(K, u, v)$ is a $c$-outgrowth of $G$.
Then define $H$ as the graph obtained from $G \setminus V(K)$ by adding
$\gamma(K, u, v)$ parallel edges between $u$ and $v$.
\end{enumerate}

\begin{figure}[t]
\centering
\includegraphics[width=0.65\textwidth]{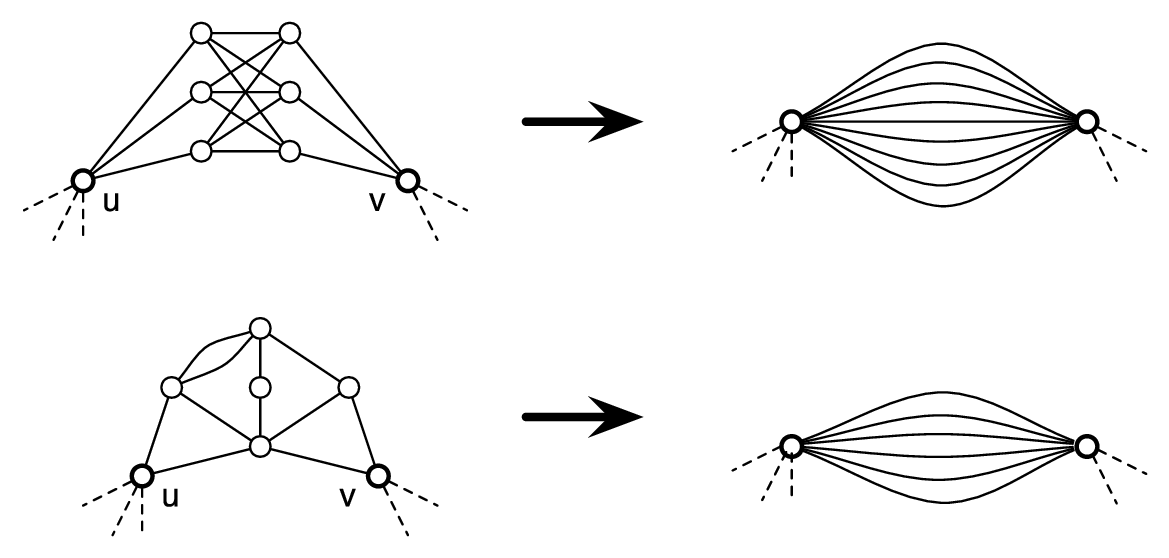}
\caption{Illustration of reduction rule {\bf Z2} on the two outgrowths from
Figure~\ref{fig:outgrowth}.\label{fig:Z2}}
\end{figure}

See Figure~\ref{fig:Z2} for an illustration of {\bf Z2}.

We note that testing for the existence of a $c$-pumpkin-minor in a graph $G$
can be done in polynomial time when $c$ is fixed by Proposition~\ref{prop:FindingTheta}. Moreover, if there is one, an
explicit $c$-pumpkin-model can be computed. This
follows from classical results of Robertson and Seymour~\cite{RS95}, and
will be used implicitly in the
subsequent proofs. Note that, in particular, testing whether a vertex $v$ is in
a block containing a $c$-pumpkin-minor can be done in polynomial time.
Similarly, testing whether a triple $(K, u, v)$ with $K$ a component of $G -
\{u,v\}$ is a $c$-outgrowth can be done in polynomial time, and the parameter
$\gamma(K, u, v)$ can be computed in polynomial time as well. Therefore, we can
check in polynomial time if {\bf Z1} or {\bf Z2} can be applied to a given
graph, and each of these two reduction rules can be realized in polynomial
time.

A graph $G$ is said to be {\em $c$-reduced} if neither {\bf Z1} nor {\bf Z2}
can be applied to $G$. The next lemma shows the validity of these reduction
rules.

\begin{lemma}
\label{lem:both} Let $c$ be a fixed positive integer. Suppose that $H$ results
from the application of {\bf Z1} or {\bf Z2} on a graph $G$. Then
\begin{itemize}
\item[(a)] $\tau_{c}(G) = \tau_{c}(H)$ and moreover,
given a $c$-hitting set $X'$ of $H$, one can compute in polynomial time a
$c$-hitting set $X$ of $G$ with $|X| \leq |X'|$.\vspace{.4mm}
\item[(b)] $\nu_{c}(G) = \nu_{c}(H)$ and moreover, given a $c$-packing
$\mathcal{M}'$ of $H$, one can compute in polynomial time a $c$-packing
$\mathcal{M}$ of $G$ with $|\mathcal{M}| = |\mathcal{M}'|$. \end{itemize}
\end{lemma}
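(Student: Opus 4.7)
\textbf{Plan for Lemma~\ref{lem:both}.}

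\emph{Rule Z1.} The plan is to argue that $v$ cannot appear in any $c$-pumpkin-model of $G$, so that the families of $c$-pumpkin-models of $G$ and of $G \setminus v$ coincide and both statements (a) and (b) follow immediately with the identity retrievals ($X := X'$, $\mathcal{M} := \mathcal{M}'$). The key observation is that the $c$-pumpkin is 2-connected for $c \geq 2$ (trivially so for $c = 1$, where a model is a single edge sitting in its bridge-block), so a standard minor-model minimality argument shows that any $c$-pumpkin-minor of a graph is contained in a single block: if a cut vertex separated a minor-model of minimum total size, then contracting the branch sets would yield a cut vertex of the $c$-pumpkin, contradicting its 2-connectedness. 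By the hypothesis of Z1, no block containing $v$ has a $c$-pumpkin-minor, so $v$ appears in no model.

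\emph{Rule Z2.} The plan is to establish a polynomial-time two-way correspondence between $c$-pumpkin-models of $G$ and of $H$, and to use it to build the desired transformations. The structural cornerstone, which I would prove first, is the following:

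\begin{quote}
\emph{Every $c$-pumpkin-model of $G$ meeting $V(C)$ (resp. of $H$ meeting the gadget) contains both $u$ and $v$ in its branch sets.}
\end{quote}

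In $G$ this follows because $V(C)$ is reachable from the rest of $G$ only through $\{u, v\}$ and the three subgraphs $\Gamma(C, u, v)$, $G[V(C) \cup \{u\}]$, $G[V(C) \cup \{v\}]$ are all $c$-pumpkin-free (the first by hypothesis of Z2, the last two as subgraphs of $\Gamma$). In $H$ (Case~2) the same statement requires ruling out $c$-pumpkin-models of the form $\{\{u\}, \{v_C\}\}$ and $\{\{v\}, \{v_C\}\}$, which reduces to the two inequalities $\gamma \leq c - 1$ and $\lambda - \gamma \leq c - 1$. The first is immediate from $\Gamma$ being $c$-pumpkin-free. The second is the key technical step, proved by a Menger-style argument: given a $\lambda$-pumpkin-model $\{A, B\}$ of $\Lambda$ with $u, v \in A$, if $A$ is already connected in $\Gamma$ then $\lambda \leq c - 1$ directly; otherwise $A$ splits in $\Gamma$ into a $u$-component $A_u$ and a $v$-component $A_v$ with $\lambda = f_u + f_v$ edges (where $f_u, f_v$ count the edges from $A_u, A_v$ to $B$), and each of $\{A_u, B \cup A_v\}$ and $\{A_v, B \cup A_u\}$ is a valid $\Gamma$-model with $u \in A$, $v \in B$ (resp. swapped), whence $\gamma \geq \max(f_u, f_v) \geq \lambda/2$ and $\lambda - \gamma \leq \lambda/2 \leq c - 1$ using $f_u, f_v \leq c - 1$.

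With the structural fact in hand, the transformation of (a) is: set $X := X' \setminus \{v_C\}$, and if $v_C \in X'$ (Case~2) add one vertex of $\{u, v\}$ to $X$. Models of $G$ avoiding $V(C)$ are hit by $X' \cap (V(G) \setminus V(C)) \subseteq X$; models of $G$ meeting $V(C)$ contain both $u$ and $v$ and are hit by the added vertex. One checks $|X| \leq |X'|$. For (b), keep each member of $\mathcal{M}'$ that avoids the gadget, and replace the at-most-one member that uses the gadget by a $c$-pumpkin-model of $G$ built by combining the outside part of the $H$-model with the $\gamma$-pumpkin-witness in $\Gamma$ (or the $\lambda$-pumpkin-witness in $\Lambda$) suitably lifted to $G$. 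Vertex-disjointness is preserved because the gadget-using $H$-model contains both $u$ and $v$ (by the structural fact), so no other $\mathcal{M}'$-member uses $u$ or $v$; thus the replacement, supported on $V(C) \cup \{u, v\}$ together with the $H$-model's outside part, is disjoint from the rest. The reverse directions $\tau_c(H) \leq \tau_c(G)$ and $\nu_c(H) \geq \nu_c(G)$ are handled symmetrically: take the trace of $X$ (resp. $\mathcal{M}$) on $V(G) \setminus V(C)$, appending $v_C$ or a gadget-supported model only when needed.

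\emph{Main obstacle.} The central technical hurdle is the inequality $\lambda - \gamma \leq c - 1$ together with the symmetric structural fact for $H$: these are what guarantee that the Case~2 gadget does not create spurious $c$-pumpkin-models of $H$ lacking a counterpart in $G$, and they force a careful enumeration of how $u$, $v$, and $v_C$ can sit in the branch sets $A', B'$. The calibration of multiplicities $\gamma$ on $v_C u$ and $\lambda - \gamma$ on $v_C v$ must exactly reproduce, configuration by configuration, the edge-capacity of the outgrowth $V(C)$ in $G$; verifying this in full requires the bounds above.
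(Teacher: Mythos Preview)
Your plan is essentially the paper's approach: it too factors the lemma through (i) the inequality $\lambda \leq 2\gamma$ (their Lemma~\ref{lem:gamma_lambda}), (ii) a pair of explicit model-correspondence lemmas between $G\setminus X$ and $H\setminus X$ for $X$ disjoint from $V(C)\cup\{u,v\}$ (Lemmas~\ref{lem:intermediate1} and~\ref{lem:intermediate2}), and then (iii) the case analysis for hitting sets and packings (Lemmas~\ref{lem:valid_covering} and~\ref{lem:valid_packing}). Your ``structural fact for $H$'' together with the lifting construction in~(b) is morally the content of those two intermediate lemmas.

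There is, however, a real gap in your sketch of~(i). You assert that when $\Gamma[A]$ splits into $A_u$ and $A_v$, both $\{A_u, B\cup A_v\}$ and $\{A_v, B\cup A_u\}$ are valid $\Gamma$-models. But $\Gamma[B\cup A_v]$ need not be connected: if $f_v = 0$ there is no edge joining $A_v$ to $B$, and $A_v$ may only touch vertices of $\Gamma$ lying outside $A\cup B$. The paper's proof of $\lambda\leq 2\gamma$ is noticeably more involved for exactly this reason: it first shows that $\Lambda$ is $2$-connected --- and this is precisely where the hypothesis ``{\bf Z1} cannot be applied'' enters, a hypothesis your splitting argument never uses --- then takes $B$ inclusion-maximal, and uses $2$-connectivity to route two vertex-disjoint paths from $\{u,v\}$ to two distinct neighbours of $B$ inside $A$, growing sets $S_u$, $S_v$ around those paths. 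Your direct split works when $f_u, f_v \geq 1$, but you must either argue this can always be arranged or fall back on the $2$-connectivity route.

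A smaller omission: in your transformation for~(a), when $u, v, v_C \notin X'$ you set $X := X'$ and appeal only to the structural fact that $G$-models meeting $V(C)$ contain both $u$ and $v$. Since $u, v\notin X$, that fact alone does not hit them. What is actually needed is the model-correspondence direction (Lemma~\ref{lem:intermediate2}): a $c$-pumpkin-model in $G\setminus X$ would yield one in $H\setminus X$, contradicting that $X'$ is a hitting set of $H$. You already have this idea in your plan for~(b) and for the reverse inequalities, so it is only a matter of invoking it here as well.
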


In order to prove Lemma~\ref{lem:both}, we first need to introduce a few
technical lemmas; the validity of the reduction rules is shown in
Lemmas~\ref{lem:valid_covering} and~\ref{lem:valid_packing} at the end of this
section, which correspond to Lemma~\ref{lem:both}(a) and
Lemma~\ref{lem:both}(b), respectively.

\begin{lemma}
\label{lem:intermediate1} Let $c$ be a fixed positive integer. Suppose $H$ is
obtained by applying rule {\bf Z2} on a $c$-outgrowth $(K, u, v)$ of a graph
$G$. Let $X$ be an arbitrary subset of vertices of $V(G) \setminus (V(K) \cup
\{u,v\})$. Then, given a $c$-pumpkin-model of $H \setminus X$, one can find in
polynomial time a $c$-pumpkin-model of $G \setminus X$.
\end{lemma}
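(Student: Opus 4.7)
The plan is to proceed by a case analysis on which of the two constructions defining rule \textbf{Z2} was applied, and in each case to build a $c$-pumpkin-model of $G\setminus X$ by substituting the newly-added structure of $H$ back with an appropriate subgraph of $G[V(C)\cup\{u,v\}]$. The two building blocks I will use are a $\gamma$-pumpkin-model $\{A_0,B_0\}$ of $\Gamma(C,u,v)$ with $u\in A_0, v\in B_0$, and a $\lambda$-pumpkin-model $\{A_1,B_1\}$ of $\Lambda(C,u,v)$ with $u,v\in A_1$, both of which exist by the definitions of $\gamma$ and $\lambda$. Throughout, the fact that $X\cap(V(C)\cup\{u,v\})=\emptyset$ guarantees that any subgraph of $G[V(C)\cup\{u,v\}]$ survives in $G\setminus X$.

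In Case~1 of \textbf{Z2} (i.e.\ $\lambda\le\gamma$, $V(C)$ replaced by $\gamma$ parallel $uv$-edges), let $\{A',B'\}$ be the given model. If $u,v$ are not placed on opposite sides of the model, the added parallel $uv$-edges do not contribute to the $H$-edge count, and $\{A',B'\}$ is already a $c$-pumpkin-model of $G\setminus X$. Otherwise, WLOG $u\in A', v\in B'$, and I set $A:=A'\cup A_0$, $B:=B'\cup B_0$. Disjointness holds because $A',B'\subseteq V(G)\setminus V(C)$ while $A_0\setminus\{u\}, B_0\setminus\{v\}\subseteq V(C)$; connectivity is immediate through the shared vertices $u,v$; and the $\gamma$ added parallel edges in $H$ are compensated by the $\gamma$ edges between $A_0$ and $B_0$ in $\Gamma$, with no double-counting because $\Gamma$ excludes $uv$-edges. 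This yields at least $(c-\gamma)+\gamma=c$ edges between $A$ and $B$ in $G$.

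In Case~2 of \textbf{Z2} (i.e.\ $\lambda>\gamma$, new vertex $v_C$ with $\gamma$ edges to $u$ and $\lambda-\gamma$ edges to $v$), if $v_C\notin A'\cup B'$ the model is already valid in $G\setminus X$. Otherwise WLOG $v_C\in A'$, and I sub-case on the positions of $u,v$ relative to the model. If $u,v$ are both in $A'$, the $v_C$-edges are internal and contribute nothing, so I delete $v_C$ and if needed restore connectivity by adding a $u$--$v$ path through $V(C)$ (which exists because $C$ is a connected component of $G\setminus\{u,v\}$ with both $u,v$ having neighbors in $C$). If $u\in A'$ and $v\in B'$ (or symmetrically), I set $A:=(A'\setminus\{v_C\})\cup A_0$, $B:=B'\cup B_0$; the edge count in $G$ becomes at least $(c-(\lambda-\gamma))+\gamma=c-\lambda+2\gamma$, which is $\ge c$ by Lemma~\ref{lem:gamma_lambda}. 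If $u,v\in B'$, then $v_C$ has no $H$-neighbor inside $A'$, forcing $A'=\{v_C\}$ (so $\lambda\ge c$); I then set $A:=B_1$ and $B:=B'\cup(A_1\cap V(C))$, and observe that $B$ is connected because $B'$ already provides a $u$--$v$ path in $V(G)\setminus V(C)$ and each vertex of $A_1\cap V(C)$ is connected in $G$ to $u$ or $v$ via $A_1$. The $\lambda$ edges between $A_1$ and $B_1$ in $\Lambda$ (none of which is the $uv$-edge, since $u,v\in A_1$) translate directly into $\ge\lambda\ge c$ edges between $A$ and $B$ in $G$. The remaining partial sub-cases (one of $u,v$ outside the model) reduce to easier variants of the above.

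The main obstacle lies in the sub-cases of Case~2 when $v_C$ is used by the model: one must simultaneously ensure disjointness, maintain connectivity of both parts after deleting $v_C$ and re-inserting a subgraph of $V(C)$, and carefully avoid double-counting $uv$-edges when comparing edge counts in $G$, $\Gamma$, and $\Lambda$. The arithmetic in the ``asymmetric'' sub-case $\{u\in A', v\in B'\}$ works out precisely because of the bound $\lambda\le 2\gamma$ from Lemma~\ref{lem:gamma_lambda}, so that lemma is not just a side remark but exactly the numerical ingredient that makes the reduction rule {\bf Z2} valid.
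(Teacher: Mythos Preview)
Your approach is the same as the paper's: a case split on whether $\lambda\le\gamma$ or $\lambda>\gamma$, and within each case a sub-split on the positions of $u,v$ (and $v_C$) relative to the given model, using the $\gamma$-model in $\Gamma$ or the $\lambda$-model in $\Lambda$ to rebuild the part that was replaced by \textbf{Z2}. Your use of Lemma~\ref{lem:gamma_lambda} in the asymmetric sub-case is exactly how the paper uses it.

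There is, however, one genuine gap in your Case~1. You write that if $u,v$ are \emph{not} on opposite sides then ``$\{A',B'\}$ is already a $c$-pumpkin-model of $G\setminus X$''. This is false when $u,v$ lie on the \emph{same} side, say $u,v\in A'$: the $\gamma$ added parallel $uv$-edges are internal to $A'$ in $H$, and while they indeed do not contribute to the edge count between $A'$ and $B'$, they may well be needed for the connectivity of $H[A']$. If $G$ has no $uv$-edge, then $G[A']$ can be disconnected. You clearly understand this issue, since in the analogous sub-case of Case~2 ($v_C\in A'$ and $u,v\in A'$) you explicitly ``restore connectivity by adding a $u$--$v$ path through $V(C)$''. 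The same fix is needed in Case~1: replace $A'$ by $A'\cup V(C)$ (or $A'$ together with a $u$--$v$ path through $C$), which is connected in $G$ because $C$ is connected and both $u,v$ have neighbors in $C$. The paper does precisely this (it takes $A\cup V(C)$). Once you add that one line, your argument is complete and matches the paper's.
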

\begin{proof}
Let $\Gamma:= \Gamma(K, u, v)$, $\gamma :=\gamma(K, u, v)$, and $k:= c -
\gamma$. Let $\{A, B\}$ denote the given $c$-pumpkin-model of $H \setminus X$.

If $u \notin A \cup B$ or $v \notin A \cup B$, then $\{A, B\}$ is a
$c$-pumpkin-model in $G \setminus X$ and we are done. Thus, exchanging $A$ and
$B$ if necessary, we may assume that either $u, v \in A$, or $u\in A$ and $v
\in B$. In the first case, since $G[A \cup V(K)]$ is connected, $\{A \cup V(K),
B\}$ is a $c$-pumpkin-model in $G \setminus X$. Now suppose that $u\in A$ and
$v \in B$. We need to consider which types of $c$-pumpkin-models appear in
$\Gamma +_{k} uv$.

If $\Gamma +_{k} uv$ contains a $c$-pumpkin-model $\{A', B'\}$ with $u \in A'$
and $v \in B'$ then there are exactly $\gamma$ edges linking $A'$ to $B'$ in
the graph $\Gamma$, and hence $\{A \cup A', B\cup B'\}$ is a $c$-pumpkin-model
in $G\setminus X$, as desired.

If $\Gamma +_{k} uv$ has a $c$-pumpkin-model $\{A', B'\}$ with $u, v \in A'$,
then $k=1$ and $\gamma = c - 1$. In $H[A \cup B]$ there is a path $P$ linking
$u$ to $v$ that avoids the $c-1$ edges that resulted from the application of
{\bf Z2} on the $c$-outgrowth $(K, u, v)$. (Note that $P$ could possibly
consists of a single edge linking $u$ to $v$.) Then $\{A' \cup V(P), B'\}$ is a
$c$-pumpkin-model in $G\setminus X$.
\end{proof}

Next we show that the converse of the above lemma also holds.

\begin{lemma}
\label{lem:intermediate2}
 Let $c$ be a fixed positive integer. Suppose $H$
is obtained by applying rule {\bf Z2} on a $c$-outgrowth $(K, u, v)$ of a graph
$G$. Let $X$ be an arbitrary subset of vertices of $V(G) \setminus (V(K) \cup
\{u,v\})$. Then, given a $c$-pumpkin-model of $G \setminus X$, one can find in
polynomial time a $c$-pumpkin-model of $H \setminus X$.
\end{lemma}
\begin{proof}
Let $\Gamma:= \Gamma(K, u, v)$, $\gamma :=\gamma(K, u, v)$, and $k:= c -
\gamma$. Let $\{A, B\}$ denote the given $c$-pumpkin-model of $G \setminus X$.
We may assume that this model is minimal (if not, one can obviously make it
minimal in polynomial time).

If $u \notin A \cup B$ or $v \notin A \cup B$, then by minimality of $\{A, B\}$
both $A$ and $B$ avoid $V(K)$. Thus $\{A, B\}$ is a $c$-pumpkin-model in $H
\setminus X$, and we are done. Hence, exchanging $A$ and $B$ if necessary, we
may assume that either $u, v \in A$, or $u\in A$ and $v \in B$. In the second
case, at most $\gamma$ edges between $A$ and $B$ in $G \setminus X$ are
included in $\Gamma$. Since there are $\gamma$ extra edges between $u$ and $v$
in $H$ compared to $G$, it follows that $\{A \setminus V(K), B \setminus
V(K)\}$ is a $c$-pumpkin-model in $H \setminus X$.

Now suppose that $u, v\in A$. If $B \subseteq V(K)$ then all edges between $A$
and $B$ in $G \setminus X$ are in $\Gamma$. Let $P$ be a path in $G[A]$ linking
$u$ to $v$. (Note that the path $P$ possibly consists of a single edge.) Then
$P$ is disjoint from $V(K)$, as otherwise $P \subseteq \Gamma$ and $\{A \cap
V(\Gamma), B\}$ would be a $c$-pumpkin-model in $\Gamma$. Thus in particular
$k=1$ and $\gamma = c - 1$. Since there are $c-1$ extra edges between $u$ and
$v$ in $H$ compared to $G$, and $P$ avoids all these edges, $\{ \{u\}, \{v\}
\cup (V(P) \setminus \{u\}) \}$ is a $c$-pumpkin-model in $H \setminus X$.

If $B \not \subseteq V(K)$ then $B$ is disjoint from $V(K)$. Since $u$ and $v$
are linked by at least $\gamma \geq 1$ edges in $H$, the graph $H[A \setminus
V(K)]$ is connected, and it follows that $\{A \setminus V(K), B\}$ is a
$c$-pumpkin-model in $H \setminus X$.
\end{proof}

\begin{lemma}
\label{lem:valid_covering} Let $c$ be a fixed positive integer. Suppose $H$
results from the application of {\bf Z1} or {\bf Z2} on a graph $G$. Then
$\tau_{c}(G) = \tau_{c}(H)$. Moreover, every $c$-hitting set $X'$ of $H$ is
also a $c$-hitting set of $G$.
\end{lemma}
This lemma implies that an optimal solution to the \textsc{Minimum}
$c$-\textsc{Pumpkin}-\textsc{Hitting} problem on $G$ can be computed given one
for $H$, and similarly that an approximate solution for $G$ can be obtained
from an approximate solution for $H$. This will be used in our approximation
algorithms in Section~\ref{sec:1stAlgo}.\vspace{.2cm}

\begin{proof}[Proof of Lemma~\ref{lem:valid_covering}] First suppose $H$ results from
the application of {\bf Z1} on $G$ with vertex $v$. We trivially have
$\tau_{c}(G) \geq \tau_{c}(H)$. Let $X'$ be a given $c$-hitting set of $H$. If
$X'$ is not a $c$-hitting set of $G$, then $G\setminus X'$ has a
$c$-pumpkin-model; let $\{A, B\}$ be a minimal one. We have $v\in A \cup B$
since otherwise $\{A, B\}$ would be a $c$-pumpkin-model in $H \setminus X'$. By
the minimality of $\{A, B\}$, we must have $A \cup B \subseteq V(K)$ for some
block $K$ of $G$. But then $K$ is a block of $G$ including $v$ and containing a
$c$-pumpkin-minor, contradicting the assumptions of {\bf Z1}.
Therefore $X'$ is a $c$-hitting set of $G$, and
$\tau_{c}(G) \leq \tau_{c}(H)$ also holds, implying $\tau_{c}(G) =
\tau_{c}(H)$.

Now assume $H$ has been obtained by applying {\bf Z2} on $G$ with $c$-outgrowth
$(K, u, v)$, and let $\Gamma:= \Gamma(K, u, v)$.


First we show $\tau_{c}(G) \geq \tau_{c}(H)$. Let $X$ be a minimum $c$-hitting
set of $G$. If $u\in X$ or $v\in X$, then $X$ is trivially a $c$-hitting set of
$H$, so let us assume $u,v\notin X$. Moreover, we may suppose that $X$ has no
vertex in $K$, since otherwise we could replace all such vertices with the
vertex $u$ (or equivalently $v$). Since $X \subseteq V(G) \setminus V(\Gamma)$
and $G \setminus X$ has no $c$-pumpkin-minor, it follows from
Lemma~\ref{lem:intermediate1} that $H \setminus X$ has no $c$-pumpkin-minor
either, that is, $X$ is a $c$-hitting set of $H$. This shows $\tau_{c}(G) \geq
\tau_{c}(H)$.

Now we prove that $\tau_{c}(G) \leq \tau_{c}(H)$ also holds. Here we show that,
given a $c$-hitting set $X'$ of $H$, the set $X'$ is also a $c$-hitting set of
$G$. Hence, this will also prove the second part of the lemma. If $u \in X'$ or
$v \in X'$, then $X'$ is trivially a $c$-hitting set of $G$. If $u, v \notin
X'$, then Lemma~\ref{lem:intermediate2} implies that $G \setminus X'$ has no
$c$-pumpkin-minor, that is, that $X'$ is a $c$-hitting set of $G$. This shows
$\tau_{c}(G) \leq \tau_{c}(H)$, and therefore $\tau_{c}(G) = \tau_{c}(H)$.
\end{proof}

We conclude this section with a lemma similar to Lemma~\ref{lem:valid_covering}
for $c$-packings.

\begin{lemma}
\label{lem:valid_packing} Let $c$ be a fixed positive integer. Suppose $H$
results from the application of {\bf Z1} or {\bf Z2} on a graph $G$. Then
$\nu_{c}(G) = \nu_{c}(H)$. Moreover, given a $c$-packing $\mathcal{M}'$ of $H$
one can compute in polynomial time a $c$-packing $\mathcal{M}$ of $G$ with
$|\mathcal{M}| = |\mathcal{M}'|$.
\end{lemma}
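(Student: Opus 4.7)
The proof would split according to which of the two reduction rules produces $H$. For \textbf{Z1}, the argument would mimic the \textbf{Z1} part of Lemma~\ref{lem:valid_covering}, using the standard fact that a minimal $c$-pumpkin-model is contained in a single block of its host graph. Given a $c$-packing $\mathcal{M}$ of $G$, I would replace each $M\in\mathcal{M}$ by a minimal $c$-pumpkin-sub-model $M^{\ast}\subseteq M$: the resulting collection remains vertex-disjoint, and since every $M^{\ast}$ lies in a single block of $G$, the hypothesis of \textbf{Z1} forces $M^{\ast}$ to avoid the removed vertex $v$. Thus $\{M^{\ast} : M\in\mathcal{M}\}$ is a $c$-packing of $H$, giving $\nu_c(G) \leq \nu_c(H)$; the converse inequality and the algorithmic statement are immediate since any $c$-packing of $H$ is a fortiori a $c$-packing of $G$.

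For \textbf{Z2} with outgrowth $(C,u,v)$, I would prove the two inequalities separately by leveraging Lemmas~\ref{lem:intermediate1} and~\ref{lem:intermediate2} together with the structural constraints of the outgrowth. For the constructive direction $\nu_c(G) \geq \nu_c(H)$, let $\mathcal{M}'$ be a $c$-packing of $H$. Every element disjoint from the ``special set'' $\{u,v\}$ (respectively $\{u,v,v_C\}$ when $\lambda>\gamma$) is already a $c$-pumpkin-model of $G$, since all edges of $H$ that are absent from $G$ are incident to this set. At most three elements intersect the special set, and the key observation is that only one of them \emph{requires} vertices of $V(C)$ for its $G$-reconstruction: in the case $\lambda\leq\gamma$ this is the unique element containing both $u$ and $v$ (all other elements involve at most one of $u,v$ and therefore see no new edges within their vertex set), while in the case $\lambda>\gamma$ this is the unique element containing $v_C$. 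Applying the construction from the proof of Lemma~\ref{lem:intermediate1} to that single exceptional element produces a $c$-pumpkin-model of $G$ whose additional vertices lie inside $V(C)\cup\{u,v\}$ and are therefore disjoint from all other transformed models (which remain inside $V(G)\setminus V(C)$).

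For the reverse direction $\nu_c(G)\leq \nu_c(H)$, given a $c$-packing $\mathcal{M}$ of $G$ I would first replace each element by a minimal $c$-pumpkin-sub-model. The crucial structural fact---already implicit in the proof of Lemma~\ref{lem:intermediate2}---is that any minimal sub-model missing at least one of $u,v$ must be entirely disjoint from $V(C)$: otherwise the $\{u,v\}$-separator property of the outgrowth, together with the $c$-pumpkin-freeness of $\Gamma$, yields either a strictly smaller sub-model (by peeling off the piece of the sub-model that sits in $V(C)$) or a $c$-pumpkin-model of $\Gamma$ itself. Consequently only minimal sub-models containing both $u$ and $v$ can touch $V(C)$, and at most one such exists by disjointness of the packing. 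That exceptional sub-model is translated to $H$ via Lemma~\ref{lem:intermediate2}, which introduces the new vertex $v_C$ only when $\lambda>\gamma$; every other minimal sub-model already lies inside $V(G)\setminus V(C)\subseteq V(H)$ and is directly verified to be a $c$-pumpkin-model of $H$ (its internal edges are unchanged, and in the $\lambda\leq\gamma$ case the extra $uv$-edges can only add crossing edges). Disjointness in $H$ is preserved because at most one transformed element can use $v_C$, and $v_C$ is a new vertex.

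\textbf{The main obstacle} is precisely this disjointness bookkeeping: Lemmas~\ref{lem:intermediate1} and~\ref{lem:intermediate2} operate on one model at a time, so the real content of the proof is the structural claim that at most one element of the packing can compete for the shared ``new'' resources $V(C)$ or $v_C$. The minimality-based dichotomy in the $\nu_c(G)\leq\nu_c(H)$ direction---showing that a minimal sub-model missing $\{u,v\}$ cannot dip into $V(C)$---is the most delicate step and the one I would write out in full detail, carefully distinguishing the cases $V(C)\cap B^{\ast}=\emptyset$ (peel off $V(C)\cap A^{\ast}$ to contradict minimality) and $B^{\ast}\subseteq V(C)$ (exhibit a $c$-pumpkin-model inside $\Gamma$ using $(\{u\}\cup V(C)\cap A^{\ast},B^{\ast})$).
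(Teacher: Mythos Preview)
Your proposal is correct and follows essentially the same route as the paper: handle \textbf{Z1} via the block-containment property of minimal models, and handle \textbf{Z2} by observing that at most one member of a (minimized) packing can contain both $u$ and $v$, then invoking Lemmas~\ref{lem:intermediate1} and~\ref{lem:intermediate2} on that single exceptional member with $X$ equal to the union of the remaining members. The paper is terser---it states the key structural fact ``a minimal model meeting $C$ must contain both $u$ and $v$'' without the case analysis you sketch in your final paragraph---but the logic is identical, and your added detail is sound.
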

\begin{proof}
First suppose $H$ results from the application of {\bf Z1} on $G$ with vertex
$v$. Clearly, every $c$-packing of $H$ is a $c$-packing for $G$. Thus
$\nu_{c}(G) \geq \nu_{c}(H)$, and it is enough to show the reverse inequality.
Consider a  $c$-packing of $G$. We may assume that every $c$-pumpkin-model in
that packing is minimal. Thus each such model is contained in some block of
$G$, and hence
 avoids the  vertex $v$.
Therefore the packing also exists in $H$, implying $\nu_{c}(G) \leq \nu_{c}(H)$
and $\nu_{c}(G) = \nu_{c}(H)$, as desired.

Now assume $H$ has been obtained by applying {\bf Z2} on $G$ with outgrowth
$(K, u, v)$.

First we show $\nu_{c}(G) \geq \nu_{c}(H)$. Let $\mathcal{M}'=\{M'_{1}, \dots,
M'_{k}\}$ be a given $c$-packing of $H$. We show that a packing of the same
size in $G$ can be computed in polynomial time, which will prove the second
part of the lemma.
If every $M'_{i}$ avoids at least one of $u, v$ then the packing $\mathcal{M}
:= \mathcal{M}'$ is a $c$-packing in $G$ and we are done. So assume one model
in the collection, say without loss of generality $M'_{1}$, includes both $u$
and $v$. Let $X$ be the union of the vertices in $M'_{2}, \dots, M'_{k}$. Since
$M'_{1}$ is a $c$-pumpkin-model in $H\setminus X$, using
Lemma~\ref{lem:intermediate1} we can compute in polynomial time a
$c$-pumpkin-model $M_{1}$ in $G\setminus X$. Hence $\mathcal{M}
:=\{M_{1},M'_{2}, \dots, M'_{k}\}$ is a $c$-packing of the desired size in $G$.

In order to prove $\nu_{c}(G) = \nu_{c}(H)$ it remains to show $\nu_{c}(G) \leq
\nu_{c}(H)$. Let $\{M_{1}, \dots, M_{k}\}$ be a  $c$-packing of $G$. We may
assume that each $M_{i}$ is minimal. Thus if some $M_{i}$ contains some vertex
of $K$ then $M_{i}$ contains both $u$ and $v$. If there is no such model in the
packing then $\{M_{1}, \dots, M_{k}\}$ is also of $c$-packing of $H$ and we are
done. We may thus assume that some model in the packing, say without loss of
generality $M_{1}$, contains both $u$ and $v$. As before, let $X$ be the union
of the vertices in $M_{2}, \dots, M_{k}$. Using Lemma~\ref{lem:intermediate2}
with $M_{1}$ and $X$ we find a $c$-pumpkin-model $M'_{1}$ in $H\setminus X$.
Thus $\{M'_{1},M_{2}, \dots, M_{k}\}$ is a $c$-packing of size $k$ in $H$, as
desired.
\end{proof}

\subsection{Hedgehogs}
\label{sec:hedgehogs}

Recall that a graph is said to be a {\em multipath} if its underlying simple
graph is isomorphic to a path. If $P$ is a multipath and $u,v \in V(P)$, we
write $uPv$ for the subgraph of $P$ induced by the vertices on a $u$--$v$ path
in $P$ (thus edges in $uPv$ have the same multiplicities as in $P$).

A {\em hedgehog} is a pair $(H, P)$, where $H$ is a graph and $P$ is an induced
multipath of $H$ with $|P| \geq 2$ and such that
\begin{itemize}
\item[(i)] the (possibly empty) set $S := V(H) \setminus V(P)$ is a stable set of
$H$; and \vspace{.4mm}
\item[(ii)] every vertex in $S$ has at least two neighbors in $P$.
\end{itemize}
(Let us recall that a {\em stable set} is a set of vertices such that no two of
them are adjacent.)

Consider a hedgehog $(H, P)$. Its {\em size} is defined as $|P|$, the number of
vertices in $P$. A {\em bad cutset} of $(H, P)$ is a set $X=\{u,v\}$ of two
{\em internal} vertices of $P$ such that $H \setminus X$ has a connected
component $K$ avoiding both endpoints of $P$.
This definition is motivated by reduction rule {\bf Z2}: First, if $K$ is such
a component, then $u$ and $v$ each have at least one neighbor in $K$. This is
because either $K$ contains the subpath of $P$ strictly between $u$ and $v$, or
$K$ consists of a unique vertex of $V(H) \setminus V(P)$ which is then adjacent
to $u$ and $v$ (by condition (ii) in the definition of hedgehogs). Hence either
$(K, u, v)$ is a $c$-outgrowth of $H$, or one can find a $c$-pumpkin-minor in
$H[V(K) \cup \{u,v\}]$.

A {\em rooted} $c$-pumpkin-model of $(H, P)$ is a $c$-pumpkin-model $\{A, B\}$
of $H$ with the extra property that $A$ and $B$ both contain an endpoint of
$P$.

Given a hedgehog $(H, P)$ and a connected induced subgraph $Q$ of $P$ with $|Q|
\geq 2$, one can define a hedgehog $(H', Q)$ as follows: First, remove from $H$
every vertex not in $P$ that has no neighbor in $Q$. Then contract every edge
of $P$ not included in $Q$. Finally, remove from the graph every vertex not in
$Q$ that has only one neighbor in $Q$. This defines the graph $H'$. We leave it
to the reader to check that $(H', Q)$ is indeed a hedgehog; we say that $(H',
Q)$ is the {\em contraction} of $(H, P)$ on the multipath $Q$. See
Figure~\ref{fig:hedgehog} for an illustration of this operation. The following
lemma is a direct consequence of the definition.

\begin{figure}
\centering
\includegraphics[width=1\textwidth]{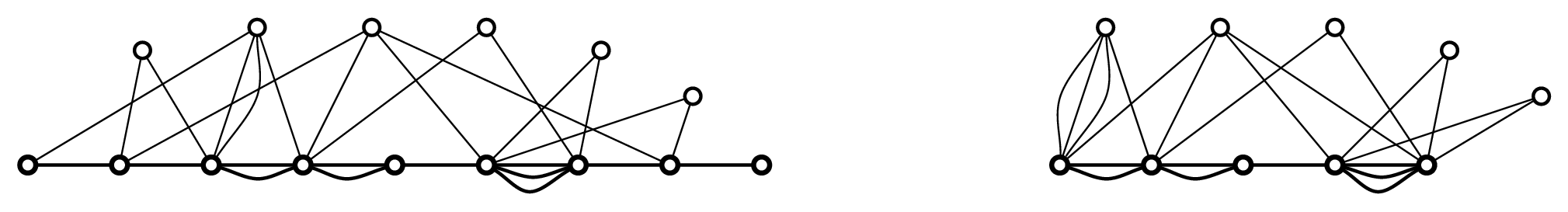}
\caption{A hedgehog $(H, P)$ (left) and a contraction $(H', Q)$ of $(H, P)$
(right). The multipaths $P$ and $Q$ are drawn in bold.\label{fig:hedgehog}}
\end{figure}

\begin{lemma}
\label{lem:hedgehog_cutset} If $(H', Q)$ is a contraction of a hedgehog $(H,
P)$ and $X$ is a bad cutset of $(H', Q)$, then $X$ is also a bad cutset of $(H,
P)$.
\end{lemma}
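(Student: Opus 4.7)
Write $P=p_1,\ldots,p_n$ and, without loss of generality, $Q=p_i,\ldots,p_j$ with $1\le i<j\le n$. Unpacking the definition of contraction, the only vertices of $V(H)$ that do not survive in $V(H')$ are: the $S$-vertices deleted in step~(1) (those with no neighbour in $V(Q)$), the $S$-vertices deleted in step~(3) (those with a single post-contraction neighbour in $Q$), and the $P$-vertices outside $V(Q)$ which get merged in step~(2) into the two endpoints of $Q$. In particular the internal vertices of $Q$, namely $p_{i+1},\ldots,p_{j-1}$, are genuine (non-merged) vertices of $H$, and since $1<i+1$ and $j-1<n$ they are also internal vertices of $P$. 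So already $X=\{u,v\}$ automatically consists of two internal vertices of $P$, which is one of the two things we need.

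Given the bad cutset $X$ of $(H',Q)$ and the component $C'$ of $H'\setminus X$ with $|C'|\ge 2$ avoiding both endpoints of $Q$, the plan is simply to take $C:=C'$, viewed as a subset of $V(H)$, and show it witnesses a bad cutset of $(H,P)$. Because $C'$ avoids the two (possibly merged) endpoints of $Q$, every vertex of $C$ is either an internal vertex of $Q$ or a surviving $S$-vertex; in either case it is distinct from $p_1$ and $p_n$, so $C$ avoids both endpoints of $P$, and trivially $|C|=|C'|\ge 2$. Moreover, no contraction or deletion of the construction touches the vertices of $C$, so edges of $H$ between two vertices of $C$ are preserved verbatim in $H'$ (with multiplicity); hence $H[C]=H'[C']$, which is connected since $C'$ is a connected component of $H'\setminus X$.

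The crux is then to verify that $C$ is a full connected component of $H\setminus X$, i.e.\ that $C$ has no $H$-edge to $V(H)\setminus(C\cup X)$. Consider first a potential $H$-edge from some $x\in C$ to a vertex $y\in V(P)\setminus(V(Q)\setminus\{p_i,p_j\})$, that is, to a vertex lying in the chain $\{p_1,\ldots,p_i\}$ or $\{p_j,\ldots,p_n\}$; such an edge would survive the contraction as an edge of $H'$ between $x\in C'$ and the merged endpoint $p_i$ or $p_j$ of $Q$, contradicting the fact that $C'$ is a component of $H'\setminus X$ avoiding the endpoints of $Q$. Next consider an $S$-vertex $s\notin V(H')$: if $s$ was deleted at step~(1) it has no neighbour at all in $V(Q)$, and since $S$ is stable in $H$ and $C\cap V(P)\subseteq V(Q)$, there is no edge $sx$ for any $x\in C$; if $s$ was deleted at step~(3) it has a unique neighbour in $Q$ after contraction, and this neighbour must be one of the two merged endpoints, because otherwise it would be a single internal vertex $p_k$ of $Q$, making $p_k$ the unique $P$-neighbour of $s$ in $H$ and violating the hedgehog axiom that every $S$-vertex has at least two neighbours in $P$. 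So $s$ has no neighbour among the internal vertices of $Q$, and again no edge to $C$.

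Putting these observations together, $C$ is a connected component of $H\setminus X$ of size at least $2$ avoiding both endpoints of $P$, so $X$ is a bad cutset of $(H,P)$, as required. The only step I expect to need care is the treatment of the step-(3) deletions: the hedgehog axiom ``every vertex of $S$ has at least two neighbours in $P$'' is precisely what forces such a deleted $s$ to send its surviving merged $Q$-neighbour to an endpoint of $Q$ rather than to an internal vertex that could reach into $C$.
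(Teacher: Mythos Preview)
Your argument is correct and follows the only natural route; the paper itself omits the proof entirely, stating only that the lemma ``is a direct consequence of the definition.'' Your careful unpacking of the three construction steps and the use of the hedgehog axiom to handle the step-(3) deletions is exactly what is needed to make that sentence rigorous.

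One small presentational remark: when you verify that $C$ has no $H$-edge to $V(H)\setminus(C\cup X)$, you explicitly treat (a) edges to vertices of $P$ outside the interior of $Q$ and (b) edges to deleted $S$-vertices, but you do not explicitly rule out an $H$-edge from $x\in C$ to a \emph{surviving} vertex $y\in V(H')\setminus(C\cup X)$ (that is, an internal vertex of $Q$ not in $C\cup X$, or an $S'$-vertex not in $C$). This case is of course trivial---such an edge would persist in $H'$, and since $y\notin X$ it would force $y\in C'$, a contradiction---and it uses the same ``edges between surviving non-merged vertices are preserved'' observation you already invoked to show $H[C]=H'[C']$. You may wish to add a sentence to close this gap in the write-up.
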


We show that every big enough hedgehog has a rooted $c$-pumpkin-model or a bad
cutset. This fact will be useful in the subsequent proofs.

\begin{lemma}
\label{lem:hedgehog} Let $c$ be a fixed positive integer. Then every hedgehog
$(H, P)$ of size at least $(2c)^{2c}$ contains a rooted $c$-pumpkin-model or a
bad cutset, either of which can be found in polynomial time.
\end{lemma}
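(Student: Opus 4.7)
The plan is to prove the contrapositive: assume $(H,P)$ admits neither a rooted $c$-pumpkin-model nor a bad cutset, and bound $|P|$ by $(4c)^{4c}$. The first observation is an interval-depth bound. At any cut position $m$ of $P$, consider the pair $(\{p_1,\ldots,p_{m-1}\}\cup S_A,\{p_m,\ldots,p_n\}\cup S_B)$ in which each $s\in S$ whose interval $[L(s),R(s)]$ straddles the cut is placed on the side containing fewer of its neighbors, so as to maximize cross-edges. This pair is always a rooted pumpkin-model, and its number of crossing edges equals $\mu(p_{m-1}p_m)+\sum_{s\text{ straddling }m}\max(\ell_s(m),r_s(m))$, where $\ell_s(m),r_s(m)$ denote the numbers of $s$-neighbors on each side of the cut. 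Under the no-rooted-model hypothesis this sum is at most $c-1$ for every $m$; hence $\mu(p_{m-1}p_m)\leq c-1$ and at every cut at most $c-1$ intervals straddle.

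The no-bad-cutset hypothesis yields a complementary \emph{bridging} property. If two consecutive internal positions $p_k,p_{k+1}$ had no $S$-neighbor, then $\{p_{k-1},p_{k+2}\}$ would be a bad cutset with middle piece $\{p_k,p_{k+1}\}$ of size $2$. More generally, for every internal window $(i,j)$ with $j\geq i+3$, some $s\in S$ must have a neighbor strictly inside $(i,j)$ and another neighbor strictly outside $[i,j]$; I will call such an $s$ a \emph{straddler} of that window. Every long enough internal window must therefore be straddled.

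The main combinatorial step is a Ramsey-type iteration combining these two bounds. Partition $P$ into $4c$ equal-length blocks $B_1,\ldots,B_{4c}$; each internal block $B_t$ ($2\leq t\leq 4c-1$) must be straddled by some $s\in S$ whose interval crosses one of the $4c-1$ block boundaries. The depth bound caps the number of intervals crossing any fixed boundary at $c-1$. If $c$ straddlers share a common boundary, I assemble them together with the path edge at that boundary into a rooted $c$-pumpkin-model, using the rest of $P$ as a spine between $p_1$ and $p_n$ and allocating each straddler to the appropriate side as in the first paragraph. Otherwise the pigeonhole forces the non-straddled structure to live inside a single block, so I contract $(H,P)$ onto that block via the hedgehog-contraction operation: Lemma~\ref{lem:hedgehog_cutset} guarantees that bad cutsets lift back, and the no-rooted-model hypothesis transfers by construction, producing a smaller hedgehog of path length $|P|/(4c)$ obeying the same hypotheses. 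Iterating this $4c$ times shrinks $|P|$ below $1$, contradicting $|P|\geq (4c)^{4c}$. The main obstacle will be the assembly step: I need to verify that $c$ straddlers sharing a boundary, together with the multipath edge, can indeed be packaged into two disjoint connected sets containing $p_1$ and $p_n$ and carrying $c$ edges between them --- this requires handling the straddlers' interior neighbors carefully so that connectivity to $p_1$ or $p_n$ is not broken. All intermediate operations (computing $\mu,\ell_s,r_s$, identifying straddlers, performing the contraction) are polynomial in $|H|$, so the sought witness --- either the rooted $c$-pumpkin-model or the bad cutset --- is produced in polynomial time.
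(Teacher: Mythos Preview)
Your preliminary observations --- the depth bound and the bridging property --- are both correct and potentially useful. However, the main iteration step has a genuine gap.

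First, the case analysis is internally inconsistent. Your own depth bound already shows that at most $c-1$ straddlers can share any boundary (indeed at most $c-2$, since the path edge contributes at least one cross-edge and each straddler at least one more). Hence the case ``$c$ straddlers share a common boundary'' is vacuous under the no-rooted-model hypothesis; it can never occur, so the ``assembly step'' you flag as the main obstacle is not the issue --- there is simply nothing to assemble.

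Second, and more seriously, the ``otherwise'' branch is not a proof. You assert that ``pigeonhole forces the non-straddled structure to live inside a single block,'' but your bridging property says every sufficiently long internal window \emph{is} straddled, so it is unclear what structure you mean or which block you would contract onto. Even granting some canonical choice of block, iterating the contraction $4c$ times does not yield a contradiction: after $4c-1$ rounds you reach a hedgehog of path-length roughly $4c$, and you cannot perform a further round (a block of size $1$ violates $|Q|\geq 2$). A hedgehog of size $4c$ with neither a rooted $c$-model nor a bad cutset is not obviously impossible --- that is precisely the hard combinatorial content of the lemma, and your iteration never engages with it.

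The paper's proof takes a quite different route: it inducts on $c$. If some $s\in S$ has at least $c$ neighbours on $P$, a model is built directly. If some $s$ has two consecutive neighbours far apart, one contracts to that gap and applies the $(c{-}1)$ case. Otherwise all intervals are short; one then analyses the interval graph on $S$: either it has several components (yielding a bad cutset), or one component covers almost all of $P$, and a shortest path in that interval graph induces a $2$-colouring of $P$ that produces the rooted $c$-model. That final interval-graph colouring argument is exactly the piece your sketch lacks; it is where the bound $(4c)^{4c}$ is actually earned.
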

\begin{proof}
The proof is by induction on $c$. The base case $c=1$ is trivial since $P$
directly gives a rooted $1$-pumpkin-model. For the inductive step, assume $c >
1$. Define $f(k)$, for a positive integer $k$, as $f(k) := (2k)^{2k}$. Let $S:=
V(H) \setminus V(P)$. Let $a, b$ be the endpoints of $P$.

If a vertex $v\in S$ has at least $c$ neighbors in $P$, then let $w$ be the
neighbor of $v$ that is closest to $a$ on $P$. Then $A:= V(aPw) \cup \{v\}$ and
$B:= V(P) \setminus A$ both induce a connected subgraph of $H$. Moreover, there
are at least $c-1$ edges from $v$ to $B$, and at least one from $A \setminus
\{v\}$ to $B$ (because of $P$). Since $a\in A$ and $b\in B$, we deduce that
$\{A, B\}$ is a rooted $c$-pumpkin-model of $(H, P)$. Thus we may assume that
every vertex in $S$ has at most $c-1$ neighbors in $P$. In particular we have
$c \geq 3$, since every vertex in $S$ has at least two neighbors in $P$.

The multipath $P$, seen from its endpoint $a$, induces a natural linear
ordering of the neighbors of a given vertex in $S$; we say that two such
neighbors are {\em consecutive} if they are consecutive in that ordering.

Suppose that there exists a vertex $v\in S$ with two consecutive neighbors $x,
y$ such that $|xPy| \geq f(c-1) + 2$. Consider the contraction $(H', Q)$ of
$(H, P)$ on the multipath $Q:= xPy \setminus \{x, y\}$. Since $|Q| \geq
f(c-1)$, by induction $(H', Q)$ has a rooted $(c-1)$-pumpkin-model $\{A' ,
B'\}$ or a bad cutset $X$. If the latter holds, then by
Lemma~\ref{lem:hedgehog_cutset} the set $X$ is also a bad cutset of $(H, P)$
and we are done. Thus we may assume the former holds. In the graph $H$, the
vertex $v$ has no neighbor in $Q$, thus $v$ is not included in $H'$. Hence, we
can obtain a rooted $c$-pumpkin-model $\{A, B\}$ in $(H, P)$ by setting $A:= A'
\cup V(aPx) \cup \{v\}$ and $B:= B' \cup V(yPb)$. Therefore we can assume that,
for every vertex $v \in S$, every two consecutive neighbors of $v$  are at
distance at most $f(c-1)$ on $P$.

Let us enumerate the vertices of $P$ in order as $p_{1}, p_{2}, \dots, p_{k}$,
with $p_{1}=a$ and $p_{k}=b$. We may assume that, for every $i\in \{3, \dots,
k-2\}$,
\begin{equation}
\label{eq:exposed} \textrm{$p_{i}$ is adjacent to some vertex in $S$.}
\end{equation}
Indeed, if not then $\{p_{i-1}, p_{i+1}\}$ would be a bad cutset of $(H, P)$.
Since $k = |P| \geq f(c) \geq f(3) \geq 5$, this implies in particular that $S$
is not empty.

Define an {\em open} interval $I_{v}=(i, j)$ for every vertex $v\in S$, where
$i$ ($j$) is the smallest (largest, respectively) index $t$ such that $p_{t}$
is a neighbor of $v$ in $H$. (Observe that $i < j$ since $v$ has at least two
neighbors.) Now, let $G$ be the interval graph defined by these open intervals,
that is, let $V(G) := S$, and for every two distinct vertices $v, w\in S$, make
$v$ adjacent to $w$ in $G$ if and only if $I_{v} \cap I_{w} \neq \emptyset$.

For a connected subgraph $G'$ of $G$, we define $I(G')$ as the union of the
intervals of vertices in $G'$, that is, $I(G') := \bigcup\{I_{v}: v\in
V(G')\}$. Observe that, since $G'$ is connected, we have $I(G') = (i,j)$ for
some integers $i,j$ with $1 \leq i < j \leq k$.

First suppose that $G$ has at least three connected components. The ordering
$p_{1}, \dots, p_{k}$ of the vertices of $P$ induces an ordering of these
components; let $C$, $C'$, $C''$ be three consecutive connected components in
that ordering. Let $(i, j):= I(C)$, $(i', j'):= I(C')$, and $(i'', j''):=
I(C'')$. Then we have $1 \leq i < j \leq i' < j' \leq i'' < j'' \leq k$, and
every vertex of $S$ that is adjacent to some vertex strictly between $p_{i'}$
and $p_{j'}$ on $P$ has all its neighbors in the set $\{p_{i'}, p_{i' + 1},
\dots, p_{j'}\}$. Thus, for each $w\in V(C')$, the component $K$ of $H -
\{p_{i'}, p_{j'}\}$ that contains $w$ avoids both endpoints of $P$. It follows
that $\{p_{i'}, p_{j'}\}$ is a bad cutset of $(H, P)$. Hence, we may assume
that $G$ has at most two connected components.

Since $G$ has at most two connected components, using~\eqref{eq:exposed} we
deduce that $G$ has a connected component $C$ with $I(C)= (x, y)$ such that
\begin{equation}
\label{eq:Q} y - x + 1 \geq \frac{|P| - 4}{2} \geq \frac{f(c) - 4}{2}.
\end{equation}
Let $Q := p_{x}Pp_{y}$ and let $(H', Q)$ be the contraction of $(H, P)$ on $Q$.
(Note that possibly $Q=P$, in which case $(H', Q)=(H, P)$.) We will show that
$(H', Q)$ contains a rooted $c$-pumpkin-model. The lemma will then follow,
since such a model can be extended straightforwardly to one of $(H, P)$.

First let us observe that $H'$ is an induced subgraph of $H$. This is because,
by our choice of $Q$, every vertex of $S$ that is adjacent to at least two
vertices of $Q$, or to at least one internal vertex of $Q$, has all its
neighbors in $Q$. Let $S' := V(H') \setminus V(Q) = V(C)$. For a vertex $u\in
S'$, let us denote by $\ell(u)$ and $r(u)$ the two integers such that $I_{u} =
(\ell(u), r(u))$.

It follows from our assumptions on $(H, P)$ that, for each $u \in S'$, the
vertex $u$ has at most $c-1$ neighbors in $Q$ and every two consecutive
neighbors of $u$ are at distance at most $f(c-1)$ on $Q$. This implies
\begin{equation}
\label{eq:length} r(u) - \ell(u) \leq (c-2)f(c-1)
\end{equation}
for each $u \in S'$.

In $H'$, the vertices $p_{x}$ and $p_{y}$ each have at least one neighbor in
$S'$. Let $v \in S'$ be a neighbor of $p_{x}$  maximizing $r(v)$, and let $w\in
S'$ be a neighbor of $p_{y}$ minimizing $\ell(w)$. Let $Z$ be a shortest
$v$--$w$ path in the interval graph $G$; enumerate the vertices of $Z$ as
$z_{1}, z_{2}, \dots, z_{m}$ with $z_{1} = v$ and $z_{m}=w$.

By our choice of $v,w$ and the fact that $Z$ is a shortest $v$--$w$ path, we
have
\begin{align}
\label{eq:endpoints1}
\ell(z_{j}) &< \ell(z_{j + 1}) \\
\label{eq:endpoints2} r(z_{j}) &< r(z_{j+1})
\end{align}
for each $j \in \{1, \dots, m-1\}$, and
\begin{equation}
\label{eq:endpoints3} r(z_{j}) \leq \ell(z_{j + 2})
\end{equation}
for each $j \in \{1, \dots, m-2\}$.

Since $I(Z)=I(C)=(x,y)$ we have $y \leq x + \sum_{j=1}^{m} (r(z_{j}) -
\ell(z_{j}))$. Hence $y - x \leq m(c-2)f(c-1)$ by~\eqref{eq:length}.
Using~\eqref{eq:Q} we then obtain
\begin{equation}
m \geq \frac{y - x}{(c-2)f(c-1)} \geq \frac{f(c) - 6}{2(c-2)f(c-1)} =
\frac{(2c)^{2c} - 6}{2(c-2)(2c - 2)^{2c-2}} \geq c.
\end{equation}

Let $d := \lfloor m/2 \rfloor$. Define, for $i\in \{x, \dots, y\}$, the set
$J(p_{i})$ as the set of indices $j\in \{1, \dots, 2d\}$ such that $i \in
\{\ell(z_{j}), r(z_{j})\}$ (let us emphasize that the latter set is not an
interval but just a $2$-element set). We say that $p_{i}$ is a {\em breakpoint} of $Q$ if
$J(p_{i})$ is not empty. (Thus $p_{x}$ is a breakpoint in particular.) It is a
consequence of \eqref{eq:endpoints1}, \eqref{eq:endpoints2}, and
\eqref{eq:endpoints3} that, if $|J(p_{i})| > 1$, then $J(p_{i}) = \{j, j+2\}$
for some $j\in \{1, \dots, 2d - 2\}$. In particular, the numbers in $J(p_{i})$
always have the same parity.

We color the vertices in $V(Q) \cup \{z_{1},\dots, z_{2d}\}$ in {\em black} or
{\em white} as follows. First, for every $j \in \{1, \dots, 2d\}$, color
$z_{j}$ black if $j$ is odd, white if $j$ is even. Next color every breakpoint
$p_{i}$ of $Q$ with the color corresponding to the parity of the numbers in
$J(p_{i})$ (namely, black for odd and white for even). Finally, color every
uncolored vertex of $Q$ with the color of the closest breakpoint of $Q$ in the
direction of $p_{x}$. See Figure~\ref{fig:coloring} for an illustration of the
coloring.

\begin{figure}
\centering
\includegraphics[width=0.65\textwidth]{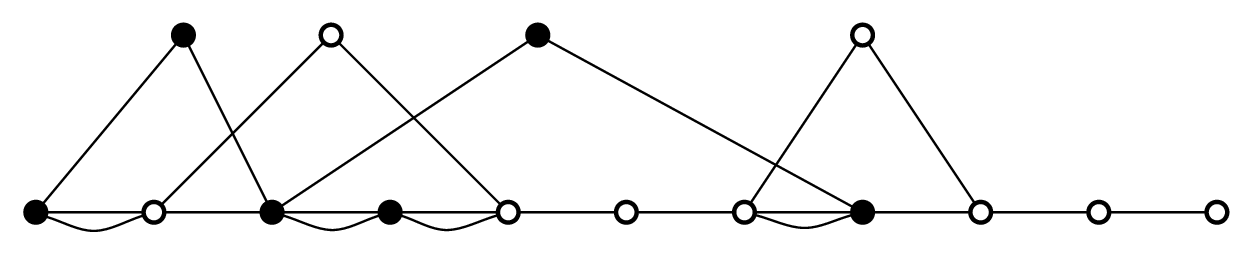}
\caption{Illustration of the coloring for $d=2$. The vertices in
$\{z_{1},\dots, z_{2d}\}$ are on top, the path $Q$ at the bottom, and each
breakpoint $p_{i}$ of $Q$ is linked to the vertices $z_{j}$ such that $j\in
J(p_{i})$. \label{fig:coloring}}
\end{figure}

Let $A$ and $B$ be the set of black and white vertices, respectively. By
construction, $p_{x} \in A$ and $p_{y} \in B$, and each of $A,B$ induces a
connected subgraph of $H'$. Moreover, there are $2d+1 \geq m \geq c$ edges of
$Q$ whose endpoints received distinct colors. It follows that $\{A, B\}$ is a
rooted $c$-pumpkin-model of $(H', Q)$, as desired.

We have shown that $(H, P)$ always has a rooted $c$-pumpkin-model or a bad
cutset. Moreover, it is easily seen from the proof given above that each of
these can be found in polynomial time. This concludes the proof of the lemma.
\end{proof}

We note that no effort has been made in order to optimize the constants in
Lemma~\ref{lem:hedgehog}.

\subsection{Small pumpkins in reduced graphs}
\label{sec:smallPumpkins}

Our goal is to prove that every $n$-vertex $c$-reduced graph $G$ has a
$c$-pumpkin-model of size $\mathcal{O}(\log n)$, where $c$ is a fixed constant.
We will use the following recent result by Fiorini {\it et al}.~\cite{FJTW10}
about the existence of small minors in {\sl simple} graphs with large average
degree.

\begin{theorem}[Fiorini {\it et al}.~\cite{FJTW10}]
\label{thm:smallminors} There is a function $h$ such that every $n$-vertex
simple graph $G$ with average degree at least $2^{t}$ contains a $K_t$-model
with at most $h(t) \cdot \log n$ vertices. Moreover, such a model can be
computed in polynomial time.
\end{theorem}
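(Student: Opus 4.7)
The plan is to prove this via a BFS-based iterative construction inside a high-minimum-degree subgraph. First I would pass to a subgraph $H$ of $G$ with minimum degree $\delta \geq 2^{t-1}$; such $H$ exists by the standard greedy argument that iteratively removes vertices of degree less than half the current average degree, an operation which cannot decrease the average degree. The entire $K_t$-model will be built inside $H$. The key quantitative tool is a ball-expansion estimate: because $\delta \geq 2^{t-1}$, a BFS from any vertex or connected subgraph of $H$ reaches any other prescribed vertex via a path of length $O(\log n / t)$, since the BFS layers grow at rate roughly $2^{t-1}$ per layer until they cover a constant fraction of the graph.

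Next I would build the $K_t$-minor iteratively. Maintain pairwise adjacent branch sets $S_1, \ldots, S_i$ after step $i$, each of size $O(\log n)$, forming a $K_i$-model. To extend to step $i+1$, pick a fresh vertex $v_{i+1}$ outside $\bigcup_{j \leq i} S_j$, and for each $j \leq i$ greedily find a shortest $v_{i+1}$--$S_j$ path of length $O(\log n / t)$ through currently unused vertices, marking these vertices used as we go. Setting $S_{i+1}$ to be the union of $\{v_{i+1}\}$ with these $i$ paths yields a connected set adjacent to every $S_j$, extending the model to a $K_{i+1}$-minor. Each new branch set has size $O(t \cdot \log n / t) = O(\log n)$, so the final $K_t$-model has total size $O(t \log n) = h(t) \cdot \log n$.

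The main obstacle is ensuring that the required short BFS paths can still be routed after earlier steps have consumed vertices: in principle the $O(t \log n)$ previously used vertices might disrupt the expansion that makes short paths possible. I expect the cleanest fix is to start with a slightly denser subgraph, say $\delta \geq c \cdot 2^t$ for a suitable constant $c$, so that even after removing a polylogarithmic number of vertices, a large portion of the graph still enjoys minimum degree $\Omega(2^{t-1})$; an amortized counting argument then shows that each BFS path of the prescribed length can indeed be routed through unused vertices. Carrying out this bookkeeping precisely is the technical heart of the argument; everything else reduces to elementary BFS manipulation.
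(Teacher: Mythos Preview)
First, a framing point: this paper does not prove Theorem~\ref{thm:smallminors}; it is quoted from Fiorini, Joret, Theis, and Wood~\cite{FJTW10} and used as a black box. So there is no ``paper's own proof'' to compare against here.

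That said, your proposal has a genuine gap. The step ``since the BFS layers grow at rate roughly $2^{t-1}$ per layer'' is false in general: high minimum degree does \emph{not} imply that BFS balls expand geometrically. Take $n/(d+1)$ disjoint copies of $K_{d+1}$ arranged in a row, with consecutive copies sharing a single vertex. This graph has minimum degree $d$, yet its diameter is $\Theta(n/d)$, not $O(\log n / \log d)$; BFS layers here have size $O(d)$, not $\Omega(d^r)$ after $r$ steps. In such a graph your iterative scheme fails outright: if the fresh vertex $v_{i+1}$ is picked in a clique far from the current branch sets, there is no short path to any $S_j$, and no amount of ``slightly denser subgraph'' or polylog bookkeeping fixes this, because the obstruction is global, not caused by the few vertices you have used up.

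What is actually needed is an \emph{expansion} property, not merely a minimum-degree bound. The proof in~\cite{FJTW10} does not try to route paths in an arbitrary high-degree subgraph; instead it finds, inside the graph, a small piece that is itself dense enough to guarantee the $K_t$-minor. One way to think about it: either some BFS ball of radius $O(\log n)$ already has average degree $\Omega(2^t)$ (and then you work inside that small ball), or contracting suitable pieces yields a strictly smaller minor that retains the average-degree hypothesis, and you recurse. The missing idea in your plan is precisely this localization/contraction step; without it the BFS argument cannot get off the ground.
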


Since a $K_{t}$-model in a graph directly gives a $c$-pumpkin-model of the same
size for $c = (\lfloor t/2 \rfloor)^{2}$, we have the following corollary from
Theorem~\ref{thm:smallminors}, which is central in the proof of
Lemma~\ref{lem:smallmodelsreducedgraphs}.

\begin{corollary}
\label{cor:smallminors} There is a function $h$ such that every $n$-vertex
simple graph $G$ with average degree at least $2^{2\sqrt{c} + 1}$ contains a
$c$-pumpkin-model with at most $h(c) \cdot \log n$ vertices. Moreover, such a
model can be computed in polynomial time.
\end{corollary}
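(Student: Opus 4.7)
The plan is to invoke Theorem~\ref{thm:smallminors} with an appropriately chosen $t = t(c)$, then read off a $c$-pumpkin-model from the resulting $K_t$-model, using the elementary observation alluded to in the remark just above the statement.

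First I would justify the reduction from $K_t$ to the $c$-pumpkin carefully. Given a $K_t$-model $\{S_v : v \in V(K_t)\}$ in $G$, fix any partition $V(K_t) = V_1 \sqcup V_2$ with $|V_1| = \lfloor t/2\rfloor$ and $|V_2| = \lceil t/2\rceil$, and set $A := \bigcup_{v \in V_1} S_v$ and $B := \bigcup_{v \in V_2} S_v$. Each $S_v$ is connected, and any two $S_u, S_{u'}$ with $u,u' \in V_1$ are joined by at least one edge in $G$ (this being guaranteed by the defining property of a $K_t$-model), so $G[A]$ is connected, and symmetrically so is $G[B]$. Furthermore, the number of $A$--$B$ edges is at least the number of pairs in $V_1 \times V_2$, namely $\lfloor t/2\rfloor \cdot \lceil t/2\rceil$. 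Hence, as soon as $\lfloor t/2\rfloor \cdot \lceil t/2\rceil \geq c$, the pair $\{A, B\}$ is a $c$-pumpkin-model of exactly the same size $\sum_v |S_v|$ as the original $K_t$-model.

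Next I would choose $t$ to be the least positive integer with $\lfloor t/2\rfloor \cdot \lceil t/2\rceil \geq c$; roughly $t \approx 2\sqrt{c}$, with the option of taking $t$ odd when that gives a sharper bound. A routine elementary check shows $2^t \leq 2^{2\sqrt{c}+1}$ for every $c \geq 1$, possibly up to an absolute multiplicative constant absorbed into the function $h$. Consequently, the hypothesis that $G$ has average degree at least $2^{2\sqrt{c}+1} \geq 2^t$ lets us apply Theorem~\ref{thm:smallminors} to obtain a $K_t$-model in $G$ on at most $h_0(t) \log n$ vertices, computable in polynomial time by the algorithmic strengthening of that theorem noted in the paragraph between Theorem~\ref{thm:smallminors} and the corollary. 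Combined with the conversion from the previous paragraph (which is clearly polynomial, since the partition of $V(K_t)$ is arbitrary and the sets $A, B$ are obtained by unioning the $S_v$), and setting $h(c) := h_0(t(c))$, this produces the desired $c$-pumpkin-model of size at most $h(c) \log n$ in polynomial time.

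There is essentially no obstacle here: the argument is precisely the informal remark preceding the corollary, unpacked and made polynomial-time explicit. The only minor subtlety is calibrating $t$ so that the threshold $2^t$ remains within $2^{2\sqrt{c}+1}$; allowing $t$ to be odd (so that $\lfloor t/2\rfloor \lceil t/2\rceil$ strictly exceeds $\lfloor t/2\rfloor^2$) handles all values of $c$ uniformly, and any small leftover slack can be swallowed into the unspecified function $h$.
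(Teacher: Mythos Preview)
Your proposal is correct and is precisely the argument the paper sketches in the sentence preceding the corollary (the paper gives no separate proof): split a $K_t$-model into two halves to obtain a $c$-pumpkin-model, and choose $t \approx 2\sqrt{c}$ so that Theorem~\ref{thm:smallminors} applies. Your use of $\lfloor t/2\rfloor\lceil t/2\rceil$ rather than the paper's $(\lfloor t/2\rfloor)^2$ is a harmless sharpening that in fact makes the threshold $2^{2\sqrt{c}+1}$ work for every integer $c$; the only quibble is that any slack in the degree threshold could not be ``absorbed into $h$'' as you suggest (it concerns the hypothesis, not the model size), but since $t \leq 2\sqrt{c}+1$ holds exactly for the minimal such $t$, no slack arises.
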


The next lemma states the existence of small $c$-pumpkin-models in a
$c$-reduced graph $G$. Its proof relies on Lemma~\ref{lem:hedgehog} on
hedgehogs and Corollary~\ref{cor:smallminors}. The proof can be briefly
summarized as follows. An hedgehog in $G$ which is large enough so that
Lemma~\ref{lem:hedgehog} can be applied to it, but at the same time not too
big, witnesses the existence of either a small $c$-pumpkin-model or a
$c$-outgrowth. Hence we may assume that no such hedgehog exists in $G$. The
latter fact is then used to either find directly a small $c$-pumpkin-model, or
a dense-enough minor that is not ``too far'' from $G$ in the sense that it is
obtained by contracting disjoint connected subgraphs of $G$ of bounded radius.
In the latter case, we apply Corollary~\ref{cor:smallminors} on the minor,
yielding a small $c$-pumpkin-model which we then lift back to $G$, incurring
only a constant-factor increase in its size.

\begin{lemma}
\label{lem:smallmodelsreducedgraphs} There is a function $f$ such that every
$n$-vertex $c$-reduced graph $G$ contains a $c$-pumpkin-model of size at most
$f(c)\cdot \log n$. Moreover, such a model can be computed in polynomial time.
\end{lemma}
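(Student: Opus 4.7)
I would split into two regimes based on the density of the underlying simple graph $G'$ of $G$. If some edge of $G$ has multiplicity at least $c$, then its two endpoints already form a $c$-pumpkin-model of size $2$, so I may assume $\mu(G) \leq c-1$. If furthermore the average degree of $G'$ is at least $2^{2\sqrt{c}+1}$, then Corollary~\ref{cor:smallminors} applied to $G'$ produces, in polynomial time, a $c$-pumpkin-model of $G$ of size at most $h(c)\log n$, which is already of the desired form.

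The substantive case is the \emph{sparse} regime, in which $|E(G')| \leq 2^{2\sqrt c}\, n$. Here the plan is to construct, inside $G$, a hedgehog $(H, P)$ of size at least $(4c)^{4c}$ whose total vertex count $|V(H)|$ is bounded by a constant depending only on $c$, and which is moreover \emph{internally closed} in $G$, meaning that every non-endpoint vertex of $P$ and every vertex of $S := V(H) \setminus V(P)$ has all its $G$-neighbors inside $V(H)$. Granted such a hedgehog, Lemma~\ref{lem:hedgehog} returns, in polynomial time, either a rooted $c$-pumpkin-model of $(H, P)$, which uses at most $|V(H)| = \mathcal{O}_c(1)$ vertices and is already a $c$-pumpkin-model of $G$, or a bad cutset $\{u,v\}$ of $(H, P)$ with an associated component $C$ of $H \setminus \{u,v\}$ that has $|C|\ge 2$ and avoids both endpoints of $P$. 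By internal closedness, $C$ is then also a component of $G \setminus \{u,v\}$, so $(C,u,v)$ is an outgrowth of $G$. Since $G$ is $c$-reduced and rule~\textbf{Z2} cannot be applied, $\Gamma(C,u,v)$ must contain a $c$-pumpkin-minor; as $|V(\Gamma(C,u,v))| \leq |V(H)|+2 = \mathcal{O}_c(1)$, such a minor can be found by brute force and yields the required small $c$-pumpkin-model of $G$. In all cases the size of the produced model is bounded by $f(c)\log n$ for a suitable function $f$, since for $c$-reduced graphs we have $n\geq 2$ and thus $\log n \geq 1$.

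The main obstacle is the construction of the internally closed hedgehog in the sparse regime. I would start by extracting a long induced path $P$ in $G'$ via a greedy BFS-type argument that exploits both sparsity and the fact that rule~\textbf{Z1} forbids trivial degree-$1$ appendages, ensuring that sufficient path length is available around every vertex. I would then iteratively enlarge $H$ by absorbing into it the $G$-neighborhoods of non-endpoint path vertices and of already-selected spikes, pruning at each step to maintain independence of $S$ and the ``two neighbors on $P$'' condition; the sparsity bound $|E(G')| \leq 2^{2\sqrt c}\, n$ can be used to charge edges inside $S$ and hence to show that this absorption process terminates after a number of steps depending only on $c$, with $|V(H)|$ remaining bounded in $c$. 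Verifying simultaneously that the resulting $(H,P)$ is large enough to feed Lemma~\ref{lem:hedgehog}, small enough in total to yield an $\mathcal{O}_c(1)$ bound, and internally closed so that bad cutsets transfer back to outgrowths of $G$, is the technical heart of the proof; once this is established all other steps—path extraction, hedgehog assembly, invocation of Lemma~\ref{lem:hedgehog}, and brute-force minor testing on the constant-sized $\Gamma(C,u,v)$—run in polynomial time, as required.
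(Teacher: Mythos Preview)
Your high-level split (multiplicity $\geq c$; dense simple graph via Corollary~\ref{cor:smallminors}; sparse case) is reasonable, but the sparse case plan has a genuine gap.

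The core of your argument is to produce, inside any sparse $c$-reduced $G$, an \emph{internally closed} hedgehog $(H,P)$ of size $\geq (4c)^{4c}$ with $|V(H)| = \mathcal{O}_c(1)$. But such an object need not exist. Observe first that if it did, you would not even need Lemma~\ref{lem:hedgehog}: removing the two endpoints of $P$ from $G$ already leaves $V(H)\setminus\{\text{endpoints}\}$ as a component of bounded size, i.e.\ an outgrowth of $G$, and $c$-reducedness would directly give a constant-size $c$-pumpkin-model. So your plan is really asserting that every sparse $c$-reduced graph has a constant-size outgrowth. This is false: take for instance a $3$-connected $3$-regular graph on $n$ vertices containing a $c$-pumpkin minor (any large enough one will do). Such a graph is $c$-reduced (\textbf{Z1} fails since the unique block has a $c$-pumpkin; \textbf{Z2} fails since $3$-connectivity forbids any $2$-cut), has average degree $3$, yet has no outgrowth whatsoever. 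Your absorption argument cannot terminate here: after you pick a path $P$, every internal vertex has a neighbor off $P$, that neighbor has a further neighbor off $H$, and so on; the global bound $|E(G')| \leq 2^{2\sqrt c}\,n$ is an average and gives you no local control over how fast this closure grows.

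The paper's proof does \emph{not} try to localize a single hedgehog. Instead it performs a global decomposition: it sets aside the high-degree vertices $W$, greedily peels off many vertex-disjoint induced multipaths $\mathcal{P}$ of length $r = (4c)^{4c}k$ from $G\setminus W$, and collects the leftover bounded-diameter, bounded-degree components $\mathcal{C}$. Hedgehogs appear only locally, to show that along each multipath in $\mathcal{P}$ a constant fraction of vertices must see $W$ or a ``good'' component of $\mathcal{C}$ (otherwise a stretch of ``black'' vertices yields a hedgehog whose bad cutset produces an outgrowth, contradicting $c$-reducedness). This density along the paths is then used to contract $W \cup \mathcal{P} \cup \mathcal{C}$ down to a minor $L$ with minimum degree $\geq 2^{2\sqrt c+1}$, to which Corollary~\ref{cor:smallminors} is finally applied. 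The point is that the small-minor corollary is invoked not on $G$ itself but on a carefully constructed bounded-radius contraction of $G$; the hedgehog lemma is only the tool that certifies this contraction has high minimum degree.
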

\begin{proof}
Let
\begin{align*}
k &:= c^{2} \left\lceil 2^{2\sqrt{c} + 1} \right\rceil \\
r &:= (2c)^{2c}k \\
b &:= k^{r}.
\end{align*}

We will prove the lemma with $f$ defined as
$$
f(c) := \max\{krb, 3rc\cdot h(c)\},
$$
where $h$ is the function in Corollary~\ref{cor:smallminors}. Throughout the
proof, a $c$-pumpkin-model of $G$ is said to be {\em small} if it has the
required size, that is, if it has at most $f(c)\log n$ vertices.

Recall that $\mu(G)$ denotes the maximum multiplicity of any edge in $G$.
The lemma trivially holds if $\mu(G) \geq c$, so we may assume $\mu(G) < c$.
Let $W$ be the (possibly empty) subset of vertices of $G$ having degree at
least $k$.

We build a collection $\mathcal{P}$ of vertex-disjoint induced subgraphs of
$G\setminus W$, each isomorphic to a multipath on $r$ vertices. Initially, we
let $\mathcal{P} := \emptyset$ and $G' := G \setminus W$. Then, as long as $G'$
has a connected component $C$ with diameter at least $r-1$, we do the
following: First, we consider two vertices at distance $r-1$ in $C$ and compute
a shortest path $Q$ between these two vertices. Note that the subgraph $P$ of
$G$ induced by $V(Q)$ is a multipath on $r$ vertices. Next, we add $P$ to
$\mathcal{P}$. Finally, we remove from $G'$ the $r$ vertices in $P$.

When the above procedure is finished, every connected component of $G'$ has
diameter less than $r-1$ and maximum degree less than $k$. Hence each such
component has bounded size: at most $k^{r} = b$ vertices. Let $\mathcal{C}$
denote the collection of connected components of $G'$.

An illustration of the sets $W$, $\mathcal{P}$, and $\mathcal{C}$ in the graph
$G$ is given in Figure~\ref{fig:pieces}.

\begin{figure}
\centering
\includegraphics[width=0.7\textwidth]{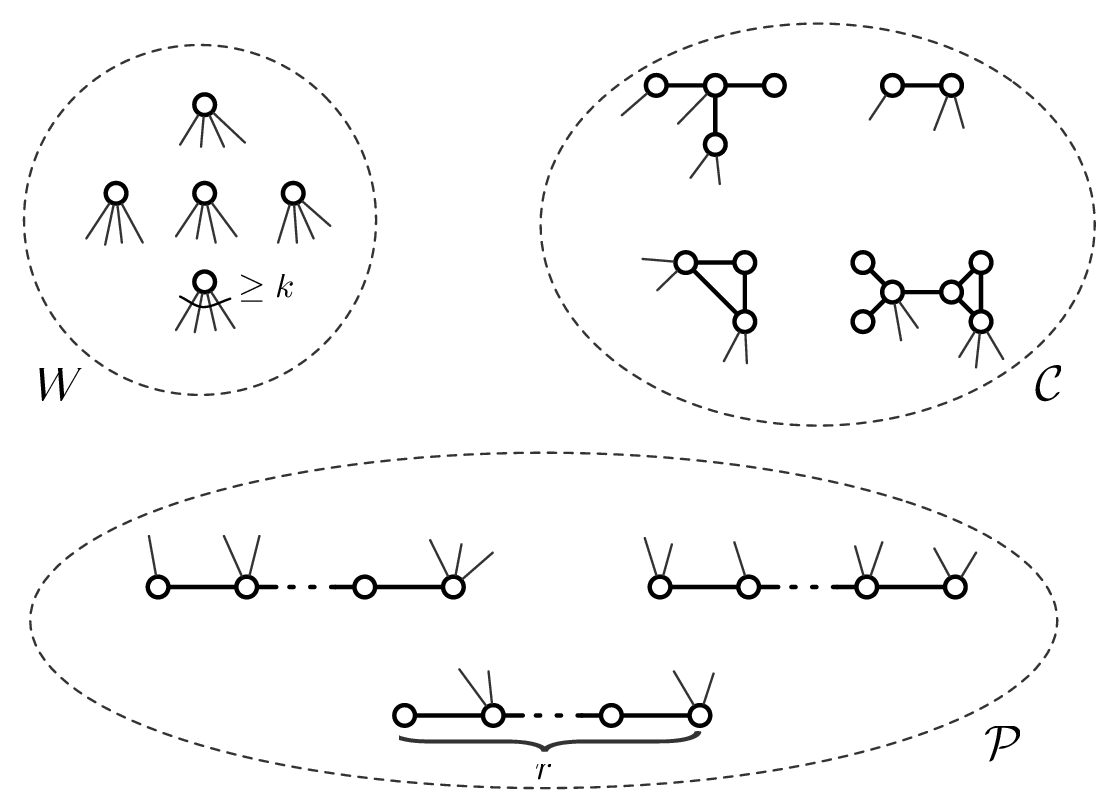}
\caption{The sets $W$, $\mathcal{P}$, and $\mathcal{C}$ in the graph
$G$.\label{fig:pieces}}
\end{figure}

If some subgraph $C \in \mathcal{C}$ contains a $c$-pumpkin-model, then the
size of the model is at most $|C| \leq b \leq f(c) \leq f(c)\log n$, and we are
done. (Note that $n\geq2$ since the model has at least two vertices.) Thus we
may assume  that no subgraph $C \in \mathcal{C}$ contains a $c$-pumpkin-minor.

Let $J$ be the graph obtained from $G$ by contracting each subgraph $C \in
\mathcal{C}$ into a single vertex $v_{C}$. Consider a subgraph $C \in
\mathcal{C}$. We cannot have $\deg^{*}_{J}(v_{C}) = 0$, because otherwise we
could have applied {\bf Z1} on any vertex of $C$ in $G$ (since $C$ has no
$c$-pumpkin-minor). If $\deg^{*}_{J}(v_{C}) = 1$, then let $v$ be an arbitrary
vertex of $C$, and let $w$ be the unique vertex in $V(G) \setminus V(C)$ having
a neighbor in $V(C)$ in the graph $G$. Since {\bf Z1} cannot be applied on $G$
with vertex $v$, there is a block $B$ of $G$ that includes $v$ and containing a
$c$-pumpkin-model. Since $V(B) \subseteq V(C) \cup \{w\}$, this model has size
at most $|B| \leq b + 1 \leq f(c)$, that is, we have found a small
$c$-pumpkin-model of $G$. Therefore, we may assume
\begin{equation}
\label{eq:J2} \deg^{*}_{J}(v_{C}) \geq 2
\end{equation}
for every $C \in \mathcal{C}$.

Let $K$ be the graph obtained from $J$ by contracting each subgraph $P \in
\mathcal{P}$ into a single vertex $v_{P}$. If two vertices of $K$ are linked by
at least $c$ parallel edges (note that these two vertices cannot correspond to two components of $\mathcal{C}$, as no such two components are adjacent), then we directly find a $c$-pumpkin-model in $G$
of size at most $\max\{b + 1, r + 1, 2r, b+r\} \leq f(c)$. Thus we may assume
\begin{equation}
\label{eq:muK} \mu(K) < c.
\end{equation}

We have $\deg^{*}_{K}(v_{C}) \geq 1$ for every $C \in \mathcal{C}$. Let us say
that a subgraph $C \in \mathcal{C}$ is {\em bad} if $\deg^{*}_{K}(v_{C}) = 1$,
and {\em good} otherwise.

We color the vertices of each multipath $P \in \mathcal{P}$ as follows: a
vertex $v\in V(P)$ is colored {\em black} if, in the graph $G$, all its
neighbors outside $P$ belong to bad subgraphs of $\mathcal{C}$; the vertex $v$
is colored {\em white} otherwise. (We remark that $v$ could possibly have no
neighbor outside $P$, in which case $v$ is colored black by our definition.)

\begin{claimN}
\label{claim:multipath} If some multipath $P\in \mathcal{P}$ contains
$(2c)^{2c}$ consecutive black vertices, then one can find a small
$c$-pumpkin-model in $G$.
\end{claimN}
\begin{proof}
Let $Q$ be the subgraph of $P$ induced by these $(2c)^{2c}$ black vertices. Let
$\mathcal{C'}$ be the subset of subgraphs $C \in \mathcal{C}$ such that $v_{C}$
is adjacent to an {\em internal} vertex of $Q$ in the graph $J$. Let $S :=
\{v_{C} : C \in \mathcal{C'}\}$. Since all vertices of $Q$ are colored black,
it follows that internal vertices of $Q$ are only adjacent  in $J$ to vertices
in $V(P) \cup S$, and that every subgraph $C \in \mathcal{C'}$ is bad.

Let $H$ be the graph obtained from $J[V(P) \cup S]$ by contracting every edge
of $P$ not included in $Q$. Since every subgraph $C \in \mathcal{C'}$ is bad,
it follows from \eqref{eq:J2} that, in $H$, every vertex in $S$ has at least
two neighbors in $Q$. Hence $(H, Q)$ is a hedgehog of size $|Q|=(2c)^{2c}$.

The graph $H$ is a minor of the subgraph $G^{*}$ of $G$ induced by
$$
V(P) \cup \bigcup\{V(C): C \in \mathcal{C'}\}.
$$

Since vertices of $P$ have degree at most $k$ in $G$ and $|C| \leq b$ for every
$C \in \mathcal{C'}$, we have
\begin{equation}
\label{eq:Gstar} |G^{*}| \leq r + r(k-1)b \leq f(c).
\end{equation}
We claim that $G^{*}$ contains a $c$-pumpkin-minor. By~\eqref{eq:Gstar}, such a
minor directly yields a small $c$-pumpkin-model of $G$. Arguing by
contradiction, assume $G^{*}$ has no $c$-pumpkin-minor. Thus $H$ has no
$c$-pumpkin-minor either.

Applying Lemma~\ref{lem:hedgehog} on $(H, Q)$, we obtain either a bad cutset
$X$ of $(H, Q)$ or a $c$-pumpkin-model of $H$. The latter case cannot happen
since $H$ has no $c$-pumpkin-minor, so assume the former holds and let $\{u,
v\}:= X$. Consider a connected component $T$ of $H \setminus X$ that avoids
both endpoints of $Q$. Let $Z$ be the subgraph of $G$ induced by $(V(T) \cap
V(Q)) \cup \bigcup\{V(C) : v_{C} \in V(T)\}$. It follows from the definition of
$H$ and our choice of $T$ that $Z$ is a connected component of $G \setminus X$
such that $u$ and $v$ are both adjacent to some vertex in $Z$. Since $G^{*}$
has no $c$-pumpkin-minor, it follows that $(Z, u, v)$ is a $c$-outgrowth of
$G$. But this implies that we could have applied {\bf Z2} on $G$ with the
$c$-outgrowth $(Z, u, v)$, a contradiction.
\end{proof}

By Claim~\ref{claim:multipath}, we may assume that, for every $P\in
\mathcal{P}$, the number $w(P)$ of white vertices in $P$ satisfies
\begin{equation}
\label{eq:wP} w(P) \geq \frac{r}{(2c)^{2c}}=k.
\end{equation}

Our aim now is to use \eqref{eq:wP} to define a minor of $K$ with large minimum
degree. First, for every good subgraph $C\in \mathcal{C}$, ``assign'' $v_{C}$
to an arbitrary neighbor of $v_{C}$ in $K$. Next, for every $w\in W$, contract
all edges $v_{C}w$ of $K$ into the vertex $w$ for all vertices $v_{C}$ assigned
to $w$. Similarly, for every $P\in \mathcal{P}$, contract all edges
$v_{C}v_{P}$ into the vertex $v_{P}$ for all vertices $v_{C}$ assigned to
$v_{P}$. Finally, remove the vertex $v_{C}$ for every bad subgraph $C\in
\mathcal{C}$. The resulting graph is denoted $L$.

For every vertex of $L$ there is a natural induced subgraph of $G$ that
corresponds to it, namely the subgraph defined by all the edges that were
contracted into $w$. Let $S_{w}$ and $S_{P}$ be the (induced) subgraph of $G$
that corresponds to the vertex $w\in W$ and $v_{P}$ ($P \in \mathcal{P}$) of
$L$, respectively. The subgraphs $S_{w}$ ($w\in W$) and $S_{P}$ ($P \in
\mathcal{P}$) of $G$ have diameter at most $2r$ and $3r$, respectively. Thus,
by Lemma~\ref{lem:models_diameter}, a $c$-pumpkin-model of $L$ of size $q$ can
be turned into one of $G$ of size at most $3rc\cdot q$. Hence, in order to
conclude the proof, it is enough to find a $c$-pumpkin-model in $L$ of size at
most $h(c) \log |L|$, since
$$
h(c) \log |L| \leq \frac{f(c)}{3rc} \log |L| \leq \frac{f(c)}{3rc} \log n.
$$
To do so, we will show that $L$ has large minimum degree.

First consider a vertex $w\in W$, and note that $\deg_{K}(w)=\deg_{G}(w) \geq k$. Let $a$ be the number of edges incident with
$w$ in $K$ such that the other endpoint is a vertex of the form $v_{C}$ ($C \in
\mathcal{C}$) that was assigned to $w$. By \eqref{eq:J2}, $w$ cannot be
adjacent in $K$ to a vertex $v_{C}$ corresponding to a bad subgraph $C \in
\mathcal{C}$. Thus, it follows from the definitions of good subgraphs and $L$
that
$$
\deg_{L}(w) \geq \frac{a}{\mu(K)} + (\deg_{K}(w) - a).
$$
(The $\frac{a}{\mu(K)}$ term above comes from the fact that each vertex $v_{C}$
that was assigned to $w$ contributes at least one to the degree of $w$ in $L$,
while in $K$ there were at most $\mu(K)$ edges between $v_{C}$ and $w$.) Using
\eqref{eq:muK} we obtain
\begin{equation}
\label{eq:degL-W} \deg_{L}(w) > \frac{a}{c} + (\deg_{K}(w) - a) \geq
\frac{\deg_{K}(w)}{c} \geq \frac{k}{c}.
\end{equation}

Now consider a multipath $P \in \mathcal{P}$. Let $a'$ be the number of edges
incident with $v_{P}$ in $K$ such that the other endpoint is a vertex of the
form $v_{C}$ ($C \in \mathcal{C}$) that was assigned to $v_{P}$. Let $b'$ be
the number of edges incident with $v_{P}$ in $K$ that are not of the previous
form and also not incident with a vertex $v_{C}$ such that $C$ is bad. By the
definition of white vertices, we have $a' + b' \geq w(P)$ (recall that $w(P)$
is the number of white vertices in $P$). Using~\eqref{eq:wP}, it follows
\begin{equation}
\label{eq:degL-P} \deg_{L}(v_{P}) \geq \frac{a'}{\mu(K)} + b' > \frac{a'}{c} +
b' \geq \frac{w(P)}{c} \geq \frac{k}{c}.
\end{equation}

It follows from \eqref{eq:degL-W} and \eqref{eq:degL-P} that $L$ has minimum
degree at least $k/c$. If $\mu(L) \geq c$, then $L$ has a $c$-pumpkin-model of
size two and we are trivially done, so let us assume $\mu(L) < c$. Then the
underlying simple graph $L'$ of $L$ has minimum degree at least $k / c^{2} \geq
2^{2\sqrt{c} + 1}$. Using Corollary~\ref{cor:smallminors} on $L'$, we find a
$c$-pumpkin-model in $L$ of the desired size, that is, of size at most $h(c)
\log |L|$.

Finally, we note that each step of the proof can easily be realized in
polynomial time. Therefore, a small $c$-pumpkin-model of $G$ can be found in
polynomial time.\end{proof}

\subsection{Algorithmic consequences}
\label{sec:1stAlgo}

Lemma~\ref{lem:smallmodelsreducedgraphs} can be used to obtain
$\mathcal{O}(\log n)$-approximation algorithms for both the \textsc{Minimum}
$c$-\textsc{Pumpkin}-\textsc{Hitting} and the \textsc{Maximum}
$c$-\textsc{Pumpkin}-\textsc{Packing} problems for every fixed $c \geq 1$, as
we now show.

\begin{algorithm}
\caption{\label{algo:packing_covering}A $\mathcal{O}(\log n)$-approximation
algorithm.}

\hspace{-1.4cm}
\begin{minipage}{1.08\textwidth}
\begin{itemize}
\item[] {\bf INPUT:} An arbitrary graph $G$
\item[] {\bf OUTPUT:} A $c$-packing $\mathcal{M}$ of $G$
and a $c$-hitting set $X$ of $G$ s.t.\ $|X| \leq (f(c) \log |G|) \cdot
|\mathcal{M}|$
\item[] $\mathcal{M} \leftarrow \emptyset$; $X \leftarrow \emptyset$
\item[] {\bf If} $|G| \leq 1$: Return $\mathcal{M}$, $X$ $\ \ \ ${\em /* $G$ cannot have a $c$-pumpkin-minor */}
\item[] {\bf Else}:
\begin{itemize}
\item[] {\bf If} $G$ is not $c$-reduced:
\begin{itemize}
\item[] Apply a reduction rule on $G$, giving a graph $H$
\item[] Call the algorithm on $H$, giving a packing $\mathcal{M}'$ and a $c$-hitting set $X'$ of $H$
\item[] Compute using Lemma~\ref{lem:both}(b)
a $c$-packing $\mathcal{M}$ of $G$ with $|\mathcal{M}| = |\mathcal{M}'|$
\item[] Compute using Lemma~\ref{lem:both}(a)
a $c$-hitting set $X$ of $G$ with $|X| \leq |X'|$
\item[] Return $\mathcal{M}$, $X$
\end{itemize}
\item[] {\bf Else}:
\begin{itemize}
\item[] Compute using Lemma~\ref{lem:smallmodelsreducedgraphs} a $c$-pumpkin-model $M=\{A,B\}$
of $G$ with
\item[] \hspace{0.6cm} $|A\cup B| \leq f(c)\log |G|$
\item[] $H \leftarrow G \setminus (A\cup B)$
\item[] Call the algorithm on $H$, giving a packing $\mathcal{M}'$ and a $c$-hitting set $X'$ of $H$
\item[] $\mathcal{M} \leftarrow  \mathcal{M}' \cup \{M\}$
\item[] $X \leftarrow X' \cup A \cup B$
\item[] Return $\mathcal{M}$, $X$
\end{itemize}
\end{itemize}
\end{itemize}
\end{minipage}
\end{algorithm}

\begin{theorem}
\label{thm:logn_packing_covering} Given an $n$-vertex graph $G$, a
$\mathcal{O}(\log n)$-approximation for both the \textsc{Minimum}
$c$-\textsc{Pumpkin}-\textsc{Hitting} and the \textsc{Maximum}
$c$-\textsc{Pumpkin}-\textsc{Packing} problems on $G$ can be computed in
polynomial time using Algorithm~\ref{algo:packing_covering}, for any fixed
integer $c \geq 1$.
\end{theorem}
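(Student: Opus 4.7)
The plan is to show by induction on $|G|$ that Algorithm~\ref{algo:packing_covering} always terminates in polynomial time and returns a valid $c$-packing $\mathcal{M}$ and a valid $c$-hitting set $X$ of $G$ satisfying the invariant
\[
|X| \ \leq \ (f(c) \log |G|) \cdot |\mathcal{M}|,
\]
where $f$ is the function from Lemma~\ref{lem:smallmodelsreducedgraphs}. Once this invariant is established, the two approximation guarantees follow immediately from the fact that $\nu_{c}(G) \leq \tau_{c}(G)$: since $|\mathcal{M}| \leq \nu_{c}(G) \leq \tau_{c}(G) \leq |X|$, we obtain
\[
\frac{|X|}{\tau_{c}(G)} \ \leq \ \frac{|X|}{|\mathcal{M}|} \ \leq \ f(c) \log |G| \qquad \text{and} \qquad \frac{\nu_{c}(G)}{|\mathcal{M}|} \ \leq \ \frac{|X|}{|\mathcal{M}|} \ \leq \ f(c) \log |G|,
\]
which gives the claimed $\mathcal{O}(\log n)$ ratios for \textsc{Min} $c$-\textsc{Cover} and \textsc{Max} $c$-\textsc{Packing} respectively.

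For the base case $|G| \leq 1$ the output $(\emptyset, \emptyset)$ is trivially valid and the invariant holds vacuously. For the inductive step, I would distinguish two cases according to the branch taken by the algorithm. If $G$ is not $c$-reduced, one of the rules \textbf{Z1} or \textbf{Z2} is applied, producing a graph $H$ with $|H| < |G|$; by the inductive hypothesis, the recursive call returns $(\mathcal{M}', X')$ valid for $H$ with $|X'| \leq f(c)\log|H|\cdot|\mathcal{M}'|$. Lemma~\ref{lem:both}(a) and~(b) then transform these into $(\mathcal{M}, X)$ valid for $G$ with $|\mathcal{M}| = |\mathcal{M}'|$ and $|X| \leq |X'|$, and since $\log|H|\leq\log|G|$ the invariant is preserved. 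If instead $G$ is $c$-reduced, Lemma~\ref{lem:smallmodelsreducedgraphs} yields in polynomial time a $c$-pumpkin-model $M = \{A,B\}$ with $|A\cup B| \leq f(c)\log|G|$; the recursive call on $H := G \setminus (A\cup B)$ returns $(\mathcal{M}', X')$, and then $\mathcal{M} := \mathcal{M}' \cup \{M\}$ is a $c$-packing of $G$ because $M$ is vertex-disjoint from every subgraph in $\mathcal{M}'$, while $X := X' \cup A \cup B$ is a $c$-hitting set of $G$ because any $c$-pumpkin-model of $G\setminus X$ would in particular avoid $A\cup B$ and hence would be a $c$-pumpkin-model of $H\setminus X'$, contradicting the validity of $X'$ for $H$.

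The invariant itself in the reduced-graph case follows from the straightforward estimate
\[
|X| \ = \ |X'| + |A\cup B| \ \leq \ f(c)\log|H|\cdot|\mathcal{M}'| + f(c)\log|G| \ \leq \ f(c)\log|G|\cdot(|\mathcal{M}'|+1) \ = \ f(c)\log|G|\cdot|\mathcal{M}|,
\]
using $|H| \leq |G|$. For termination and polynomial running time, I would note that each recursive call strictly decreases $|G|$ (either because a reduction rule strictly decreases the vertex count, or because at least the two vertices forming the $c$-pumpkin-model $M$ are removed in the else-branch since $c\geq 1$ forces $|A\cup B| \geq 2$), so the recursion depth is at most $n$, and each step performs only polynomially many operations thanks to the polynomial-time guarantees of Lemmas~\ref{lem:both} and~\ref{lem:smallmodelsreducedgraphs}.

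The only subtle point — and essentially the heart of the argument — is the validity claim in the $c$-reduced case, namely that $X' \cup (A\cup B)$ hits every $c$-pumpkin-model of $G$. This is not automatic because $c$-pumpkin-models of $G$ may involve vertices both inside and outside $A\cup B$; however, any such model restricted to $G\setminus (X' \cup A \cup B) = H\setminus X'$ either is already hit by $X'$ or would yield a $c$-pumpkin-model in $H\setminus X'$, contradicting the induction hypothesis. The rest of the argument is pure bookkeeping, and no additional combinatorial insight beyond the results already proved in Sections~\ref{sec:reduction_rules}--\ref{sec:smallPumpkins} is needed.
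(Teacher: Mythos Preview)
Your proof is correct and follows essentially the same approach as the paper's own proof: an induction on $|G|$ establishing the invariant $|X| \leq (f(c)\log |G|)\cdot |\mathcal{M}|$, with the same case split (not $c$-reduced vs.\ $c$-reduced) and the same arithmetic in each case. You add a little more detail than the paper on why $\mathcal{M}$ and $X$ are valid (the paper simply says this ``should be clear''), but otherwise the arguments are identical.
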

\begin{proof}Consider Algorithm~\ref{algo:packing_covering}, where $f$ is the
function in Lemma~\ref{lem:smallmodelsreducedgraphs}. We will show that this
algorithm provides a $\mathcal{O}(\log n)$-approximation for the two problems
under consideration.

It should be clear that the collection $\mathcal{M}$ returned by
Algorithm~\ref{algo:packing_covering} is a $c$-packing of $G$, and similarly
that the set $X$ is a $c$-hitting set of $G$. Thus it is enough to show that
they satisfy $|X| \leq (f(c)\log |G|) \cdot |\mathcal{M}|$ as claimed in the
description of the algorithm. Indeed, since $|\mathcal{M}| \leq \nu_{c}(G) \leq
\tau_{c}(G) \leq |X|$ and $f(c)$ is a constant depending only on $c$, this
implies that the approximation factor of Algorithm~\ref{algo:packing_covering}
is $\mathcal{O}(\log n)$ for both the \textsc{Minimum}
$c$-\textsc{Pumpkin}-\textsc{Hitting} and the \textsc{Maximum}
$c$-\textsc{Pumpkin}-\textsc{Packing} problems.

We prove the inequality $|X| \leq (f(c)\log |G|) \cdot |\mathcal{M}|$ by
induction on $|G|$. The inequality is obviously true in the base case, namely
when $|G| \leq 1$, so let us assume $|G| > 1$.

If $G$ is not $c$-reduced, then by induction the packing $\mathcal{M}'$ and the
$c$-hitting set $X'$ of $H$ considered by the algorithm satisfy $|X'| \leq
(f(c)\log |H|) \cdot |\mathcal{M}'|$, and we obtain
$$
|X| \leq |X'| \leq (f(c)\log |H|) \cdot |\mathcal{M}'| = (f(c)\log |H|) \cdot
|\mathcal{M}| \leq (f(c)\log |G|) \cdot |\mathcal{M}|
$$
as desired.

If $G$ is $c$-reduced, then by induction the packing $\mathcal{M}'$ and the
$c$-hitting set $X'$ of $H$ satisfy $|X'| \leq (f(c)\log |H|) \cdot
|\mathcal{M}'|$, and we have
\begin{align*}
|X| &= |X'| + |A\cup B| \\
&\leq (f(c)\log |H|) \cdot |\mathcal{M}'| + f(c)\log |G|  \\
&\leq (f(c)\log |G|) \cdot (|\mathcal{M}'| + 1) \\
&= (f(c)\log |G|) \cdot |\mathcal{M}|.
\end{align*}
Thus $|X| \leq (f(c)\log |G|) \cdot |\mathcal{M}|$ holds in all cases.

Finally, we observe that there are at most $n$ recursive calls during the whole
execution of the algorithm, which implies that its running time is polynomial
in $n$.\end{proof}

\section{Concluding remarks}
\label{sec:concl}

On the one hand, we provided an FPT algorithm running in time
$2^{\mathcal{O}(k)} \cdot n^{\mathcal{O}(1)}$ deciding, for any fixed $c \geq
1$, whether all $c$-pumpkin-models of a given graph can be hit by at most $k$
vertices. In our algorithms we used protrusions
but it may be possible to avoid it by further exploiting the
structure of the graphs during the iterative compression routine (for example,
a graph excluding the $3$-pumpkin is a forest of cacti). We did not focus on
optimizing the constants involved in our algorithms; it may be worth doing it,
as well as
 enumerating all solutions, in the same spirit
of~\cite{GGH+06} for \textsc{Feedback Vertex Set}.

It is natural to ask whether there exist faster algorithms for sparse graphs.
Also, it would be interesting to have lower bounds for the running time of
parameterized algorithms for this problem, in the spirit of those recently
provided in~\cite{LMS11b}. One could as well consider other containment
relations, like topological minor, induced minor, or contraction minor.

A more difficult problem seems to find single-exponential algorithms for the
problem of deleting at most $k$ vertices from a given graph so that the
resulting graph has tree-width bounded by some constant $c$. Note that the case
$c=0$ (resp. $c=1$) corresponds to {\sc $p$-Vertex Cover} (resp.  {\sc
$p$-Feedback Vertex Set}). Very recently, this problem has been solved for
$c=2$~\cite{KPP11}, the cases $c\geq 3$ being still open.
One could also consider the parameterized version of packing disjoint
$c$-pumpkin-models, as it has been done for $c=2$ in~\cite{BTY09}.

On the other hand, we provided a $\mathcal{O}(\log n)$-approximation for the
problems of packing the maximum number of vertex-disjoint $c$-pumpkin-models,
and hitting all $c$-pumpkin-models with the smallest number of vertices. It may
be possible that the hitting version admits a constant-factor approximation; so
far, such an algorithm is only known for $c \leq 3$.

As mentioned in the introduction, for the packing version there is a lower
bound of $\Omega(\log^{1/2 - \varepsilon} n)$ on the approximation ratio (under
reasonable complexity-theoretic assumptions). In fact, this lower bound applies
to both the vertex-disjoint packing and the edge-disjoint
packing~\cite{FrSa11}. For $c=2$, the problem of packing a maximum number of
edge-disjoint cycles admits a $\mathcal{O}(\sqrt{\log n})$-approximation,
whereas up to date $\mathcal{O}(\log n)$ is the best approximation ratio known
for vertex-disjoint cycles~\cite{KNS+07}. Therefore, one might expect to get
better approximation algorithms for packing edge-disjoint $c$-pumpkin-models.

Our algorithms use as subroutines some steps that are only of theoretical interest.
For instance, the FPT algorithm of Section~\ref{sec:algorithm} uses a protrusion replacement rule that involves huge constants, and in the whole paper we repeatedly use Courcelle's theorem~\cite{Courcelle92} to test for the existence of a $c$-pumpkin-model in graphs of bounded treewidth. Turning these steps into routines involving reasonable constants is worth investigating.

Finally, a class of graphs $\mathcal{H}$ has the \emph{Erd\H{o}s-P\'{o}sa
property} if there exists a function $f$ such that, for every integer $k$ and
every graph $G$, either $G$ contains $k$ vertex-disjoint subgraphs each
isomorphic to a graph in $\mathcal{H}$, or there is a set $S \subseteq V(G)$ of
at most $f(k)$ vertices such that $G \setminus S$ has no subgraph in
$\mathcal{H}$. Given a connected graph $H$, let $\mathcal{M}(H)$ be the class
of graphs that can be contracted to $H$. Robertson and Seymour~\cite{RoSe86}
proved that $\mathcal{M}(H)$ satisfies the Erd\H{o}s-P\'{o}sa property if and
only if $H$ is planar. Therefore, for every $c \geq 1$, the class of graphs
that can be contracted to the $c$-pumpkin satisfies the Erd\H{o}s-P\'{o}sa
property. But the best known function $f$ is super-exponential
(see~\cite{Die05}), so it would be interesting to find a better function for
this case. The only known lower bound on $f$ is $\Omega(k \log k)$ when $c \geq
2$, which follows from the $\Omega(k \log k)$ lower bound  given by Erd\H{o}s
and P\'{o}sa in their seminal paper~\cite{ErPo65} for $c=2$.

\subsection*{Acknowledgement.} We would like to thank the anonymous referees for helpful remarks that improved the presentation of the article.

\bibliographystyle{abbrv}
\bibliography{theta}

\end{document}